\newtheorem{lemma}{Lemma}
\newtheorem{definition}{Definition}
\newcommand{\supp}{\mathrm{supp}}
\renewcommand{\so}{\text{so}}
\newcommand{\su}{\text{su}}
\newcommand{\SO}{\text{SO}}
\newcommand{\SU}{\text{SU}}
\newcommand\descitem[1]{\item{\bfseries #1}\\}
\begin{document}
\title{Duality between string and computational order in symmetry-enriched topological phases}
\author{Paul Herringer}
\affiliation{Institut f\"ur Theoretische Physik, Leibniz Universit\"at Hannover, Appelstra{\ss}e 2, 30167 Hannover, Germany}
\affiliation{Department of Physics and Astronomy, University of British Columbia, 6224 Agricultural Road
Vancouver, BC V6T 1Z1, Canada}
\author{Vir B. Bulchandani}
\affiliation{Department of
Physics and Astronomy, Rice University, 6100 Main Street
Houston, TX 77005, USA}\thanks{on leave}
\affiliation{Department of Physics, National University of Singapore, Singapore 117542}
\author{Younes Javanmard}
\affiliation{Institut f\"ur Theoretische Physik, Leibniz Universit\"at Hannover, Appelstra{\ss}e 2, 30167 Hannover, Germany}
\author{David T. Stephen}
\thanks{Present Address: Quantinuum, 303 S Technology Ct, Broomfield, CO 80021, USA}
\affiliation{Department of Physics and Center for Theory of Quantum Matter, University of Colorado Boulder, Boulder, CO 80309, USA}
\affiliation{Department of Physics, California Institute of Technology, Pasadena, CA 91125, USA}
\author{Robert Raussendorf}
\affiliation{Institut f\"ur Theoretische Physik, Leibniz Universit\"at Hannover, Appelstra{\ss}e 2, 30167 Hannover, Germany}
\affiliation{Stewart Blusson Quantum Matter Institute, University of British Columbia, Vancouver, BC V6T 1Z4, Canada}
\date{Last updated \today}

\begin{abstract}
    We present the first examples of topological phases of matter with uniform power for measurement-based quantum computation. This is possible thanks to a new framework for analyzing the computational properties of phases of matter that is more general than previous constructions, which were limited to short-range entangled phases in one dimension. We show that ground states of the toric code in an anisotropic magnetic field yield a natural, albeit non-computationally-universal, application of our framework. We then present a new model with topological order whose ground states are universal resources for MBQC. Both topological models are enriched by subsystem symmetries, and these symmetries protect their computational power. Our framework greatly expands the range of physical models that can be analyzed from the computational perspective.
\end{abstract}

\maketitle

\section{Introduction}

The classification of quantum phases of matter separates the ground states of many-body quantum systems
into two broad classes: topologically ordered states with ``long-range entanglement’’ and topologically trivial states with ``short-range entanglement’’ \cite{Chen2010}. More concretely, topologically ordered states cannot be created from topologically trivial states by local unitary operations~\cite{Bravyi_2006}. The reverse is also true: topological order cannot be destroyed by local perturbations. 

In the presence of symmetry, even topologically trivial states can be separated into distinct phases of matter; these are called symmetry-protected topological (SPT) phases \cite{Pollmann2010,Chen2011,Schuch2011,Chen2013}. SPT order is similar to ``true’’ topological order in the sense that two states in different SPT phases cannot be related by local unitary operations as long as the symmetry is preserved. The symmetries that define SPT phases can be global, i.e. the symmetry consists of the same operator acting on every site, or subsystem symmetries for which a tensor product of local operators acts on a rigid subset of sites \cite{You2018,Devakul2019fractal}. An example of the latter would be a line-like symmetry that acts on horizontal or vertical lines of sites. Either global or subsystem symmetry can similarly be imposed upon topologically ordered states, giving rise to so-called symmetry-enriched topological (SET) order~\cite{Chen2017,Barkeshli2019,Stephen2020,Stephen2022,hsin2024subsystem}. 

Over the last decade it has been shown that an \emph{a priori} distinct property emerges in certain quantum phases of matter, namely computational power. This power is revealed by measurement-based quantum computation (MBQC), which is a scheme for implementing quantum computations via adaptive, site-local measurements on a many-body quantum state \cite{Raussendorf2003}. In MBQC, the allowed measurement operations are always the same, so that the quantum state being measured defines the underlying computational resource and determines which logical operations can be performed. More formally, for every pure quantum state, one can derive the the Lie group of quantum circuits that are realizable through MBQC. In some cases, this Lie group is uniform over a given phase of matter. For example, the SPT phase with $\mathbb{Z}_2 \times \mathbb{Z}_2$ symmetry around the 1D cluster state---the Haldane phase---uniformly gives rise to the group $SU(2)$ of quantum circuits \cite{Miyake2010,Else2012,Stephen2017,Raussendorf2023measurementbased}. There are even examples of phases of matter in 1D and 2D SPT systems with universal computational power \cite{Raussendorf2019,Stephen2019subsystem,Devakul2018universal,Daniel2020computational,Stephen2024}. Thus we can speak of computational order in addition to more ``physical'' notions of order for many-body quantum systems and similarly of ``computational phases of matter'', which are the main focus of this work.

Until now, such computational phases of matter had only been identified in systems with SPT order \cite{Miyake2010,Else2012,Stephen2017,Raussendorf2023measurementbased,Raussendorf2019,Stephen2019subsystem,Devakul2018universal,Daniel2020computational,Stephen2024}. One might thus conjecture that computational order is simply SPT order viewed through the lens of MBQC. However, as we show in this paper, this is not the case. Specifically, we present two examples of computational phases in long-range entangled topologically ordered systems with symmetry, i.e., SET order. Our first example is the familiar toric code perturbed by an anisotropic magnetic field. It is known that the toric code state has limited but non-trivial computational power in MBQC \cite{Bravyi2007}, and we show that this power extends to a symmetry-enriched phase around it. Our second example yields a computationally universal SET phase, demonstrating that computational universality is compatible with SET order as well as SPT order. Our findings thus raise the following question: {\em{What is the precise relationship between computational order and physical order?}}





\subsection{Overview of results}\label{sec:summary}

Our main result, Theorem~\ref{thm:power}, is a generalization of the existing 1D computational order formalism~\cite{Raussendorf2023measurementbased} to 2D and to states that may have topological order. We apply this theorem to show that both the toric code and our new stabilizer model exhibit non-trivial computational order, in a manner that is inextricably related to their unusual (symmetry-enriched) topological order.
For comparison, we first state a counterpart of Theorem~\ref{thm:power} for SPT order in 1D. It is the informal (as well as restricted and partial) version of Corollary~1 in \cite{Raussendorf2023measurementbased}. The main idea is that computational order is characterized by the same string order parameters $\sigma_k$ that detect SPT order in 1D.

\begin{restatable}[1D computational power, informal \cite{Raussendorf2023measurementbased}]{thm}{power} \label{thm: power informal 1d}
Consider a short-range entangled state $|\Phi\rangle$ of a spin-1/2 chain, symmetric under a group $G=(\mathbb{Z}_2)^m$. Furthermore, assume that the string order parameters $\sigma_k$ are non-zero for all sites $k$. Then, the group of realizable gates is $\mathcal{L}=\exp (-i \mathcal{A})$, where $\mathcal{A}$ is the Lie algebra generated by $\{T(g_k)\}$, with $T(G)$ a projective representation of the symmetry group $G$, and $g_k\in G$ a special group element associated with the $k$-th spin through the construction of the order parameter $\sigma_k$.
\end{restatable}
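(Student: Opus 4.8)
\emph{Proof proposal.} The plan is to pass to the matrix product state (MPS) description of $|\Phi\rangle$ and analyze the measurement protocol in the induced correlation space. Since $|\Phi\rangle$ is short-range entangled and $G$-symmetric, I would first write it as an injective MPS with tensors $A[s]$ (after blocking sites if necessary) and invoke the fundamental theorem of MPS: the on-site action $u_g$ of each $g\in G$ can be pulled through the tensor to act on the virtual bond space, i.e. $\sum_{s'}(u_g)_{ss'}A[s'] = e^{i\phi_g}\,T(g)\,A[s]\,T(g)^{-1}$ for some representation $T$ on the bond space. Because $G=(\mathbb{Z}_2)^m$ is abelian while the bond-space operators may pick up anticommutation signs, $T$ is in general a \emph{projective} representation, whose cohomology class is the SPT invariant. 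This is exactly the group $T(G)$ appearing in the statement.

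Second, I would set up the correlation-space picture of MBQC: measuring the physical spins from left to right and contracting the MPS tensors propagates the logical state through the bond space, with each single-site outcome $s$ applying the operator $A[s]$. The key claim is that the symmetry constraint forces each tensor, after renormalization to a fixed point within the phase, to factor into a fixed ``junk'' part and a logical part that acts as a byproduct operator drawn from $T(G)$ times a controllable rotation. The association $k\mapsto g_k$ is read off precisely here: site $k$ carries the symmetry-adapted observable whose correlator defines $\sigma_k$, and measuring in bases rotated away from that observable applies the one-parameter logical gate $\exp(-i\theta\,T(g_k))$, with byproduct $T(g_k)$ tracked as usual in MBQC.

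Third, I would connect the order parameter to gate realizability, and this is the step I expect to be the main obstacle. The point is that $\sigma_k$ should equal, up to normalization, the magnitude of the logical action generated at site $k$: when $\sigma_k\neq 0$ the wire is oblivious and the deterministic gate $\exp(-i\theta\,T(g_k))$ is realizable with nonzero fidelity, whereas $\sigma_k=0$ signals collapse of the logical channel. Making this rigorous requires identifying the string-order correlator with the fidelity of the correlation-space action, and controlling the error incurred by renormalizing a generic (non-fixed-point) state in the phase down to an exactly solvable form, so that the byproducts stay inside $T(G)$ and the rotations compose deterministically rather than only approximately. This calls for a careful renormalization-group estimate bounding the correlation length and the residual non-fixed-point corrections; under the hypothesis $\sigma_k\neq 0$ for all $k$, each generator $T(g_k)$ is then individually implementable.

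Finally, I would close the gate set into a group. Having one-parameter subgroups $\{\exp(-i\theta\,T(g_k))\}_{\theta}$ for every $k$, the set of all gates realizable by concatenating measurements is the connected Lie group generated by these subgroups. By the standard Lie-theoretic fact that group commutators $e^{isX}e^{itY}e^{-isX}e^{-itY}$ generate the bracket $[X,Y]$, this group is precisely $\mathcal{L}=\exp(-i\mathcal{A})$, with $\mathcal{A}$ the Lie algebra generated by $\{T(g_k)\}$, as claimed.
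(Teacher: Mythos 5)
Your route is genuinely different from the paper's. You work in the MPS correlation-space picture and propose to renormalize a generic state in the phase toward a fixed point where the tensor factorizes into junk and logical parts; the paper (following Ref.~\cite{Raussendorf2023measurementbased}, whose Corollary~1 this theorem restates) never passes to MPS or to a fixed point. Instead it works directly with operator expectation values: the decorrelation of $\langle T(g)T(g_k)R_k(g_k)\rangle$ into $\sigma_k\,\langle T(g)T(g_k)\rangle$ (Lemma~\ref{lemma:decorr} and Eq.~\eqref{eq:pull out R} in the 2D generalization) shows that measuring site $k$ in a rotated basis implements, exactly and for \emph{any} state in the phase, the logical CPTP map
\begin{equation}
\mathcal{M}_k(g_k,\theta_k)=\frac{1+\sigma_k(g_k)}{2}\bigl[e^{-i\frac{\theta_k}{2}T(g_k)}\bigr]+\frac{1-\sigma_k(g_k)}{2}\bigl[e^{+i\frac{\theta_k}{2}T(g_k)}\bigr].
\end{equation}
The unitary $e^{-i\frac{\theta_k}{2}T(g_k)}$ is then recovered only asymptotically, by splitting $\theta_k$ into $N$ rotations of angle $\theta_k/N$ with diamond-norm error $\epsilon=\frac{\theta_k^2}{N}\frac{1-\sigma_k^2}{\sigma_k^2}$ (Lemma~\ref{lemma:efficiency}).

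This is exactly where your step three has a genuine gap. You assert that $\sigma_k\neq 0$ makes ``the deterministic gate $\exp(-i\theta\,T(g_k))$ realizable with nonzero fidelity,'' but a unitary realized with nonzero fidelity is not a deterministic gate, and without the precise convex-combination structure above you cannot justify step four: concatenating many such imperfect gates to generate group commutators and close into $\mathcal{L}=\exp(-i\mathcal{A})$ requires the per-step error to be second order in the rotation angle, which is precisely what the CPTP-map form delivers and what ``nonzero fidelity'' alone does not. Your proposed fix --- an RG estimate flowing the state to an exactly solvable fixed-point form --- is not the mechanism used and would not obviously work: away from the fixed point the junk and logical sectors do not decouple exactly, the byproduct operators need not stay inside $T(G)$, and the hypothesis of the theorem is only that $\sigma_k\neq 0$ at every site, not that the state is close to a fixed point. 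The order-parameter value $\sigma_k$ is itself the quantitative replacement for any such RG control: it enters the logical channel linearly and the achievable gate accuracy explicitly. Your steps one, two, and four are otherwise sound, and your identification of $T(G)$ as the projective bond-space representation agrees with the paper's boundary representation $V_R(G)$.
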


For illustration, let us consider the example of the cluster phase~\cite{Raussendorf2023measurementbased}, with symmetry group $G=\mathbb{Z}_2\times \mathbb{Z}_2$. The symmetries and operators yielding the order parameters are depicted in Fig.~\ref{ClusterPhase}. In this case, the projective representation $T(G)$ is the maximally non-commutative one -- $T(00)=I$, $T(01)=X$, $T(10)=Z$, $T(11) = Y$. The special group elements associated with the individual sites are $(01)$ for all even and $(10)$ for all odd sites. The Lie algebra ${\cal{A}}$ is thus generated by the Pauli operators $Z$ and $X$, hence the resulting Lie group of realizable circuits is ${\cal{L}}=SU(2)$.

\begin{figure}
    \begin{center}
    \includegraphics[width=0.95\linewidth]{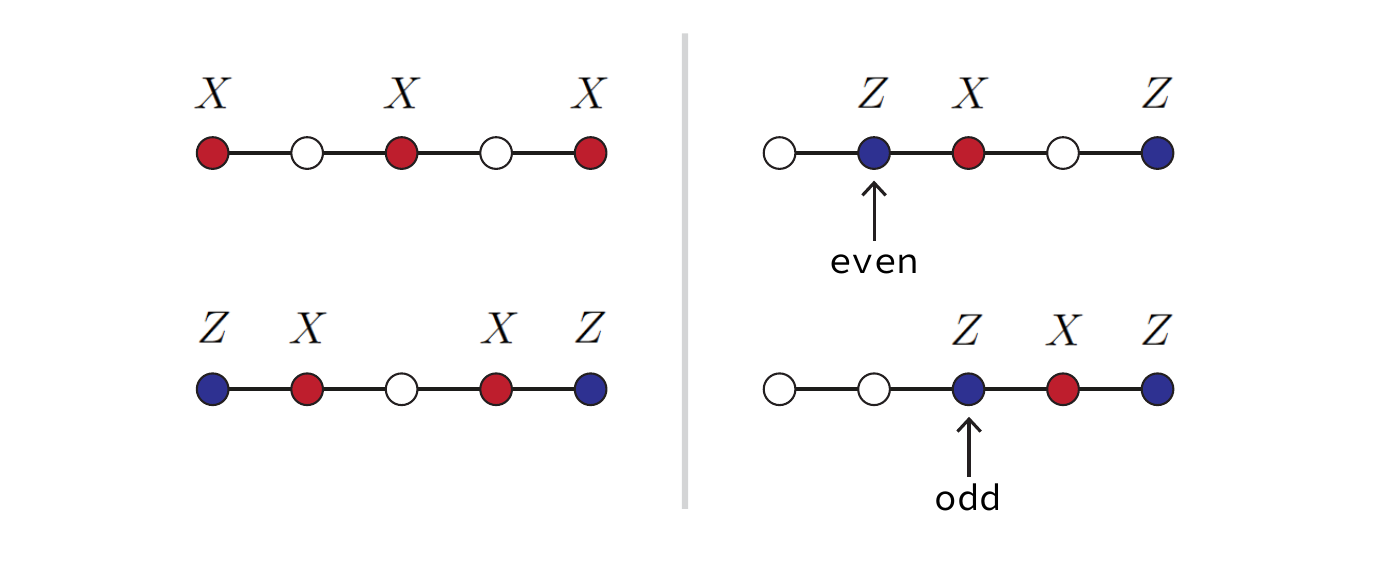}
    \caption{\label{ClusterPhase}Symmetry action and order parameters in the $\mathbb{Z}_2\times \mathbb{Z}_2$ cluster phase in 1D. \textbf{Left panel}: Action of the symmetry on the physical spins. Top: representation of the group element $(01)\in \mathbb{Z}_2\times \mathbb{Z}_2$, bottom: representation for the group element $(10)$. \textbf{Right panel}: Order parameters ending in even and odd sites, respectively.}
    \end{center}
\end{figure}

\subsubsection{Main theorem}

We extend Theorem \ref{thm: power informal 1d} as follows.
\begin{restatable}[Computational power, informal]{thm}{power} \label{thm:power informal 2d}
    Consider a state $\ket{\Phi}$ of a 2D spin-1/2 system, such that
    \begin{enumerate}[(a)]
        \item $\ket{\Phi}$ is symmetric under a group $G\cong\mathbb{Z}_2^m$. 
        \item $\ket{\Phi}$ \textbf{is related to a suitable stabilizer state by a local, finite-depth, symmetric unitary.} 
        \item The order parameters $\sigma_k$ are non-zero for every site $k$. 
    \end{enumerate}
    Then the Lie group of quantum gates realizable by MBQC with $\ket{\Phi}$ as a resource state is $\mathcal{L}=\exp{-i\mathcal{A}}$, where $\mathcal{A}$ is the Lie algebra generated by $\{T(g_k)\}$ with $T(G)$ a projective representation of $G$ and $g_k$ a special group element associated with site $k$ through the construction of the order parameter $\sigma_k$.
\end{restatable}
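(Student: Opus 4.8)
The plan is to prove the theorem in two stages: first reduce the general resource state $\ket{\Phi}$ to the ``suitable'' stabilizer state guaranteed by hypothesis (b), and then analyze MBQC on that stabilizer state through a correlation-space (tensor-network) picture organized along the one-dimensional wires that the (subsystem) symmetries single out. For the reduction, I would write $\ket{\Phi} = U\ket{\Phi_0}$ with $U$ a finite-depth symmetric circuit and $\ket{\Phi_0}$ the stabilizer state. Since any single-site measurement on $\ket{\Phi}$ pulls back through $U$ to a bounded-region, hence still local, measurement on $\ket{\Phi_0}$, and since $U$ commutes with the symmetry, the set of logical gates realizable from $\ket{\Phi}$ should coincide with that realizable from $\ket{\Phi_0}$ up to a fixed, local change of measurement bases. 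The content of this step is to verify that $U$ neither enlarges nor shrinks the gate set, which follows once one checks that the symmetric, finite-depth nature of $U$ preserves both the measurement locality and the symmetry data that enter the order parameters $\sigma_k$.

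For the stabilizer state I would build a matrix-product description of each symmetry wire and identify how $G$ acts on the associated correlation (bond) space. Following the mechanism behind Theorem~\ref{thm: power informal 1d}, the on-wire symmetry action should factorize into a projective representation $T(G)$ acting on the correlation space together with a trivial action on physical ``junk'' degrees of freedom. The crucial link is the identification of each string order parameter $\sigma_k$ with the correlation-space action of $T(g_k)$: non-vanishing $\sigma_k$ certifies that $T(g_k)$ acts as a full-rank unitary on the logical space, so that measuring site $k$ in a basis tilted by an angle $\theta$ implements the rotation $\exp(-i\theta\, T(g_k))$ on the processed logical state rather than a decohering contraction. Composing such rotations along a wire, and combining wires, should yield as realizable gates arbitrary products of the $\exp(-i\theta_k T(g_k))$, i.e. the connected group $\exp(-i\mathcal{A})$ generated by $\{T(g_k)\}$.

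Two opposite inclusions then remain. For the lower bound $\mathcal{L}\subseteq\{\text{realizable gates}\}$ I would use a buffering / oblivious-wire argument: because the $\sigma_k$ are uniformly bounded away from zero by hypothesis (c), measuring a long stretch of sites transports the correlation space faithfully, with fidelity converging to one, so every generator and hence every element of $\exp(-i\mathcal{A})$ can be synthesized by a suitable adaptive pattern. For the reverse inclusion $\{\text{realizable gates}\}\subseteq\mathcal{L}$ I would argue that the symmetry together with the local stabilizer structure constrains every byproduct operator and every outcome-dependent correction to lie in the group generated by $T(G)$, so that no gate outside $\exp(-i\mathcal{A})$ can appear on the logical space.

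The main obstacle will be reconciling hypothesis (b) with genuine topological order. Unlike the 1D SPT setting, the 2D state is long-range entangled and may carry topological ground-space degeneracy, so the wirewise correlation-space picture is not automatic: one must show that the finite-depth dressing $U$ does not leak logical information between wires or between topological sectors in a way that corrupts the computational channel, and that $\sigma_k$ remains the correct diagnostic of the bond-space symmetry action despite this long-range entanglement. Making the fidelity estimate in the buffering argument \emph{uniform} across the phase---so that a single nonzero lower bound on the $\sigma_k$ suffices to guarantee the full group $\exp(-i\mathcal{A})$---is the technically delicate point, and it is precisely where conditions (b) and (c) must be used in tandem.
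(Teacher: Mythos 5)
Your proposal diverges from the paper's actual proof in two ways that matter, and the first is a genuine gap. The reduction in your first stage --- pulling single-site measurements on $\ket{\Phi}$ back through $U$ to measurements on $\ket{\Phi_0}$ and concluding that the two states have the same gate set ``up to a fixed, local change of measurement bases'' --- does not work as stated: $U^\dagger P_k U$ is supported on a whole neighborhood of site $k$, so the pulled-back protocol consists of overlapping multi-site POVMs, which is a different computational model from single-site MBQC. The paper never performs this reduction. Instead, the stabilizer state enters only through a decorrelation lemma (Lemma~\ref{lemma:decorr}): if $\ket{\Phi}=W\ket{\Phi_0}$ with $W$ of finite spread, then $\expval{PQ}_\Phi=\expval{P}_\Phi\expval{Q}_\Phi$ for sufficiently separated $P,Q$. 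That factorization property is the \emph{only} thing hypothesis (b) is used for; all subsequent arguments are carried out directly on $\ket{\Phi}$ in the Heisenberg picture, with no tensor-network or correlation-space structure. This also disposes of what you correctly identify as ``the main obstacle'': the paper never needs a wirewise bond-space representation for a long-range entangled state, so the question of whether $U$ ``leaks logical information between wires or topological sectors'' never arises. The logical observables $T(g)$ are concrete operators on the physical lattice (bulk symmetry times the projective boundary representation $V_R(g)$), and the logical qubits are a maximal commuting subgroup $T(H)$, read out by measuring $V_R(h)$ site by site.

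A second, smaller inaccuracy: nonzero $\sigma_k$ does \emph{not} certify that a tilted measurement at site $k$ implements the unitary $\exp(-i\theta T(g_k))$. The paper shows (Lemma~\ref{lemma:maps}, using the decorrelation lemma to pull $\expval{R_{k}(g_{k})}=\sigma_k$ out of the relevant matrix element) that the measurement implements the CPTP map $\tfrac{1+\sigma_k}{2}[\exp(-i\tfrac{\theta}{2}T(g_k))]+\tfrac{1-\sigma_k}{2}[\exp(+i\tfrac{\theta}{2}T(g_k))]$, which is unitary only when $\sigma_k=1$. The unitary is recovered asymptotically by splitting $\theta$ into $N$ small steps, with diamond-norm error $\epsilon=\tfrac{\theta^2}{N}\tfrac{1-\sigma_k^2}{\sigma_k^2}$; this Trotterization is the rigorous version of your ``buffering'' argument and is where hypothesis (c) is actually consumed. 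Your upper-bound direction (realizable gates $\subseteq\mathcal{L}$) is asserted but not argued; in the paper it follows because the only symmetry-respecting measurement bases available are the rotated eigenbases of $\hat\alpha_k$ by $\hat\beta_k(g_k)$ for localizable $g_k\in\mathcal{G}_k$, so every implementable logical operation is generated by $T(\mathcal{G})$.
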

We emphasize condition (b) of Theorem \ref{thm:power informal 2d} because it allows us to address certain phases of matter that do not contain product states, thereby unlocking the study of computational phases of matter beyond invertible, short-range entangled phases. In particular, this includes phases of matter with topological order. 

Some comments on Theorem~\ref{thm:power informal 2d} are in order. First, the presence of a symmetry group $G$ means that, when we want to apply the Theorem to states with topological order, we necessarily enter the realm of symmetry-enriched topological (SET) phases of matter. Then, the Theorem says that certain SET phases of matter have uniform computational power, akin to the previous results for SPT phases \cite{Else2012,Stephen2017,Raussendorf2023measurementbased}.

Second, as in the case of short-range entangled phases, we find that the most interesting SET phases from the perspective of MBQC have subsystem symmetries \cite{Raussendorf2019,Stephen2019subsystem,Devakul2018universal,Daniel2020computational}, which are symmetries that act on rigid lower-dimensional submanifolds. For the examples we consider, these subsystem symmetries are line-like, and the order parameters $\sigma_k$ appearing in Theorem~\ref{thm:power informal 2d} are string-order parameters that descend from the line-like symmetries. Topological phases enriched by subsystem symmetries, dubbed subsystem SET (SSET) phases, have recently been studied in Refs.~\cite{Stephen2022,hsin2024subsystem}. While these phases of matter may appear to be relatively obscure, a simple realization of this physics arises upon perturbing the toric code by a bond-anisotropic magnetic field, as we now discuss.


\begin{figure}[t]
    \centering
   \includegraphics[width=\linewidth]{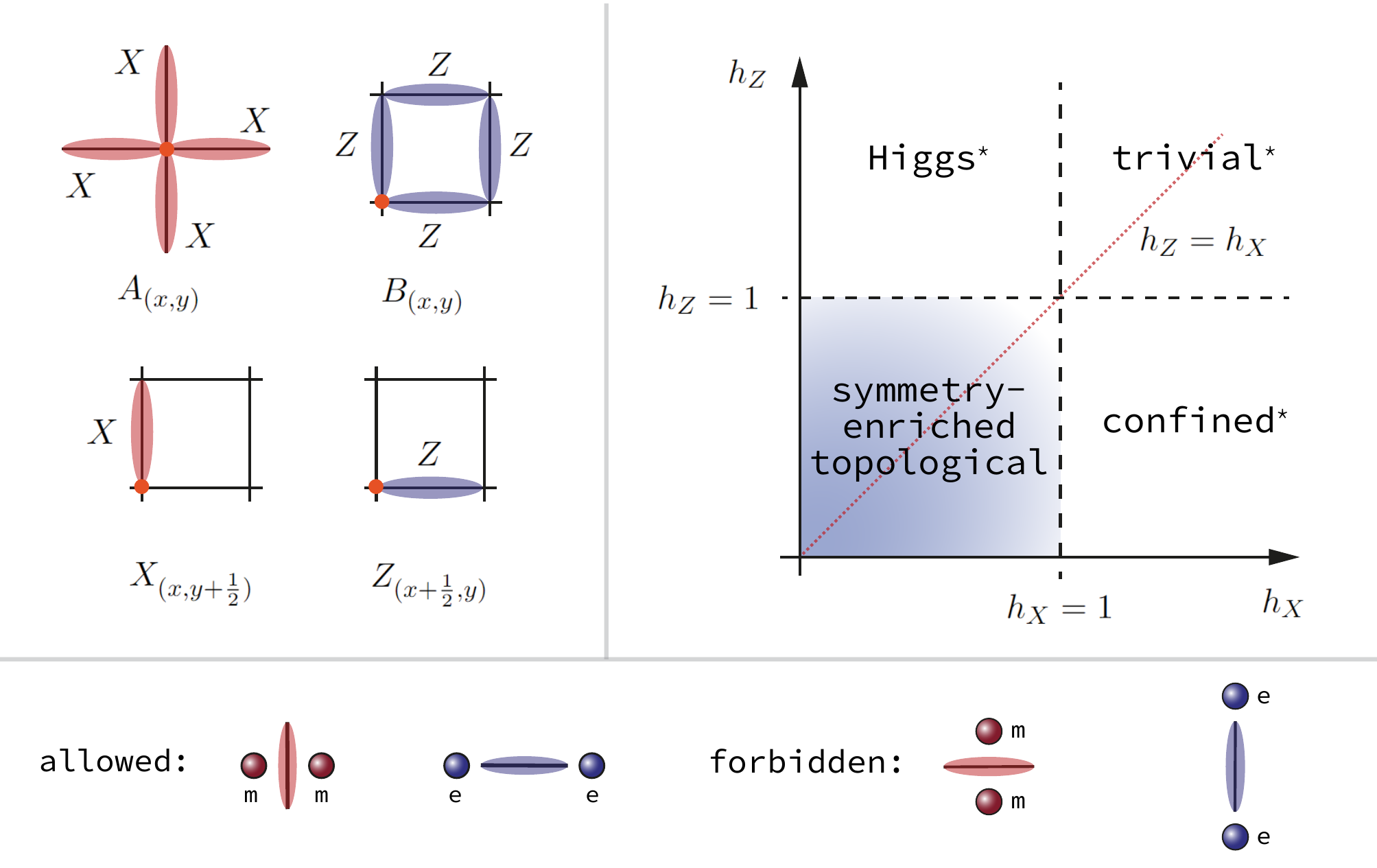}
    \caption{
    \textbf{Top left panel}: Our labelling conventions for the terms in the anisotropically perturbed toric code Hamiltonian Eq. \eqref{eq:TCH}. Orange dots indicate the lattice point $(x,y)$ and throughout this paper, red and blue lobes will correspond to $X$ and $Z$ Pauli operators respectively. \textbf{Bottom panel}: Unlike the usual magnetic-field perturbation of the toric code, the anisotropic magnetic-field perturbation Eq. \eqref{eq:dH} only creates perfectly horizontal pairs of electric or magnetic anyons. Thus only a subset of the usual pair-creation processes are allowed. \textbf{Top right panel:} The ground-state phase diagram of the resulting Hamiltonian. Asterisks indicate that charges are only confined in the vertical direction or equivalently the presence of subsystem symmetry as discussed in Section \ref{sec:sset}. We follow the standard terminology~\cite{TCPD} borrowed from $U(1)$ gauge theory, according to which the $e$-condensed phase $h_Z>1$ yields an $m$-confined or ``Higgs'' phase and the $m$-condensed phase $h_X>1$ yields an $e$-confined or simply ``confined'' phase. The line $h_Z = h_X$ is indicated in red.}
    \label{fig:anyons}
\end{figure}

\subsubsection{First example: toric code in an anisotropic field}
As an illustrative example of a model that falls under the assumptions of Theorem 1, we consider the toric code~\cite{kitaev2003fault} 
\begin{equation}
    H_0 = -\sum_{(x,y)\in\Lambda} A_{(x,y)} -\sum_{(x,y)\in \Lambda} B_{(x,y)}
\end{equation}
on a periodic square lattice $\Lambda$ of $L_x$ by $L_y$ sites, where $(x,y)$ labels the Cartesian coordinates of each vertex with $x=1,2,\ldots,L_x$ and $y=1,2,\ldots,L_y$ and $A_{(x,y)}$ and $B_{(x,y)}$ are the usual star and plaquette operators, see Fig.~\ref{fig:anyons}. We perturb $H_0$ by a bond-anisotropic magnetic field,
\begin{equation}
\label{eq:dH}
\delta H = - h_X \sum_{(x,y)\in\Lambda} X_{(x,y+\frac12)} - h_Z \sum_{(x,y)\in\Lambda} Z_{(x+\frac12,y)}
\end{equation}
where we use half-integer coordinates to label edges and $h_X, \, h_Z \geq 0$. This yields the perturbed toric code Hamiltonian
\begin{equation}
\label{eq:TCH}
H = H_0 + \delta H.
\end{equation}
The phase diagram of this model is shown in Fig.~\ref{fig:anyons}. In contrast to more standard magnetic field perturbations without bond anisotropy~\cite{TCPD}, the anisotropic form of the perturbation Eq. \eqref{eq:dH} guarantees that the parity of the number of anyons (excitations of $A_{(x,y)}$ or $B_{(x,y)}$) is conserved along every row, see Fig.~\ref{fig:anyons}. These conservation laws can be expressed as subsystem symmetries,
\begin{equation} \label{eq:subsystem_symmetries}
    U(g_y^e) = \prod_{x=1}^{L_x} A_{(x,y)},\qquad U(g_y^m) = \prod_{x=1}^{L_x} B_{(x,y)},
\end{equation}
which are equal to products of $X$ operators or products of $Z$ operators respectively along two parallel rows, see Fig.~\ref{fig:subsystem_symms}. The phase diagram in Fig.~\ref{fig:anyons} should be thought of as a deformation of the toric code that preserves these subsystem symmetries. In fact, we will find that the unperturbed toric code state lies in a non-trivial SSET phase of matter protected by these symmetries.

Using these symmetries, we can define string order parameters,
\begin{equation} \label{eq:sops_pbc}
\begin{aligned}
    S_{[a,b]}(g_y^e) &= \prod_{x=a}^b A_{(x,y)} =  X_{(a-\frac12,y)}\left(\prod_{x=a}^{b} X_{(x,y+\frac12)}X_{(x,y-\frac12)}\right)X_{(b+\frac12,y)},\\
    S_{[a,b]}(g_y^m) &=  \prod_{x=a}^b B_{(x,y)} = Z_{(a,y+\frac12)}\left(\prod_{x=a}^{b} Z_{(x+\frac12,y)}Z_{(x+\frac12,y+1)}\right)Z_{(b+1,y+\frac12)},
\end{aligned}
\end{equation}
as pictured in Fig.~\ref{fig:subsystem_symms}. All ground states of $H_0$ have perfect string order, $\langle S_{[a,b]}(g_y^{e/m})\rangle = 1$. When we perturb away from this point, the string order parameters are no longer perfect, but crucially their expectation values do not decay with their length $|a-b|$.
These string order parameters (or rather their appropriate generalizations for open boundary conditions) are precisely the computational order parameters that appear in Theorem~\ref{thm:power informal 2d}. By mapping Eq.~\ref{eq:TCH} to a stack of 1D transverse-field Ising models (see e.g. Refs. \cite{WenPlaquette} and \cite{wachtel2019confinement} for related results) we prove that these order parameters are non-zero everywhere in the symmetry-enriched topological phase $h_X,\,h_Z<1$, so that every ground state in this phase is an MBQC resource for the same family of computations described by Theorem~\ref{thm:power informal 2d}. This family of computations turns out to be non-universal since the Lie group of gates that can be implemented in MBQC maps onto classically simulable free-fermionic circuits. This shortcoming is remedied by our second example.

\begin{figure}
    \centering
    \includegraphics[width=\linewidth]{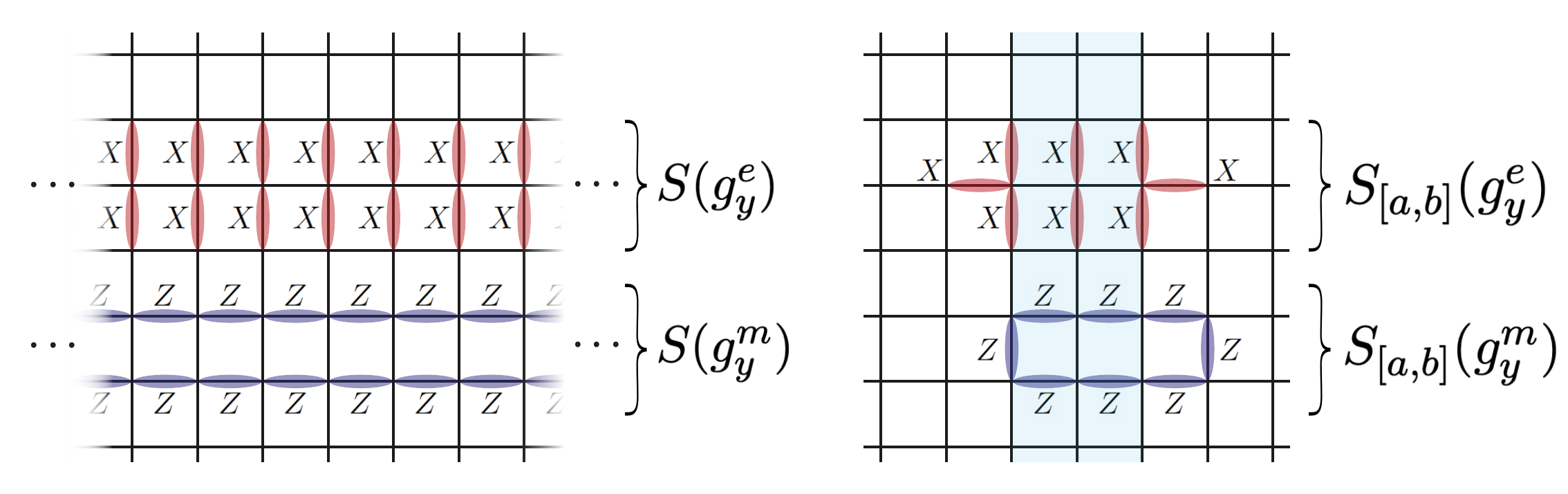}
    \caption{
    \textbf{Left}: The subsystem symmetries of the toric code in an anisotropic magnetic field. \textbf{Right}: The corresponding string order parameters. The highlighted region corresponds to the interval $x\in[a,b]$.}
    \label{fig:subsystem_symms}
\end{figure}

\subsubsection{Second example: a computationally universal topological phase}
We propose a new stabilizer model that yields universal resource states for MBQC,
\begin{equation}
    H_{*} = -\sum_x\sum_{y\text{ even}} C_{x,y} -\sum_x\sum_{y\text{ odd}} D_{x,y}
\end{equation}
where
\begin{equation}
    C_{x,y} = X_{x-1,y}X_{x,y}X_{x+1,y}X_{x,y-1}X_{x,y+1}, \quad D_{x,y} = Z_{x-1,y}Z_{x,y}Z_{x+1,y}Z_{x,y-1}Z_{x,y+1}.
\end{equation}
This model again has topological order along with a set of subsystem symmetries and corresponding order parameters that satisfy the conditions of Theorem.~\ref{thm:power informal 2d}, and we find that the corresponding Lie group of quantum gates in universal. This model and its symmetries therefore define a computationally universal SSET phase of matter. Interestingly, we also show that this model has the same topological order and symmetry fractionalization as a stack of two toric codes, which do not enable universal MBQC. This apparent tension is resolved by the fact that the two models have differently structured subsystem symmetries, highlighting the importance of the microscopic symmetry representation for MBQC.

\subsection{Structure of paper}

The rest of the paper is structured as follows. In Section \ref{sec:comp order} we discuss our MBQC scheme for 2D systems of qubits and formally state Theorem \ref{thm:power}. Section \ref{sec:tc example} is devoted to the toric code example. In Section \ref{sec:tc comp a} we apply Theorem~\ref{thm:power} to the toric code model with an anisotropic magnetic field and demonstrate that this model has computational order. Then we discuss the topological characteristics and subsystem symmetries of the model in Section \ref{sec:tc topo}. We conclude our study of the toric code example by relating computational order and topological (string) order in Section \ref{sec:tc relations}.

In Section \ref{sec:xz star} we introduce the $XZ$-star model, discuss its properties from the perspective of topological order and subsystem symmetries, and prove that its ground states are computationally universal for MBQC. We conclude with a general discussion of our results in Section \ref{sec:disc}.

\section{Computational order}
\label{sec:comp order}
In this section we characterize the computational power of 2D systems of qubits with symmetries. Specifically, we consider states that are symmetric under a group $G \cong \mathbb{Z}_2^k$, and computational power in the context of MBQC. MBQC leverages single-qubit measurements on a many-body quantum state to perform computations; the computational power of a state refers to the number of logical qubits available for computation and the Lie group of logical gates that can be implemented via measurement. 

Here we show that for symmetric quantum states these computational properties are characterized by a set of computational order parameters (COPs).
Our order parameters and MBQC scheme are a generalization of the scheme defined in~\cite{Raussendorf2023measurementbased}, which uses 1D string order parameters and applies to short-range entangled states.
The technical reason for requiring short-range entangled states is that the proofs require expectation values of local observables to decorrelate, $\expval{PQ} = \expval{P}\expval{Q}$, when their supports are separated by a finite distance. 

In Lemma \ref{lemma:decorr} we prove a less restrictive condition, namely that decorrelation also occurs if the state in question is related to an appropriate stabilizer state by a local, finite-depth unitary circuit. Notably, the allowed stabilizer states include states with topological order, such as our toric code model. After the statement and proof of Lemma \ref{lemma:decorr}, we define the computational order parameters and specify how they characterize computational power for MBQC with Theorem \ref{thm:power}.

\subsection{Decorrelation} \label{sec:decorr}

\begin{figure}
    \centering
    \includegraphics[width=0.6\linewidth]{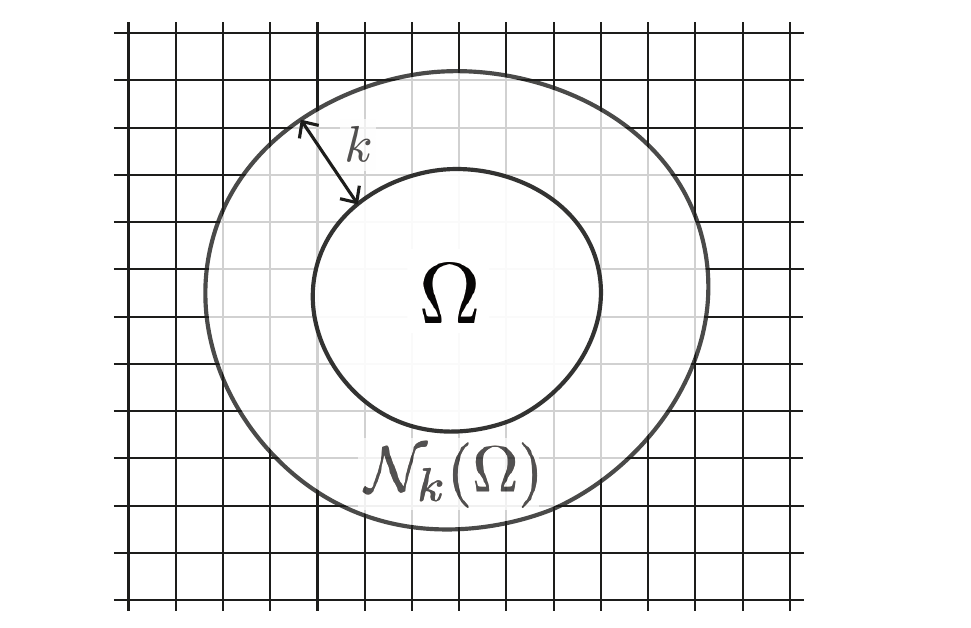}
    \caption{Expanding a region $\Omega$ of the lattice by a ribbon of width $k$ to obtain the $k$-neighborhood $\mathcal{N}_k(\Omega)$.}
    \label{fig:Nk}
\end{figure}

Consider a region $\Omega$ of a 2D lattice of qubits. We require a way to expand $\Omega$ by a fixed amount, so we introduce the $k$-neighborhood of $\Omega$, which is $\Omega$ plus a ribbon of width $k$ around its perimeter (Fig. \ref{fig:Nk}). Formally, we say that the $k$-neighborhood of $\Omega$ is the set of all points $(x,y)$ such that a sphere $\mathcal{S}_k(x,y)$ of radius $k$ centered at $(x,y)$ touches $\Omega$:
\begin{equation}
    \mathcal{N}_k(\Omega) = \{(x, y): \mathcal{S}_k(x, y) \cap \Omega \neq \varnothing \}.
\end{equation}
For an operator $P$ we use the shorthand $\mathcal{N}_k(P)$ to denote $\mathcal{N}_k(\supp(P))$.
\begin{definition}
    Let $W$ a quantum circuit acting on a 2D lattice of spins and $\Omega$ a region of the lattice. The spread $\Delta(W)$ is the smallest non-negative integer such that for any choice of $\Omega$ and any linear operator $P$ whose support is contained within $\Omega$,
    \begin{equation}
        \supp(W^\dagger P W) \subset \mathcal{N}_\Delta(\Omega).
    \end{equation}
\end{definition}
\noindent In other words, if $W$ has spread $\Delta$ then $\supp(W^\dagger P W)$ lies within $\supp(P)$ plus a ribbon of width $\Delta$.

Let $S_{\Phi_0}$ be the stabilizer group of $\ket{\Phi_0}$. Given a region $\Omega$, we can write a generating set for $S_{\Phi_0}$ as follows:
\begin{equation}
    S_{\Phi_0} = \langle S(\Omega), \bar{S}(\Omega), \partial S(\Omega) \rangle, 
\end{equation}
where $S(\Omega)$ are supported within $\Omega$, $\bar{S}(\Omega)$ are supported within the complement of $\Omega$, and $\partial S(\Omega)$ cross the boundary of $\Omega$.

\begin{lemma} \label{lemma:decorr}
    Consider a state $\ket{\Phi}$ created from a stabilizer state $\ket{\Phi_0}$ by a quantum circuit $W$ with spread $\Delta$. Let $P$ and $Q$ two linear operators and suppose we can choose $\partial S(\mathcal{N}_\Delta(P))$ such that none of its elements overlap with the $\Delta$-neighborhood of $Q$: $\supp(s) \cap \mathcal{N}_\Delta(Q) = \varnothing$ for all $s\in \partial S(\mathcal{N}_\Delta(P)).$ Then 
    \begin{equation}
        \bra{\Phi}PQ\ket{\Phi} = \bra{\Phi}P\ket{\Phi}\bra{\Phi}Q\ket{\Phi}.
    \end{equation}
\end{lemma}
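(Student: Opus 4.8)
The plan is to push $P$ and $Q$ through the circuit $W$ and reduce the whole statement to a fact about Pauli operators in the stabilizer state $\ket{\Phi_0}$. Writing $\tilde P = W^\dagger P W$ and $\tilde Q = W^\dagger Q W$, the left-hand side becomes $\bra{\Phi_0}\tilde P\tilde Q\ket{\Phi_0}$, and the definition of spread gives $\supp(\tilde P)\subseteq\mathcal N_\Delta(P)$ and $\supp(\tilde Q)\subseteq\mathcal N_\Delta(Q)$. Taking the two neighborhoods to be disjoint (the operators are separated; I return to this point below), it suffices to prove decorrelation for operators localized on $\Omega:=\mathcal N_\Delta(P)$ and $\Omega_Q:=\mathcal N_\Delta(Q)$. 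Expanding $\tilde P$ and $\tilde Q$ in the Pauli basis of their supports and using bilinearity, the claim reduces to showing, for every Pauli $R$ supported in $\Omega$ and every Pauli $R'$ supported in $\Omega_Q$, that $\bra{\Phi_0}RR'\ket{\Phi_0}=\bra{\Phi_0}R\ket{\Phi_0}\bra{\Phi_0}R'\ket{\Phi_0}$.

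Here I would invoke two standard facts about a \emph{pure} stabilizer state: (i) $\bra{\Phi_0}T\ket{\Phi_0}\in\{0,\pm1\}$ for any Pauli $T$, and it is nonzero exactly when $\pm T\in S_{\Phi_0}$; and (ii) because $S_{\Phi_0}$ has full rank, a Pauli lies in $\pm S_{\Phi_0}$ if and only if it commutes with every element of $S_{\Phi_0}$ (the centralizer of $S_{\Phi_0}$ is $\pm S_{\Phi_0}$). By (i), factorization can fail only when $\pm RR'\in S_{\Phi_0}$ while not both of $R,R'$ lie in $\pm S_{\Phi_0}$; the reverse implication is automatic since $S_{\Phi_0}$ is a group and the signs compose. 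Thus the lemma reduces to the key claim: $\pm RR'\in S_{\Phi_0}$ forces $\pm R\in S_{\Phi_0}$ and $\pm R'\in S_{\Phi_0}$.

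To prove the key claim I would use criterion (ii) together with the region-adapted generating set $S_{\Phi_0}=\langle S(\Omega),\bar S(\Omega),\partial S(\Omega)\rangle$, transferring commutation relations from the product $RR'$ to its factors. Since $RR'\in S_{\Phi_0}$, it commutes with every generator. Now classify: $R$ commutes with all of $\bar S(\Omega)$ by disjoint support, while $R'$ commutes with all of $S(\Omega)$ by disjoint support and, crucially, with all of $\partial S(\Omega)$ — this is \emph{exactly} the hypothesis that the boundary generators avoid $\Omega_Q\supseteq\supp(R')$. Multiplicativity of the commutator sign, $\mathrm{comm}(RR',t)=\mathrm{comm}(R,t)\,\mathrm{comm}(R',t)$, then propagates these: for $t\in S(\Omega)\cup\partial S(\Omega)$ we get $\mathrm{comm}(R,t)=\mathrm{comm}(RR',t)=+1$, so $R$ commutes with the interior and boundary generators as well as with $\bar S(\Omega)$, hence with all of $S_{\Phi_0}$, giving $\pm R\in S_{\Phi_0}$ by (ii). Symmetrically, for $t\in\bar S(\Omega)$ we get $\mathrm{comm}(R',t)=+1$, so $R'$ commutes with every generator and $\pm R'\in S_{\Phi_0}$. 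This establishes the claim and hence the lemma.

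The main obstacle is the key claim, and the entire argument hinges on the single step where the hypothesis enters: that $R'$ commutes with $\partial S(\Omega)$. This is precisely where the possibility of \emph{topological order} — long stabilizers that could otherwise link the two regions — is controlled, so that the proof extends beyond short-range-entangled states. A sanity check confirms the hypothesis is the exact condition needed: for a three-qubit $X$-basis GHZ state the connecting stabilizer $X_1X_3$ makes the correlation of $X_1$ and $X_3$ fail to factorize, and correspondingly no choice of $\partial S(\{1\})$ avoids $\{3\}$. A secondary point to handle with care is the disjointness of the two neighborhoods, which I use when asserting that $R'$ commutes with the interior generators $S(\Omega)$; in the intended application the operators are well-separated, so this holds.
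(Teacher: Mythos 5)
Your proof is correct, but it takes a genuinely different route from the paper's. The paper argues at the operator level: since $\Pi[S_{\Phi_0}]\propto\dyad{\Phi_0}{\Phi_0}$, it suffices to insert the full stabilizer projector between $W^\dagger PW$ and $W^\dagger QW$, and it does so by commuting $\Pi[\bar S(\mathcal N_\Delta(P))]$ past $W^\dagger PW$ (disjoint supports) and $\Pi[S(\mathcal N_\Delta(P)),\partial S(\mathcal N_\Delta(P))]$ past $W^\dagger QW$ (disjoint supports for $S$, the lemma's hypothesis for $\partial S$); the resulting rank-one projector in the middle factorizes the expectation value in two lines. You instead expand $W^\dagger PW$ and $W^\dagger QW$ in the Pauli basis and reduce everything to the centralizer structure of a full-rank stabilizer group, isolating the one failure mode (a Pauli $RR'$ in $\pm S_{\Phi_0}$ whose factors are not separately in $\pm S_{\Phi_0}$) and showing the hypothesis on $\partial S(\mathcal N_\Delta(P))$ excludes exactly that. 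Both proofs invoke the hypothesis at the same point and both use the region-adapted generating set; the paper's version is shorter, while yours makes the mechanism more transparent — it identifies precisely which stabilizer elements could spoil factorization, as your GHZ sanity check illustrates. One point in your favor: you explicitly flag that the argument also needs $\mathcal N_\Delta(P)\cap\mathcal N_\Delta(Q)=\varnothing$ (so that $R'$ commutes with $S(\mathcal N_\Delta(P))$); the paper uses the same fact silently when it pulls $\Pi[S(\mathcal N_\Delta(P))]$ past $W^\dagger QW$, so this is a shared, harmless implicit assumption rather than a gap in your argument.
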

\begin{proof}
    Letting $\Pi[S]$ be the projectors onto a set of stabilizer generators, we can write 
    \begin{equation}
        \ket{\Phi_0} \propto \Pi[S(\Omega), \bar{S}(\Omega), \partial S(\Omega)]\ket{\Phi_0}. 
    \end{equation}
    Now consider $W^\dagger P W$, which has support contained in $\mathcal{N}_\Delta(P)$. We can pull a copy of $\Pi[\bar{S}(\mathcal{N}_\Delta(P)]$ past $W^\dagger P W$ because their supports do not overlap:
    \begin{equation} \label{eq:stabs past A}
        (W^\dagger P W) \Pi[S_{\Phi_0}] = \Pi[\bar{S}(\mathcal{N}_\Delta(P)] (W^\dagger A W) \Pi[S_{\Phi_0}]. 
    \end{equation}
    With the assumption of the lemma, we also have
    \begin{equation} \label{eq:stabs past B}
        (W^\dagger Q W) \Pi[S_{\Phi_0}] = \Pi[S(\mathcal{N}_\Delta(P)), \partial S(\mathcal{N}_\Delta(P))] (W^\dagger P W) \Pi[S_{\Phi_0}]. 
    \end{equation}
    Combining the results of \eqref{eq:stabs past A} and \eqref{eq:stabs past B}, we have
    \begin{equation}
        \begin{aligned}
            \bra{\Phi} PQ \ket{\Phi} &= \bra{\Phi_0} \Pi[S_{\Phi_0}] (W^\dagger P W) (W^\dagger Q W) \Pi[S_{\Phi_0}] \ket{\Phi_0} \\ 
            &= \bra{\Phi_0} \Pi[S_{\Phi_0}] (W^\dagger P W) \Pi[\bar{S}(\mathcal{N}_\Delta(P), S(\mathcal{N}_\Delta(P)), \partial S(\mathcal{N}_\Delta(P))] (W^\dagger Q W) \Pi[S_{\Phi_0}] \ket{\Phi_0} \\
            &= \bra{\Phi_0} \Pi[S_{\Phi_0}] (W^\dagger P W) \Pi[S_{\Phi_0}] (W^\dagger Q W) \Pi[S_{\Phi_0}] \ket{\Phi_0} \\
            &= \bra{\Phi_0} \Pi[S_{\Phi_0}] (W^\dagger P W) \Pi[S_{\Phi_0}] \ket{\Phi_0} \bra{\Phi_0} \Pi[S_{\Phi_0}] (W^\dagger Q W) \Pi[S_{\Phi_0}] \ket{\Phi_0} \\
            &= \bra{\Phi}P\ket{\Phi}\bra{\Phi}Q\ket{\Phi}  
        \end{aligned}
    \end{equation}
\end{proof}


As an example, let us apply Lemma \ref{lemma:decorr} to the toric code ground state on a lattice with smooth, open boundaries. For simplicity, let $\supp(P)$ and $\supp(Q)$ be rectangular regions with either rough or smooth boundaries in the bulk. Then we have three cases:
\begin{enumerate}[(i)]
    \descitem{Both smooth boundaries} A rectangular region with a smooth boundary is defined by the edges around its perimeter. The toric code stabilizers that cross this boundary are vertex operators 
    (Fig. \ref{fig:bdaries}). If we have a circuit $W$ with spread $\Delta$, then $\mathcal{N}_\Delta(P)$ and $\mathcal{N}_\Delta(Q)$ are rectangular regions with length and width both increased by $2\Delta$. The set $\partial S(\mathcal{N}_\Delta(P))$ consists of vertex stabilizers that cross the boundary of $\mathcal{N}_\Delta(P)$, and for Lemma \ref{lemma:decorr} to be satisfied we require that none of these overlap with $\mathcal{N}_\Delta(Q)$. This will be the case if there is at least one plaquette separating $\mathcal{N}_\Delta(P)$ from $\mathcal{N}_\Delta(Q)$, or in other words, $\supp(P)$ is separated from $\supp(Q)$ by at least $2\Delta+1$ unit cells (Fig. \ref{fig:bdaries}).
    \descitem{Both rough boundaries} A rectangular region with a rough boundary is defined by the plaquettes around its perimeter. The toric code stabilizers that cross this boundary are plaquette operators 
    (Fig. \ref{fig:bdaries}). This situation is analagous to Case (i) with vertex and plaquette operators swapped, so again we find that Lemma \ref{lemma:decorr} is satisfied if $\supp(P)$ is separated from $\supp(Q)$ by at least $2\Delta+1$ unit cells.
    \descitem{One smooth and one rough boundary} When $\supp(P)$ has a smooth boundary and $\supp(Q)$ has a rough boundary, the stabilizers that cross the boundary of $\mathcal{N}_\Delta(P)$ are vertex operators
    and the stabilizers that cross the boundary of $\mathcal{N}_\Delta(Q)$ are plaquette operators.
    As shown in Fig. \ref{fig:bdaries}, Lemma \ref{lemma:decorr} is satisfied if $\mathcal{N}_\Delta(P)$ and $\mathcal{N}_\Delta(Q)$ are separated by at least 1.5 unit cells, which requires $\supp(P)$ and $\supp(Q)$ to be separated by at least $2\Delta+1.5$ unit cells.
\end{enumerate}

\begin{figure}
    \centering
    \begin{tabular}{ccc}
         \includegraphics[width=0.32\textwidth]{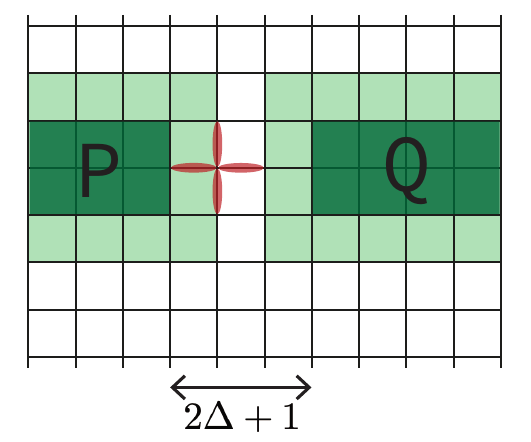} & 
         \includegraphics[width=0.32\textwidth]{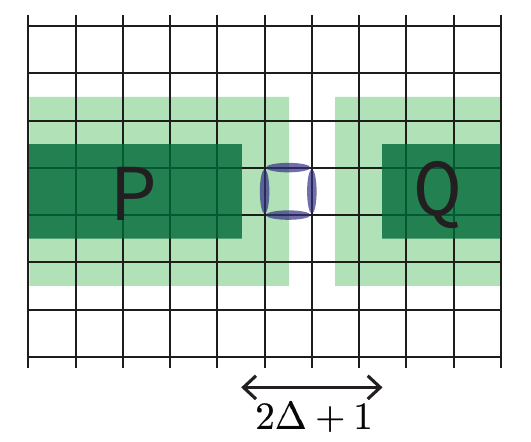} & 
        \includegraphics[width=0.32\textwidth]{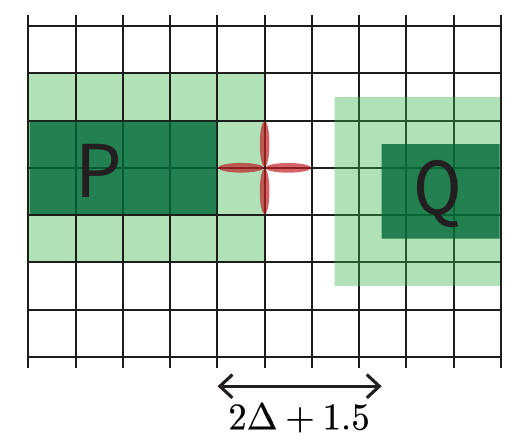} \\
         (a) & (b) & (c)  
    \end{tabular}
    \caption{Separation between operators $P$ and $Q$ required to satisfy Lemma \ref{lemma:decorr} in states related to the toric code state by a local, finite-depth unitary with spread $\Delta$. Darker green regions represent supports of $P$ and $Q$ respectively, while lighter green regions represent the neighbourhoods $N_\Delta(P)$ and $N_\Delta(Q)$. (a) If the supports of $P$ and $Q$ have smooth boundaries, we require vertex stabilizers that cross the boundary of $\mathcal{N}_\Delta(P)$ not to cross the boundary of $\mathcal{N}_\Delta(Q)$, so the separation must be $2\Delta +1$. (b) The case that $P$ and $Q$ are supported on regions with rough boundaries. (c) The case that $\supp(P)$ has a smooth boundary and $\supp(Q)$ has a rough boundary.}
    \label{fig:bdaries}
\end{figure}


\subsection{Symmetry representations}

Our setting for MBQC is a 2D lattice of qubits with open boundaries. We define three subsets of the lattice: the left boundary $L$, the bulk $B$, and the right boundary $R$. Let $\ket{\Phi}$ be a quantum state of the lattice symmetric under a linear representation $U(G)$ of a group $G \cong \mathbb{Z}_2^m$:
\begin{equation} \label{eq:Ug}
    U(g)\ket{\Phi} = (-1)^{\chi(g)}\ket{\Phi} \qquad g\in G,
\end{equation}
where $\chi(g)\in \{0,1\}.$ We assume that $U(G)$ is a tensor product of single-qubit Pauli operators $I, X, Y$ or $Z$ which commute in the bulk but may anticommute at the boundaries. In other words, $U(G)$ is a product of site-local representations which are linear in the bulk and projective at the boundaries. Specifically, we have:
\begin{enumerate}
    \item Linear representations in the bulk: $\hat{u}_{i}(g) \in \{X_i, Y_i, Z_i, I_i\}, i\in B$.
    \item Projective representations on the left boundary: $\hat{v}_{i}^L(g) \in \{X_i, Y_i, Z_i, I_i\}, i\in L$. 
    \item Projective representations on the right boundary: $\hat{v}_{i}^R(g) \in \{X_i, Y_i, Z_i, I_i\}, i\in R$.
\end{enumerate}
We will often refer to symmetry representations acting on the entire left boundary, right boundary, or bulk, which we denote
\begin{equation}
    V_L(g) = \bigotimes_{i\in L} \hat{v}_{i}^L(g), \qquad V_R(g) = \bigotimes_{i\in R} \hat{v}_{i}^R(g), \qquad U_{B}(g) = \bigotimes_{i \in B} \hat{u}_{i}(g).
\end{equation}
With this notation we can expand $U(g)$ as
\begin{equation} \label{eq:global symms from local reps}
        U(g) = V_L(g)\otimes U_{B}(g) \otimes V_R(g).
    \end{equation}
For $U(g)$ to be linear, \eqref{eq:global symms from local reps} requires the commutation relations of $V_L(g)$ to match those of $V_R(g)$. We formalize this by defining a function $\kappa: G \times G \to \mathbb{Z}_2$ via
\begin{equation}
    V_L(g)V_L(g') = (-1)^{\kappa(g,g')}V_L(g')V_L(g)
\end{equation}
and demand that $V_R(g)$ satisfy the same equation, i.e.
\begin{equation}
    V_R(g)V_R(g') = (-1)^{\kappa(g,g')}V_R(g')V_R(g).
\end{equation}

\subsection{Computational order parameters}

Let $k$ be a site in the bulk, and denote by $\{>k\}$ the set of all sites for which the horizontal distance to the right boundary is less than $k$. In other works, $\{>k\}$ is the rectangular region starting immediately to the right of site $k$ and running all the way up to and including the right boundary. 
We denote the restriction of $U(g)$ to $\{>k\}$ by $U_{>k}(g)$.
This operator may fail to commute with $U(G)$ because the projective representation $V_R(G)$ at the right boundary is no longer counterbalanced by $V_L(g)$ at the left boundary. However, we can search for an operator $\hat{\beta}_{k}(g)$ acting on site $k$ such that
\begin{equation}
    R_{k}(g) := \hat{\beta}_{k}(g) \otimes U_{>k}(g)
\end{equation}
commutes with all $U(G)$. When such an operator exists, we say that $U(g)$ is localizable to site $k$, and that $\hat{\beta}_{k}(g)$ localizes $U(g)$. The condition for $U(g)$ to be localizable at site $k$ is that there exists $\hat{\beta}_{k}(g)$ such that
\begin{equation} \label{eq:localizability condition}
    \hat{u}_{k}(g') \hat{\beta}_{k}(g) = (-1)^{\kappa(g,g')} \hat{\beta}_{k}(g)\hat{u}_{k}(g') \qquad \forall\:g'\in G.
\end{equation}
At each site we denote $\mathcal{G}_{k}\subset G$ the subgroup of $G$ for which there is a $\hat{\beta}_{k}(g)$ satisfying \eqref{eq:localizability condition}. The set of all elements of $g$ that can be localized to at least one site is denoted $\mathcal{G} = \bigcup_{i} \mathcal{G}_{i}$.





\begin{lemma}
    If \eqref{eq:localizability condition} has a solution for site $k$ and group element $g_k\in \mathcal{G}_k$, then without loss of generality $\hat{\beta}_k(g_k)$ may be chosen to be a Pauli operator.
\end{lemma}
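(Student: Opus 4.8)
The plan is to analyze the structure of the localizability condition \eqref{eq:localizability condition} directly. We are given that $\hat{\beta}_k(g_k)$ is some single-qubit operator on site $k$ satisfying
\begin{equation}
    \hat{u}_k(g')\,\hat{\beta}_k(g_k) = (-1)^{\kappa(g_k,g')}\,\hat{\beta}_k(g_k)\,\hat{u}_k(g') \qquad \forall\, g'\in G,
\end{equation}
where each $\hat{u}_k(g')$ is a single-qubit Pauli operator $I,X,Y,Z$ acting on site $k$. The first observation is that the condition only constrains $\hat{\beta}_k(g_k)$ through its commutation/anticommutation relations with a fixed finite set of Paulis, namely the image $\{\hat{u}_k(g') : g'\in G\}$. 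So the problem reduces to showing that whenever a general single-qubit operator $\beta$ on a qubit has prescribed commutation signs against a collection of Paulis, one can replace it by a Pauli with the same signs.

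The key step is to expand the arbitrary single-qubit operator in the Pauli basis, $\hat{\beta}_k(g_k) = \alpha_0 I + \alpha_X X_k + \alpha_Y Y_k + \alpha_Z Z_k$, and observe how each Pauli component transforms under conjugation by a fixed Pauli $\hat{u}_k(g')$. Since two single-qubit Paulis either commute or anticommute, conjugating $\hat{\beta}_k(g_k)$ by $\hat{u}_k(g')$ flips the sign of exactly those Pauli components that anticommute with $\hat{u}_k(g')$ and leaves the rest (including the identity component) fixed. The requirement that $\hat{\beta}_k(g_k)$ be an overall $(-1)^{\kappa(g_k,g')}$ eigen-operator under conjugation by $\hat{u}_k(g')$, simultaneously for all $g'$, forces every Pauli component with a nonzero coefficient to carry the \emph{same} sign pattern $g'\mapsto(-1)^{\kappa(g_k,g')}$. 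In particular, components with different commutation signatures cannot coexist, so the surviving terms all belong to a single consistent signature class; any one of them (with nonzero coefficient) is itself a valid Pauli solution, and we simply pick that Pauli.

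Concretely, I would argue as follows. If $\kappa(g_k,\cdot)$ is the all-zero signature, then $\hat{\beta}_k(g_k)$ must commute with every $\hat{u}_k(g')$; the identity $I_k$ then works (and is the natural Pauli choice). Otherwise some $g'$ gives $\kappa(g_k,g')=1$, meaning $\hat{\beta}_k(g_k)$ genuinely anticommutes with $\hat{u}_k(g')$, so its identity component $\alpha_0$ must vanish and at least one nonidentity Pauli coefficient is nonzero. Because all nonvanishing components share the identical commutation signature against the whole set $\{\hat{u}_k(g')\}$, selecting any single such Pauli $P\in\{X_k,Y_k,Z_k\}$ with $\alpha_P\neq 0$ yields an operator satisfying the exact same relations as $\hat{\beta}_k(g_k)$. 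Replacing $\hat{\beta}_k(g_k)$ by $P$ therefore leaves $R_k(g_k)=\hat{\beta}_k(g_k)\otimes U_{>k}(g_k)$ commuting with all of $U(G)$, which is the only property $\hat{\beta}_k(g_k)$ was required to have.

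The main (and really only) subtlety to handle carefully is verifying that the nonzero Pauli components must share a single signature rather than merely averaging to the right sign. This follows because the three nonidentity single-qubit Paulis have pairwise distinct commutation signatures against any fixed Pauli, so under conjugation by $\hat{u}_k(g')$ they transform with definite, linearly independent sign patterns; for the conjugated operator to be a scalar multiple of the original for every $g'$, no two components with differing patterns can both be present. I would make this precise by noting that conjugation by the abelian set of Paulis decomposes the single-qubit operator space into common eigenspaces labeled by signatures, and $\hat{\beta}_k(g_k)$ lying in a single eigenspace (eigenvalue $(-1)^{\kappa(g_k,\cdot)}$) means it is supported on one such label, each basis element of which is a Pauli. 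This is elementary but is the step that actually delivers ``without loss of generality a Pauli,'' so it deserves the bulk of the attention.
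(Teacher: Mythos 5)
Your proof is correct and follows essentially the same route as the paper's: both reduce the problem to the observation that the set of single-qubit operators with a prescribed commutation signature against the Paulis $\hat{u}_k(G)$ is a linear span of Pauli operators, so any nonzero Pauli component of a solution is itself a solution. The paper shortcuts this by first normalizing WLOG to $\hat{u}_k(g)\in\{X_k,I_k\}$ and noting the anticommutant of $X_k$ is $\mathrm{span}\{Y_k,Z_k\}$, whereas you carry out the common-eigenspace decomposition explicitly; the content is the same.
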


\begin{proof}
    Without loss of generality assume that $\hat{u}_k(g) \in \{X_k, I_k\}$ for all $g \in G$, then $\hat{\beta}_k(g_k)$ must anticommute with $X_k$. The space of single-qubit operators that anticommute with $X$ is the complex vector space spanned by $\{Y_k, Z_k\}$, so we can make $\hat{\beta}_k(g_k)$ a Pauli operator by choosing $\hat{\beta}_k(g_k)$ to be $Y_k$ or $Z_k$.
\end{proof}

The reason for demanding that $R_{k}(g_k)$ commutes with all $U(G)$ is to allow for non-zero expectation values $\sigma_{k}(g_k) = \expval{R_{k}(g_k)}_\Phi$ on symmetric states $\ket{\Phi}$. These expectation values will turn out to be computational order parameters, and we will show that MBQC is possible whenever these order parameters are non-zero. Note that $\sigma_{k}(g_k)$ only exists for $g_k \in \mathcal{G}_k$, that is, only localizable elements of $G$ give rise to a computational order parameter. 

\subsection{Logical observables}

To implement MBQC on symmetric states, we need to specify how information will be encoded and recovered from such states. To this end, we define the logical observables
\begin{equation} \label{eq:logi_obs}
    T(g) = 
    \begin{cases}
    V_L(g) \otimes U_{B}(g) \otimes V_R(g), & g \in H \\
    I_L \otimes U_{B}(g) \otimes V_R(g), & g \in G \setminus H
    \end{cases}
\end{equation}
where $I_L = \bigotimes_{i\in L} I_i$ is the identity on the left boundary and $H$ is a maximal subgroup of $G$ such that $T(H)$ is abelian. Overall, $T(G)$ forms a projective representation of $G$. However, because $T(H)$ commute they can be measured simultaneously and will be the basis states for our MBQC readout. They also define the initial state of the logical system, as shown in the following lemma.
\begin{lemma} \label{lemma:logi_init}
    For all states $\ket{\Phi}$ satisfying \eqref{eq:Ug}, 
    \begin{equation}
        \expval{T(g)}_\Phi = 
        \begin{cases}
            (-1)^{\chi(g)}, & g \in H \\
            0, & g \in G\setminus H
        \end{cases}
    \end{equation}
\end{lemma}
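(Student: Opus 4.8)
The plan is to split the argument along the two cases already appearing in the statement. For $g\in H$, the definition \eqref{eq:logi_obs} gives $T(g)=V_L(g)\otimes U_B(g)\otimes V_R(g)=U(g)$, so the claim is immediate from the symmetry condition \eqref{eq:Ug}: $\expval{T(g)}_\Phi=\expval{U(g)}_\Phi=(-1)^{\chi(g)}$. No work is needed here beyond unwinding the definitions. The substance of the lemma is therefore the case $g\in G\setminus H$, where $T(g)=I_L\otimes U_B(g)\otimes V_R(g)$ and the goal is to show the expectation value vanishes.

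For that case I would use the standard ``anticommuting witness'' argument: find an element $h\in H$ whose logical observable $T(h)$ anticommutes with $T(g)$, and then exploit $T(h)=U(h)$ together with the eigenvalue equation \eqref{eq:Ug} to force $\expval{T(g)}_\Phi$ to equal its own negative. The first sub-step is to establish that such an $h$ exists. Since every $T(g')$ is a tensor product of single-qubit Paulis (the factors $\hat u_i$ and $\hat v_i^{L/R}$ lie in $\{X,Y,Z,I\}$), any two of them either commute or anticommute. Suppose, toward a contradiction, that $T(g)$ commuted with all of $T(H)$. Because $G\cong\mathbb{Z}_2^m$ we have $g^2=e$, so $\langle H,g\rangle=H\cup gH$, and a short check shows that if $T(g)$ commutes with every $T(h)$ then $T(\langle H,g\rangle)$ is abelian. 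This strictly enlarges $H$ while keeping $T$ abelian, contradicting the maximality of $H$. Hence there is an $h\in H$ with $T(g)T(h)=-T(h)T(g)$.

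The sign-chasing then closes the argument. For $h\in H$ the operator $T(h)=U(h)$ is Hermitian and satisfies $U(h)^2=I$, and \eqref{eq:Ug} gives $U(h)\ket{\Phi}=(-1)^{\chi(h)}\ket{\Phi}$. Conjugating $T(g)$ by $U(h)$ inside the expectation value and using $U(h)T(g)U(h)=-T(g)$ (from anticommutation and $U(h)^2=I$), while absorbing the two eigenvalue factors $(-1)^{2\chi(h)}=1$, yields $\expval{T(g)}_\Phi=-\expval{T(g)}_\Phi$, and therefore $\expval{T(g)}_\Phi=0$.

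I expect the main obstacle to be the first sub-step, namely cleanly extracting the anticommuting element $h$ from the maximality of $H$; the subsequent sign manipulation is routine. The two facts I would lean on throughout are that all the relevant observables are Pauli operators, so that the commute/anticommute dichotomy applies, and that $H$ is chosen maximal with $T(H)$ abelian, which is precisely what prevents $T(g)$ from commuting with all of $T(H)$ once $g\notin H$.
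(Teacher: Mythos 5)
Your proposal is correct and follows essentially the same route as the paper's proof: $T(g)=U(g)$ for $g\in H$, and for $g\in G\setminus H$ an anticommuting witness $T(h)$ with $h\in H$ (guaranteed by the maximality of $H$) forces $\expval{T(g)}_\Phi=-\expval{T(g)}_\Phi=0$. The only difference is that you spell out the maximality argument for the existence of $h$ in more detail than the paper, which simply asserts it.
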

\begin{proof}
    For $g \in H$ we have $T(g)=U(g)$ and the result follows from \eqref{eq:Ug}. By assumption $T(H)$ is a maximal abelian subgroup of $T(G)$, thus for every $g \in G\setminus H$ there exists $h \in H$ such that $T(h)$ anticommutes with $T(g)$. Then $\bra{\Phi}T(g)\ket{\Phi} = \bra{\Phi}T(h)^\dagger T(g) T(h)\ket{\Phi} = -\bra{\Phi}T(g)\ket{\Phi} \implies \bra{\Phi}T(g)\ket{\Phi} = 0$. 
\end{proof}

\subsection{Computational power}

With $R_k(g_k)$, $\sigma_k(g_k)$, and $T(G)$ defined, we can state the following theorem regarding the computational power of symmetric states.

\begin{restatable}[Computational power]{thm}{power} \label{thm:power}
    Consider a state $\ket{\Phi}$ satisfying the following conditions:
    \begin{enumerate}[(a)]
        \item \label{item:finite spread} $\ket{\Phi}$ is related to a stabilizer state $\ket{\Phi_0}$ by a local, finite-depth unitary that commutes with $U(G)$ and has finite spread $\Delta$. 
        \item \label{item:good stabs} There exists a finite $d$ such that Lemma \ref{lemma:decorr} holds for any two regions of the lattice separated by at least $2\Delta+d$.
        \item \label{item:g symmetric} $\ket{\Phi}$ is symmetric under $U(G)$ as defined in \eqref{eq:Ug}.
        \item \label{item:comp order nonzero} The computational order parameters $\sigma_{k}(g_k)$ are non-zero for all $k$. 
    \end{enumerate}
    Then MBQC with $\ket{\Phi}$ as a resource state can implement the following:
    \begin{enumerate}[(i)]
        \item \label{thm_item:init} Initialization and readout of $\rank(H)$ logical qubits, where $H$ is a maximal subgroup of $G$ such that $T(H)$ is abelian.
        \item \label{thm_item:logi} Logical gates from the Lie group $\mathcal{L} = \exp(-i\mathcal{A})$, where $\mathcal{A}$ is the Lie algebra generated by $T(\mathcal{G})$ under $i\comm{\cdot}{\cdot}$ and linear combinations.
    \end{enumerate}
\end{restatable}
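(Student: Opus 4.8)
The plan is to split the two conclusions and handle them in increasing order of difficulty. Conclusion \ref{thm_item:init} (initialization and readout) is essentially a corollary of the structure already in place. Since $H$ is chosen so that $T(H)$ is abelian and $T|_H$ is a projective representation of $H\cong\mathbb{Z}_2^{\rank(H)}$, the operators $T(h)$, $h\in H$, serve as a logical stabilizer group of $Z$-type, and by Lemma~\ref{lemma:logi_init} the resource state $\ket{\Phi}$ is a simultaneous $(-1)^{\chi(h)}$-eigenstate of all of them. This pins down the initial logical state and identifies $\rank(H)$ commuting logical observables, hence $\rank(H)$ logical qubits, with the anticommuting partners in $T(G\setminus H)$ supplying the conjugate logical operators. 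Readout then amounts to measuring the eigenvalue of each $T(h)$; because every $T(h)$ is a tensor product of single-qubit Paulis, its eigenvalue is the appropriate parity of single-qubit measurement outcomes, so the readout is realizable within the MBQC scheme. I would dispatch this part first, as it needs little beyond Lemma~\ref{lemma:logi_init}.

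The substance is conclusion \ref{thm_item:logi}. My approach is to exhibit, for each localizable element $g_k\in\mathcal{G}_k$, a single-qubit measurement protocol implementing the one-parameter family $\exp(-i\theta\,T(g_k))$ of logical gates, and then to close under composition. Concretely I would process the bulk from the right boundary toward the left: to act at site $k$, I would first measure every site in the region $\{>k\}$ in the eigenbasis of the local symmetry $\hat{u}_i(g_k)$, and then measure site $k$ in a basis rotated by an angle $\theta$. The bridge between the physical measurement and the logical operator is the localized symmetry $R_k(g_k)=\hat{\beta}_k(g_k)\otimes U_{>k}(g_k)$: once $\{>k\}$ is measured, $U_{>k}(g_k)$ is replaced by the recorded product of outcomes (a definite sign), so that on the remaining register $R_k(g_k)$ reduces to $\hat{\beta}_k(g_k)$ up to that sign, while the fact that $R_k(g_k)$ commutes with all of $U(G)$ forces this residual single-site operator to act as the logical operator $T(g_k)$. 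The rotation angle $\theta$ then applies $\exp(\pm i\theta\,T(g_k))$, with the outcome-dependent $\pm$ sign and any residual Pauli treated as a byproduct to be absorbed into subsequent adaptive measurements.

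The hard part, and the place where hypotheses \ref{item:finite spread}--\ref{item:comp order nonzero} all enter, is proving that this logical channel is \emph{exactly} the intended unitary rather than a nonunitary or uncontrolled outcome-dependent map, for a generic deformed state in the phase and not merely the fixed-point stabilizer state. Here I would use Lemma~\ref{lemma:decorr}: because $W$ has finite spread $\Delta$ and decorrelation holds beyond separation $2\Delta+d$, the measurement statistics and the propagated operators supported on well-separated portions of the lattice factorize, so the logical action accumulated from successive sites is a clean product and no long-range correlation spoils the teleportation. The non-vanishing order parameter $\sigma_k(g_k)=\expval{R_k(g_k)}_\Phi$ is what certifies that the reduction of $R_k(g_k)$ to $\hat{\beta}_k(g_k)$ carries a definite, nonzero logical signal: were $\sigma_k(g_k)$ to vanish, the corresponding branch would transmit no logical information and the gate would fail. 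I expect the main obstacle to be the bookkeeping of how $R_k(g_k)$ propagates through the already-measured region of a deformed state, bounding the corrections uniformly in $k$ using decorrelation together with $\sigma_k(g_k)\neq 0$, and verifying that every byproduct produced lies in the group generated by $T(\mathcal{G})$ so that it can indeed be corrected downstream. This is precisely where the 2D, multi-row, line-like geometry demands substantially more care than the 1D argument of \cite{Raussendorf2023measurementbased}.

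Finally I would assemble the gate group. Having realized $\exp(-i\theta\,T(g_k))$ for every $g_k\in\mathcal{G}$ and every $\theta$ in a continuous range, I would invoke the standard fact that the finite products of such one-parameter subgroups sweep out the connected Lie group generated by their infinitesimal generators. Sequential measurements at distinct sites compose the individual gates, and the group commutators $e^{-i\theta T(g)}e^{-i\phi T(g')}e^{i\theta T(g)}e^{i\phi T(g')}$ produce the Lie brackets $i\comm{T(g)}{T(g')}$ in the small-angle limit, so the reachable set is exactly $\mathcal{L}=\exp(-i\mathcal{A})$ with $\mathcal{A}$ the Lie algebra generated by $T(\mathcal{G})$ under $i\comm{\cdot}{\cdot}$ and linear combinations. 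This establishes conclusion \ref{thm_item:logi} and completes the proof.
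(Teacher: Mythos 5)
Your handling of part (\ref{thm_item:init}) matches the paper's (it is essentially Lemma~\ref{lemma:logi_init} plus the observation that $V_R(H)$ is a product of commuting single-site Paulis), and your final assembly of the Lie group from one-parameter subgroups and group commutators is also how the paper concludes. The gap is in the core of part (\ref{thm_item:logi}). You set out to show that the measurement protocol at site $k$ implements ``exactly the intended unitary rather than a nonunitary or uncontrolled outcome-dependent map,'' and you argue that decorrelation plus $\sigma_k(g_k)\neq 0$ delivers this. That is not what happens, and no amount of bookkeeping will make it happen. When the paper pulls $R_{k_i}(g_{k_i})$ out of the expectation value using Lemma~\ref{lemma:decorr}, the decorrelation step replaces $R_{k_i}$ by its expectation value $\sigma_{k_i}(g_{k_i})$, and the resulting logical action is the CPTP map $\mathcal{M}_k=\frac{1+\sigma_k}{2}[e^{-i\theta_k T(g_k)/2}]+\frac{1-\sigma_k}{2}[e^{+i\theta_k T(g_k)/2}]$ --- a genuinely non-unitary, decohering mixture of the two rotation directions whenever $0<\sigma_k<1$, which is the generic situation away from the fixed point. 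The missing idea is the paper's Lemma~\ref{lemma:efficiency}: the target unitary $\exp(-i\frac{\theta_k}{2}T(g_k))$ is recovered only \emph{approximately}, by subdividing the rotation into $N$ steps of angle $\theta_k/N$, with diamond-norm error $\epsilon=\frac{\theta_k^2}{N}\frac{1-\sigma_k^2}{\sigma_k^2}$. This is precisely where hypothesis (\ref{item:comp order nonzero}) does its work --- $\sigma_k\neq 0$ controls the convergence rate of this approximation --- not, as you suggest, by ``certifying a nonzero signal'' in an otherwise exact teleportation. Without this step your argument either proves a false statement (exact unitarity) or leaves condition (\ref{item:comp order nonzero}) with no role beyond non-degeneracy.

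A secondary but real problem is your measurement ordering. You propose to act at site $k$ by first measuring every site in $\{>k\}$, processing the bulk from right to left. But the logical register for $g\in G\setminus H$ is carried by the projective representation $V_R(g)$ on the \emph{right} boundary (Eq.~\eqref{eq:logi_obs}), so measuring $\{>k\}$ --- which includes the right boundary --- before applying the gate would collapse the very observables you intend to rotate. The paper's scheme (Lemma~\ref{lemma:local meas}) measures the left boundary and then the bulk from left to right, tracking intermediate observables $T_{\leq k_i}(g)$ via an adaptive recursion, and defers the readout of $V_R(H)$ to the end. Your adaptive-byproduct remarks are compatible with that recursion, but the direction of processing must be reversed to be consistent with the definitions of $T(g)$ and $R_k(g)$ fixed in the theorem statement.
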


We prove Theorem \ref{thm:power} in Appendix \ref{sec:power proof}. The key takeaway is that every symmetric state such that the order parameters $\sigma_k(g_k)$ are non-zero has equivalent computational power, in the sense that they enable the same Lie group of logical gates for MBQC. Conditions \ref{item:finite spread} and \ref{item:good stabs} generalize the notion of a short-range entangled state to allow for states with topological order, and the quantity $d$ depends on the details of the stabilizer state $\ket{\Phi_0}$, for example the toric code stabilizers as shown in Fig. \ref{fig:bdaries}.

As background to the theorem, we sketch the process of MBQC on symmetric states below.

\paragraph{Initialization:} The initial state of the logical system is an eigenstate of $T(H)$ with eigenvalues given by Lemma \ref{lemma:logi_init}.

\paragraph{Readout} After measuring all sites on the left boundary and in the bulk, the measurement statistics for $T(g)$ can be reproduced by measuring $V_R(g)$ on the right boundary. $V_R(H)$ can be inferred simultaneously from the single-qubit measurement outcomes of $\hat{v}_i^R(H)$.

\paragraph{Logical rotations:} For every site $k$ in the bulk there is a symmetry-respecting measurement basis, namely the eigenbasis of the non-identity $\hat{u}_k(G)$. Recall that $\hat{u}_k(G)$ are Paulis and commute, so there must be a unique $\hat{\alpha}_k \in \{X_k, Y_k, Z_k\}$ whose eigenbasis is the symmetry-respecting basis. If $\hat{\beta}_k(g_k)$, $R_k(g_k)$, $\sigma_k(g_k)$ exist for some $g_k \in \mathcal{G}$, then measurement in the rotated eigenbasis of 
    \begin{equation}
        O_k(g_k, \theta_k) = e^{-i\frac{\theta_k}{2}\hat{\beta}_k(g_k)} \hat{\alpha}_k e^{i\frac{\theta_k}{2}\hat{\beta}_k(g_k)}
    \end{equation}
implements the logical CPTP map
\begin{equation}
    \mathcal{M}_k(g_k, \theta_k) = \frac{1+\sigma_k(g)}{2}\qty[\exp(-i\frac{\theta_k}{2}T(g_k))] + \frac{1-\sigma_k(g_k)}{2}\qty[\exp(i\frac{\theta_k}{2}T(g_k))].
\end{equation}
where square brackets denote superoperators. This map acts on the logical observables, that is, it takes
\begin{equation}
    T(g) \to T'(g) = \mathcal{M}_k(g_k,\theta_k)(T(g)), \qquad g_k\in\mathcal{G}_k, g \in G.
\end{equation}
When $\sigma_k(g_k) = 1$, $\mathcal{M}_k$ simplifies to a unitary rotation 
\begin{equation}
    M_{\log,k}(g_k, \theta_k) = \exp(-i\frac{\theta_k}{2}T(g_k)),
\end{equation}
and when $\sigma_k(g_k) = 0$ or $\theta_k=0$ then $\mathcal{M}_k$ is the identity. When $0 < \sigma_k(g_k) < 1$ we get a linear combination of $M_{\log,k}(g_k, \theta_k)$ and $M_{\log,k}(g_k, -\theta_k)$. To approximately recover a unitary gate we can split up the desired rotation by $\theta_k$ into $N$ consecutive rotations by $\theta_k/N$. This approximates $M_{\log,k}(g_k,\theta_k)$ up to an error $\epsilon$ under the diamond norm with the properties stated in Lemma 12 of Ref.~\cite{aharonov1998quantum}. The error is given by 
\begin{equation}
    \epsilon = \frac{\theta_k^2}{N}\frac{1-\sigma_k(g_k)^2}{\sigma_k(g_k)^2}.
\end{equation}
Thus, while every state with nonzero order parameters has equivalent computational power, they vary in terms of computational efficiency. The most efficient states for computation are those for which all the order parameters are equal to one. 

Finally, given the ability to implement the logical rotations $M_{\log,k}(g_k, \theta_k)$ for all $g_k \in \mathcal{G}_k$, we can also implement rotations about their commutators via
\begin{equation}
    \exp(-i\frac{d\theta}{2}T(g_k))\exp(-i\frac{d\theta}{2}T(g_{k'})) \approx \exp(-\frac{(d\theta)^2}{4}\comm{T(g_k)}{T(g_{k'})}).
\end{equation}
This gives us access to all the unitaries in $\mathcal{L} = \exp(-i\mathcal{A})$, where $\mathcal{A}$ is the Lie algebra generated by $T(\mathcal{G})$.



\section{Computational versus physical order in the toric code} 
\label{sec:tc example}

\subsection{Computational order}
\label{sec:tc comp a}

\subsubsection{Boundary conditions for MBQC}
\label{sec:tc comp}
\begin{figure}
    \centering
    \includegraphics[width=0.75\linewidth]{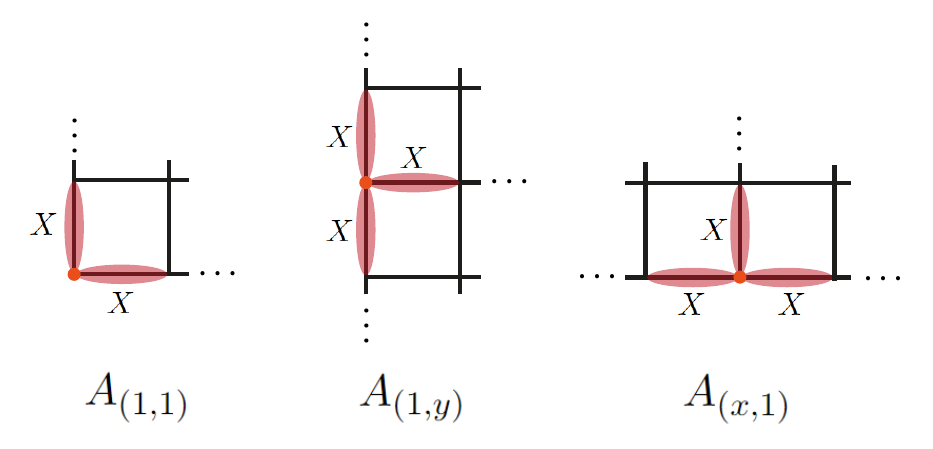}
    \caption{With open boundary conditions, we modify the vertex stabilizers by cutting them off at the edges.}
    \label{fig:cutoff}
\end{figure}

The MBQC scheme described in Section \ref{sec:comp order} requires a lattice with open boundary conditions, so we modify the toric code Hamiltonian Eq. \eqref{eq:TCH} to have smooth boundaries~\cite{bravyi1998quantumcodeslatticeboundary} on all sides. 
The plaquette stabilizers $B_{(x,y)}$ are unchanged, but we modify the vertex stabilizers $A_{(x,y)}$ by cutting off any tensor factors that fall outside the lattice (Fig. \ref{fig:cutoff}). 
We set the magnetic field strengths $h_X = h_Z = \alpha$, and we also remove the $X_{(x,y+\frac{1}{2})}$ terms at the left and right boundaries so that the system is invariant under a global linear representation $U(G)$ with projective representations at the boundaries, as required by Theorem \ref{thm:power}. 

Putting all these changes together yields the Hamiltonian 
\begin{equation} \label{eq:tch_mbqc}
    H_{\mathrm{MBQC}} = -\sum_{y=1}^{L_y}\sum_{x=1}^{L_x}  A_{(x,y)} - \sum_{y=1}^{L_y-1}\sum_{x=1}^{L_x-1} B_{(x,y)} -\alpha\sum_{y=1}^{L_y-1}\sum_{x=2}^{L_x-1} X_{(x,y+\frac{1}{2})} - \alpha \sum_{y=1}^{L_y}\sum_{x=1}^{L_x-1} Z_{(x+\frac{1}{2},y)}.
\end{equation}
Note that $H_{\mathrm{MBQC}}$ has a unique ground state $\ket{\psi_\alpha}$ for all finite values of $\alpha$. These ground states will be our resource states for MBQC.

\subsubsection{Symmetries and order parameters}
\begin{figure}
    \centering
    \includegraphics[width=0.75\linewidth]{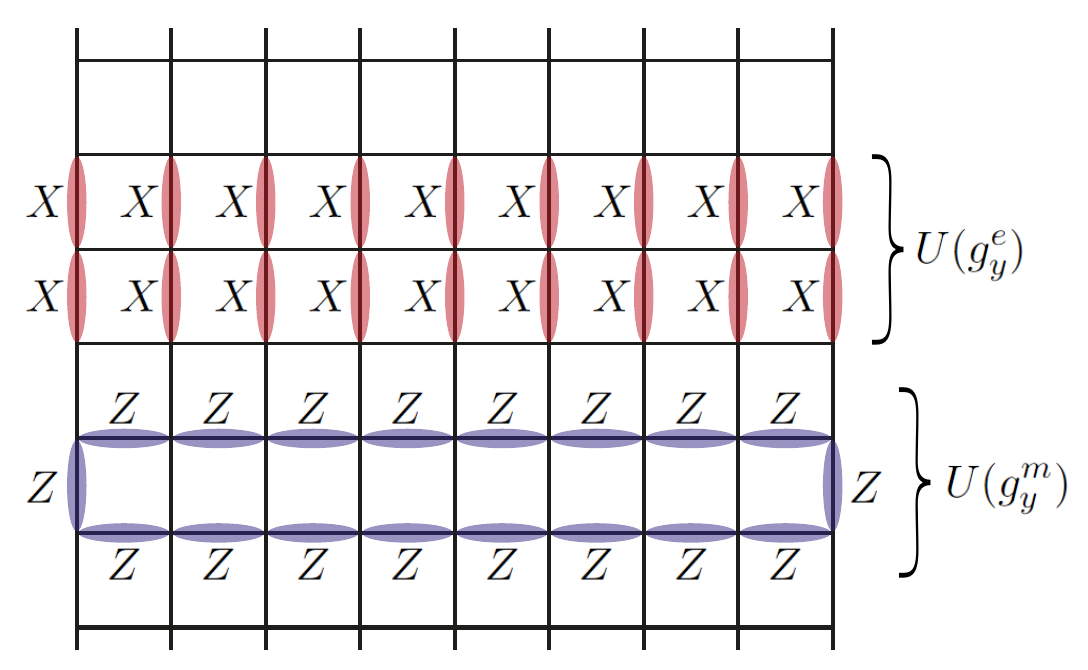}
    \caption{Action of the subsystem symmetries $U(G)$ for the toric code with smooth open boundaries.}
    \label{fig:TC UG}
\end{figure}

\paragraph{Symmetries} $H_{\mathrm{MBQC}}$ and its ground states $\ket{\psi_\alpha}$ are invariant under the same subsystem symmetries defined in \eqref{eq:subsystem_symmetries}, with modifications near the smooth boundary to ensure that they commute with the boundary terms. Specifically, we have
\begin{equation} \label{eq:tc_ug}
    U(g_y^e) = \prod_{x=1}^{L_x} A_{(x,y)}, \qquad U(g_y^m) = \prod_{x=1}^{L_x-1} B_{(x,y)}
\end{equation}
The smooth boundary conditions create one more row of vertices than plaquettes, yielding the symmetry group $G = \langle g_y^e, g_y^m \rangle \cong \mathbb{Z}_2^{2L_y-1}$. 

At the right boundary, we can find the projective representations by looking at how $U(g_y^e)$ and $U(g_y^m)$ terminate (Fig. \ref{fig:TC UG}). This yields
\begin{equation} \label{eq:tc_proj_reps}
    V_R(g_y^e) = 
    \begin{cases}
        X_{(L_x, \frac32)}, & y=1, \\
        X_{(L_x,y-\frac12)}X_{(L_x, y+\frac12)}, & 1 < y < L_y \\ 
        X_{(L_x, L_y-\frac12)}, & y=L_y.
    \end{cases} 
    ,\qquad V_R(g_y^m) = Z_{(L_x, y+\frac12)}. 
\end{equation}
We can see that this is indeed a projective representation because $V_R(g_y^e)$ anticommutes with $V_R(g_y^m)$ and $V_R(g_{y-1}^m)$, that is $\kappa(g_y^e, g_y^m) = \kappa(g_y^e, g_{y-1}^m) = 1$. The projective representations on the left boundary are the same.

\paragraph{Computational order parameters} Now let us localize one of the linear representations $U(g_y^e)$ to create a computational order parameter. First, we restrict $U(g_y^e)$ to the region right of site $(k+\frac12,y)$ with $1 < k < L_x$, yielding 
\begin{equation}
    U_{>k+\frac12}(g_y^e) = X_{(k+\frac12,y)}\prod_{x=k+1}^{L_x} A_{(x,y)}.
\end{equation}
 $U_{>k+\frac12}(g_y^e)$ anticommutes with $U(g_y^m)$ and $U(g_{y-1}^m)$ due to the projective representations at the right boundary, but \eqref{eq:localizability condition} says that we may be able to ``fix up'' this anticommutation by adding an operator $\hat{\beta}_{k+\frac12, y}(g_y^e)$ at site $(k+\frac12, y)$. On this site, $U(g_y^m)$ and $U(g_{y-1}^m)$ act by $Z$, so we can choose $\hat{\beta}_{k+\frac12, y}(g_i^e) = X_{k+\frac12, y}$. Then the product
 \begin{equation} \label{eq:tc cop e}
     R_{k+\frac12}(g_y^e) = X_{(k+\frac12, y)} \otimes U_{>k+\frac12}(g_y^e) = \prod_{x=k+1}^{L_x} A_{(x,y)} 
 \end{equation}
 commutes with all $U(G)$. This gives us an ``electric'' computational order parameter $\sigma_{k+\frac12}(g_y^e) = \expval{R_{k+\frac12}(g_y^e)}$. Note that there are no other $U(g)$ which can be localized at site $(k+\frac12, y)$, thus $\mathcal{G}_{k+\frac12,y} = \{g_y^e\}$.
 
 Similarly, we can localize $U(g_y^m)$ to any site $(k, y+\frac12), 2\leq k \leq L_x-1$ and obtain
 \begin{equation} \label{eq:tc cop m}
     R_{k}(g_y^m) = \prod_{x=k}^{L_x-1} B_{(x,y)},
 \end{equation}
 from which we get a ``magnetic'' computational order parameter $\sigma_{k}(g_y^m) = \expval{R_{k}(g_y^m)}$. Once again, $U(g_y^m)$ is the only symmetry that can be localized to site $(k,y+\frac12)$, so we have $\mathcal{G}_{k, y+\frac12} = \{g_y^m\}$. Taking the union $\bigcup_{x,y} \mathcal{G}_{x,y}$ we have $\mathcal{G} = \{g_y^e, g_y^m:y=1,...,L_y\}$. The operators $R_{k+\frac12}(g_y^e)$ and $R_{k}(g_y^m)$ are shown in Fig. \ref{fig:TC Rk TG}.

 \begin{figure}
     \centering
     \includegraphics[width=0.99\linewidth]{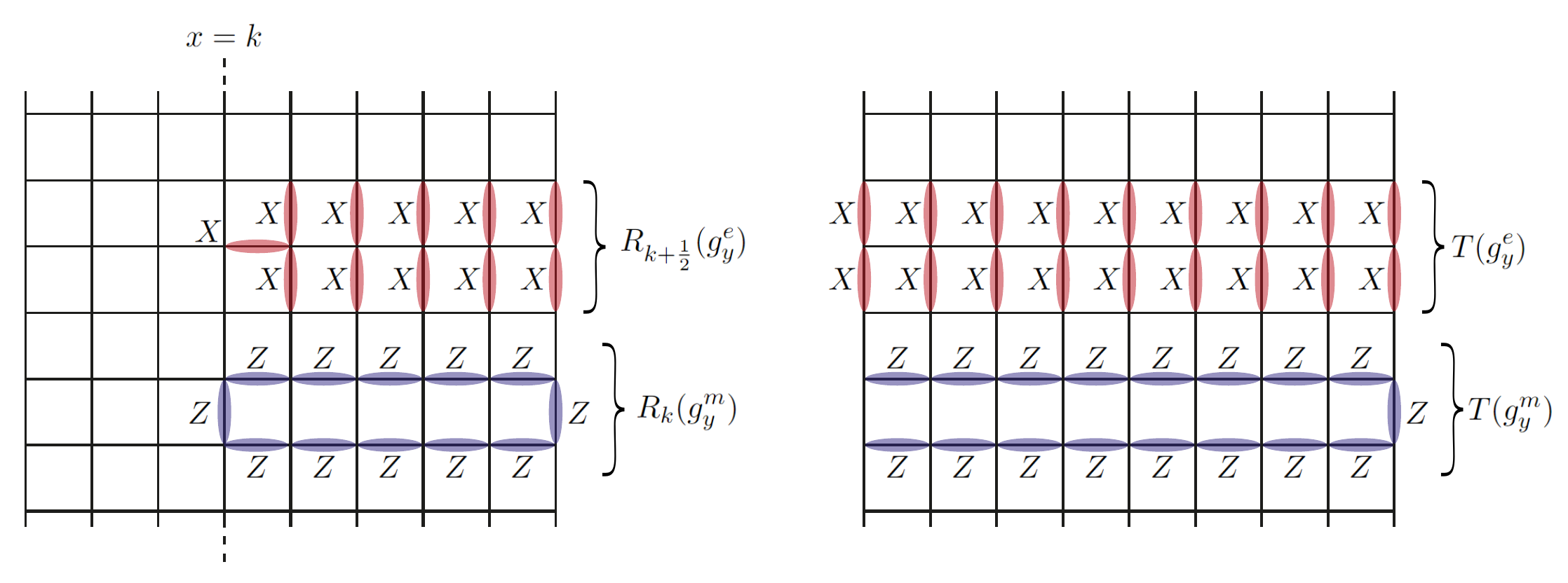}
     \caption{\textbf{Left}: operators $R_{k+\frac12}(g_y^e)$ and $R_{k}(g_y^m)$ that give rise to computational order parameters in the toric code. \textbf{Right}: logical observables $T(g_y^e)$ and $T(g_y^m)$.}
     \label{fig:TC Rk TG}
 \end{figure}

\paragraph{Logical observables}

Recall that we have one more row of vertices than we do of plaquettes, therefore a maximal subgroup $H\subset G$ such that $V_R(H)$ is abelian is given by $H = \expval{g_y^e:i=1,...,m+1}$. With respect to $H$, we can define the logical observables 
\begin{equation} \label{eq:tc t ops}
    T(g_y^e) = U(g_y^e) = \prod_{x=1}^{L_x} A_{(x,y)}, \qquad T(g_y^m) = Z_{(1, y+\frac12)} \prod_{x=1}^{L_x-1} B_{(x,y)}.
\end{equation}
The observables $T(g_y^e)$ all commute, so we use them as $L_y$ logical qubits for initialization and readout. 

\paragraph{Logical gates} The logical rotation associated with the electric COPs is $\exp(i\frac{\theta}{2}T(g_y^e))$, and the logical rotation for the magnetic COPs is $\exp(i\frac{\theta}{2}T(g_y^m))$. These are all the logical rotations that exist, because we have establised $\mathcal{G} = \{g_y^e, g_y^m:y=1,...,L_y\}$. The anticommutation relations of $T(g_y^e)$ and $T(g_y^m)$ are completely determined by the projective representations $V_R(g_y^e)$ and $V_R(g_y^m)$ at the right boundary, so that the Lie algebra $\mathcal{A}$ generated by $T(\mathcal{G})$ is isomorphic to the Lie algebra generated by $V_R(\mathcal{G})$. If we think of the right boundary as a 1D system of $L_y-1$ qubits then \eqref{eq:tc_proj_reps} tells us that $V_R(\mathcal{G})$ acts as
\begin{equation} \label{eq:tc gens}
    \{X_1, X_{1}X_{2}, ..., X_{L_y-2}X_{L_y-1}, X_{L_y-1}, Z_1, ..., Z_{L_y-1}\}.
\end{equation}
These operators can be mapped via a Jordan-Wigner transformation to the creation operators for a system of $L_y$ free Majorana fermions, which generate the special orthogonal Lie algebra $\so(2L_y)$. Therefore, we have $\mathcal{A}\cong\so(2L_y)$ and the Lie group of gates available for MBQC is $\mathcal{L} = \SO(2L_y)$. 
\paragraph{Computational power}

Having derived the global symmetries and COPs, we can check that the conditions of Theorem \ref{thm:power} are satisfied for the ground states $\ket{\psi_\alpha}$.
\begin{enumerate}[(a)]
    \item $\ket{\psi_\alpha}$ are gapped ground states of the Hamiltonian \eqref{eq:tch_mbqc}. In the thermodynamic limit any given $\ket{\psi_\alpha}$ is in the same phase as the toric code state $\ket{\psi_{TC}}$ or a product state, both of which are stabilizer states. It is known that two states are in the same topological phase if and only if they are related by a local, finite-depth unitary~\cite{Bravyi_2006}, thus for any $\alpha$ there exists a local, finite-depth unitary connecting it to a stabilizer state. Such a unitary necessarily has finite spread. 
    \item As discussed in Section \ref{sec:decorr}, we have $d=1.5$ such that regions with any combination of rough or smooth boundaries will satisfy Lemma \ref{lemma:decorr} if they are separated by at least $2\Delta+d$.
    \item $\ket{\psi_\alpha}$ are symmetric under the linear representation $U(G)$ as defined in \eqref{eq:tc_ug}.
\end{enumerate}

With conditions (a), (b) and (c) fulfilled, Theorem \ref{thm:power} tells us that any $\ket{\psi_\alpha}$ such that the computational order parameters are nonzero can be used as a resource state for MBQC with $L_y$ logical qubits and gates drawn from $\SO(2L_y)$. In the next section, we show that every computational order parameter is dual to a string order parameter as defined in \eqref{eq:sop def vir}, and therefore that the ground-state topological phase of $H_{\mathrm{MBQC}}$ including the toric code corresponds to a ``computational phase of matter''~\cite{Raussendorf2019}.

\subsection{Topological order}
\label{sec:tc topo}

\subsubsection{Derivation of the phase diagram} 
Having understood the ground-state computational power of the Hamiltonian Eq. \eqref{eq:TCH}, we now investigate its ground-state topological order with a view to showing how these two properties are related. 
To understand the ground-state physics of the Hamiltonian Eq. \eqref{eq:TCH}, it is simplest to borrow the language of lattice gauge theory~\cite{kogut1979introduction,kitaev2003fault} and split $H$ into ``electric'' and ``magnetic'' contributions,
\begin{equation}
\label{eq:EandM}
H = H^e + H^m, \quad H^e = - \sum_{(x,y) \in \Lambda} A_{(x,y)} - h_Z \sum_{(x,y)\in\Lambda} Z_{(x+\frac12,y)}, \quad H^m = - \sum_{(x,y)\in\Lambda} B_{(x,y)} - h_X \sum_{(x,y)\in\Lambda} X_{(x,y+\frac12)}.
\end{equation}
A distinctive feature of our anisotropic perturbing magnetic field Eq. \eqref{eq:dH} compared to the standard~\cite{TCPD} isotropic perturbing magnetic field of the toric code is that the electric and magnetic terms in Eq. \eqref{eq:EandM} commute with one another, $[H^e,H^m]=0$. In fact, these Hamiltonians are completely decoupled from one another. This is intuitively clear in the language of anyonic excitations of $H_0$~\cite{kitaev2003fault,kitaev2009topologicalphasesquantumcomputation}, according to which the operators $A_{(x,y)}$ ($B_{(x,y)}$) count the number of electric charges (magnetic vortices) at their centre, while the operators $Z_{(x+\frac12,y)}$ ($X_{(x,y+\frac12)}$) create perfectly horizontal pairs of electric charges (magnetic vortices) on either side of the bonds that they act on. Thus, unlike the standard perturbed toric code, our Hamiltonian $H$ is equivalent to $L_y$ mutually decoupled pairs of Ising models acting on electric and magnetic anyons respectively. See bottom panel of Fig. \ref{fig:anyons}. Let us now make this statement mathematically precise.

We can formalize the qualitative mapping described above by taking $L_x \to \infty$ and mapping electric or magnetic anyons to Ising domain walls defined in terms of new Pauli spin variables $\sigma$ (on horizontal bonds of $\Lambda$) and $\tau$ (on sites of $\Lambda$) respectively, as
\begin{equation}
A_{(x,y)} \mapsto \sigma^z_{(x-\frac{1}{2},y)} \sigma^z_{(x+\frac{1}{2},y)}, \quad Z_{(x+\frac{1}{2},y)} \mapsto \sigma^x_{(x+\frac{1}{2},y)},
\end{equation}
in the electric sector and
\begin{equation}
B_{(x,y)} \mapsto \tau^z_{(x,y)} \tau^z_{(x+1,y)}, \quad X_{(x,y+\frac{1}{2})} \mapsto \tau^x_{(x,y)}
\end{equation}
in the magnetic sector~\footnote{Equivalently, one can define the $\sigma$ and $\tau$ operators microscopically as $
\sigma^z_{(x+\frac{1}{2},y)} = \prod_{x' \leq x} A_{(x,y)}$ and $\sigma^x_{(x+\frac{1}{2},y)} = Z_{(x+\frac{1}{2},y)}$ in the electric sector and $
\tau^z_{(x,y)} = \prod_{x' < x} B_{(x,y)}$ and $\tau^x_{(x,y)} = X_{(x,y+\frac{1}{2})}$ in the magnetic sector.}. In terms of these effective spin variables, we find that
\begin{equation}
\label{eq:Isingstack}
H^e = \sum_{y=1}^{L_y} H^e_y, \quad H^m = \sum_{y=1}^{L_y} H^m_y,
\end{equation}
where $[H^e_y,H^e_{y'}] =[H^e_y,H^m_{y'}]=[H^m_y,H^m_{y'}]=0$ for all $y,y'=1,2,\ldots, L$, with
\begin{equation}
\label{eq:directE}
H^e_y = -\sum_{x\in \mathbb{Z}} \sigma^z_{(x-\frac{1}{2},y)} \sigma^z_{(x+\frac{1}{2},y)} - h_Z \sum_{x \in \mathbb{Z}} \sigma^x_{(x+\frac{1}{2},y)}
\end{equation}
and
\begin{equation}
\label{eq:directB}
H^m_y = -\sum_{x \in \mathbb{Z}} \tau^z_{(x,y)} \tau^z_{(x+1,y)} - h_X \sum_{x\in \mathbb{Z}}\tau^x_{(x,y)}.
\end{equation}
We have thus mapped the Hamiltonian $H$ to a stack of decoupled transverse-field Ising chains. Similar Hamiltonians and mapping to decoupled Ising chains were previously obtained for the Wen plaquette model~\cite{WenPlaquette} and for the bond-anisotropic Kitaev model~\cite{wachtel2019confinement}. It follows that the confinement transitions out of the topological ground-state phase of $H$ as $h_X$ and $h_Z$ are increased from zero are equivalent to the ferromagnetic-to-paramagnetic ground-state phase transition of the exactly solvable~\cite{pfeuty1970one} one-dimensional transverse field Ising model. We deduce that for all values of $L_y$, electric charges condense (and magnetic vortices are confined) when $h_Z > 1$ and magnetic vortices condense (and electric charges are confined) when $h_X > 1$, leading to the simple phase diagram shown in the right panel of Fig. \ref{fig:anyons} as $L_y \to \infty$. Note that in our discussion of measurement-based quantum computing above, we have set $h_X = h_Z = \alpha$ so that the confinement transition out of the topological phase occurs at the (trivially) bicritical point $\alpha=1$, see Fig. \ref{fig:anyons}.

\subsubsection{Topological order parameters}
We have shown that the Hamiltonian $H$ exhibits a topological ground-state phase in the thermodynamic limit for $|h_X|, \, |h_Z| < 1$. Let us now briefly recall how topological phases are diagnosed in practice, for example, for the usual magnetic-field perturbation of the toric code~\cite{TCPD,xu2024criticalbehaviorfredenhagenmarcustring}. The key idea is ``string order''~\cite{bricmont1983order,fredenhagen1983charged,fredenhagen1986confinement,marcu1987new} that motivates the construction of Fredenhagen-Marcu order parameters~\cite{fredenhagen1983charged,fredenhagen1986confinement,marcu1987new} in lattice gauge theory. Such order parameters have proved theoretically~\cite{huse1991sponge,Gregor_2011}, experimentally~\cite{SLexpt} and numerically~\cite{xu2024criticalbehaviorfredenhagenmarcustring} valuable for diagnosing topological order in condensed matter systems.

For concreteness, let $L_x = L_y = L$ and let $|\psi_{\mathrm{TC}}\rangle$ denote the ground state of the unperturbed toric code Hamiltonian $H_0$. For any (closed or open) string $\Gamma^e$ on the bonds of the direct lattice or $\Gamma^m$ on the bonds of the dual lattice, define string operators or Wilson lines
\begin{equation} \label{eq:sop def vir}
W[\Gamma^e]  = \prod_{b \in \Gamma^e} Z_b, \quad W[\Gamma^m]  = \prod_{b \in \Gamma^m} X_b.
\end{equation}
When the strings in question are open, the corresponding string operators act on $|\psi_{\mathrm{TC}}\rangle$ to create pairs of anyons at their endpoints, so that the resulting state is in a different $\mathbb{Z}_2$-charge sector from the ground state and $\langle \psi_{\mathrm{TC}}| W[\Gamma^{e/m}] | \psi_{\mathrm{TC}} \rangle = 0$. For open strings with length $\Omega(L)$, one might similarly expect that $\langle \psi | W[\Gamma^{e/m}] | \psi \rangle \to 0$ in the thermodynamic limit whenever the relevant anyons are deconfined i.e. $|\psi\rangle$ is in a ``topological'' phase, while $\langle \psi | W[\Gamma^{e/m}] | \psi \rangle > 0$ in the thermodynamic limit whenever the relevant anyons are confined, i.e. $|\psi\rangle$ is in a ``trivial'' phase. This motivates the following definition: let $\Gamma^{e/m}_o$ be open strings of length $\Omega(L)$. Then we define ``string order parameters''
\begin{equation}
O[\Gamma^{e/m}_o] = \lim_{L\to\infty}\sqrt{|\langle \psi | W[\Gamma^{e/m}_o] | \psi \rangle|}.
\end{equation}
However, for the usual magnetic field perturbation of the toric code, the string order parameter has the fatal flaw that it is susceptible to ``perimeter law'' scaling at generic points of the phase diagram and therefore decays too fast to be distinguishable from zero in practice, even in the trivial phase. This is usually remedied by the prescription due to Fredenhagen and Marcu~\cite{fredenhagen1983charged,fredenhagen1986confinement,marcu1987new}, according to which additional closed loops $\Gamma^{e/m}_c$ with length $\Gamma^{e/m}_c = 2 |\Gamma^{e/m}_o|$ are introduced, yielding the definition
\begin{equation}
\tilde{O}[\Gamma^{e/m}_o] = \lim_{L\to\infty}\sqrt{\frac{|\langle \psi | W[\Gamma^{e/m}_o] | \psi \rangle|}{\sqrt{|\langle \psi | W[\Gamma^{e/m}_c] | \psi \rangle|}}}.
\end{equation}
This rescaling of the string order parameter cancels out the decay due to the perimeter law, yielding a practically useful tool for diagnosing topological order~\cite{Gregor_2011,SLexpt,xu2024criticalbehaviorfredenhagenmarcustring}.

Note that the phase diagram in Fig. \ref{fig:anyons} differs from this conventional understanding in the important respect that its perimeter-law scaling is highly anistropic: only vertical bonds of the direct lattice contribute to perimeter-law scaling of $O(\Gamma^e_o)$, while only vertical bonds of the dual lattice contribute to perimeter-law scaling of $O(\Gamma^m_o)$. What this means in practice is that the string order parameters $O(\Gamma^{e/m}_o)$ suffice to diagnose the various ground-state phases of $H$ without Fredenhagen-Marcu-type rescaling, provided the long open strings $\Gamma^{e/m}_o$ are taken to be perfectly horizontal. In the remainder of this paper, ``string operators'' and ``string order parameters'' will always refer to this special case of perfectly horizontal strings. 

\subsubsection{Kramers-Wannier duality}
We showed above that the Hamiltonian $H$ could be solved exactly by mapping the bare toric code operators to new Pauli operators $\sigma$ and $\tau$ via the map
\begin{equation}
\label{eq:directmapping}
\begin{aligned}
M\big(A_{(x,y)}\big) &= \sigma^z_{(x-\frac{1}{2},y)} \sigma^z_{(x+\frac{1}{2},y)}, \\ 
M\big(Z_{(x+\frac{1}{2},y)}\big) &= \sigma^x_{(x+\frac{1}{2},y)},\\ 
M\big(B_{(x,y)}\big) &= \tau^z_{(x,y)} \tau^z_{(x+1,y)}, \\
M\big(X_{(x,y+\frac{1}{2})}\big) &= \tau^x_{(x,y)}.
\end{aligned}
\end{equation}
However, the image of this map famously exhibits Kramers-Wannier duality~\cite{kramers1941statistics}, implying two distinct ways of mapping to Pauli operators for each row of $\Lambda$ and for each sector (electric or magnetic) within a given row, yielding $2^{2L_y}$ distinct ways of mapping $H$ to a stack of transverse-field Ising models. If we demand that the same choice is made in every row and between electric and magnetic sectors, this vast freedom reduces to two possible choices, namely the mapping Eq. \eqref{eq:directmapping} and its Kramers-Wannier dual map $\tilde{M}$, involving Pauli spin variables $\tilde{\sigma}$ (on sites of $\Lambda$) and $\tilde{\tau}$ (on horizontal bonds of $\Lambda$) such that
\begin{equation}
\label{eq:dualmapping}
\begin{aligned}
\tilde{M}\big(A_{(x,y)}\big) &= \tilde{\sigma}^x_{(x,y)}, \\ 
\tilde{M}\big(Z_{(x+\frac{1}{2},y)}\big) &= \tilde{\sigma}^z_{(x,y)}\tilde{\sigma}^z_{(x+1,y)},\\ 
\tilde{M}\big(B_{(x,y)}\big) &= \tilde{\tau}^x_{(x+\frac{1}{2},y)}, \\
\tilde{M}\big(X_{(x,y+\frac{1}{2})}\big) &= \tilde{\tau}^z_{(x-\frac{1}{2},y)}\tilde{\tau}^z_{(x+\frac{1}{2},y)}.
\end{aligned}
\end{equation}
Under this dual map, we find that the decomposition into a stack of independent transverse-field Ising models Eq. \eqref{eq:Isingstack} continues to hold up to an exchange of coupling constants, i.e.
\begin{equation}
\label{eq:hedual}
H^{e}_y = - h_Z \sum_{x\in \mathbb{Z}} \tilde{\sigma}^z_{(x,y)}\tilde{\sigma}^z_{(x+1,y)} - \sum_{x \in \mathbb{Z}} \tilde{\sigma}^x_{(x,y)} 
\end{equation}
and 
\begin{equation}
\label{eq:hmdual}
H^{m}_y = - h_X \sum_{x\in \mathbb{Z}} \tilde{\tau}^z_{(x-\frac{1}{2},y)}\tilde{\tau}^z_{(x+\frac{1}{2},y)} - \sum_{x \in \mathbb{Z}} \tilde{\tau}^x_{(x+\frac{1}{2},y)}.
\end{equation}
Comparison with Eqs. \eqref{eq:directE} and \eqref{eq:directB} reveals that if there is a unique phase transition along the ``self-dual line'' $h_Z = h_X = \alpha > 0$ in Fig. \ref{fig:anyons}, it must occur at $\alpha=1$, which provides a non-perturbative way of understanding the location of this phase transition. 

In terms of the Kramers-Wannier duality mapping $D$, which can be defined explicitly as
\begin{equation}
\begin{aligned}
D\big(\sigma^z_{(x-\frac{1}{2},y)} \sigma^z_{(x+\frac{1}{2},y)}\big) &= \tilde{\sigma}^x_{(x,y)}, \\ 
D\big(\sigma^x_{(x+\frac{1}{2},y)}\big) &= \tilde{\sigma}^z_{(x,y)}\tilde{\sigma}^z_{(x+1,y)},\\ 
D\big(\tau^z_{(x,y)} \tau^z_{(x+1,y)}\big) &= \tilde{\tau}^x_{(x+\frac{1}{2},y)}, \\
D\big(\tau^x_{(x,y)}\big) &= \tilde{\tau}^z_{(x-\frac{1}{2},y)}\tilde{\tau}^z_{(x+\frac{1}{2},y)},
\end{aligned}
\end{equation}
we have
\begin{equation}
\tilde{M} = D \circ M.
\end{equation}



\subsubsection{Subsystem symmetry and SSET order} \label{sec:sset}

As discussed in the introduction, the Hamiltonian $H$ can also be understood as a model with subsystem symmetry. Namely, because of the anisotropy of the magnetic field, $H$ commutes with all Wilson loops $\Gamma^e_c$ ($\Gamma^m_c$) where $c$ is a perfectly horizontal closed loop on the direct (dual) lattice. There are two such operators for each row $y=1,2,\ldots,L_y$. Thus the Hamiltonian $H$ enjoys a much higher degree of symmetry than more generic~\cite{TCPD} magnetic-field perturbations of the toric code.

However, we do not consider all of these Wilson loops to be symmetries. This is because a single Wilson loop, when truncated, creates anyons at its endpoints. These anyons cannot be removed by acting only near the endpoints of the Wilson loop. Therefore, a single Wilson loop cannot be ``localized'' or used to define a string order parameter. This is due to the fact that the 1-form symmetry generated by the Wilson loops is spontaneously broken. A pair of adjacent Wilson loops, however, is preserved since it creates a pair of anyons at its endpoints, and this pair can be annihilated by acting with a local operator. Therefore, we define the subsystem symmetries of $H$ to be generated by all pairs of adjacent, horizontal Wilson loops. This matches the structure of the symmetries derived in Sec.~\ref{sec:tc comp a}. Furthermore, under the duality mapping in Eq.~\ref{eq:dualmapping}, such symmetries correspond to the $\mathbb{Z}_2$ symmetries of the individual Ising chains, which provides a second intuitive picture of the origin of the subsystem symmetries (alongside the conservation law along rows discussed in Sec.~\ref{sec:summary}).



\begin{figure}
    \centering
    \includegraphics[width=0.6\linewidth]{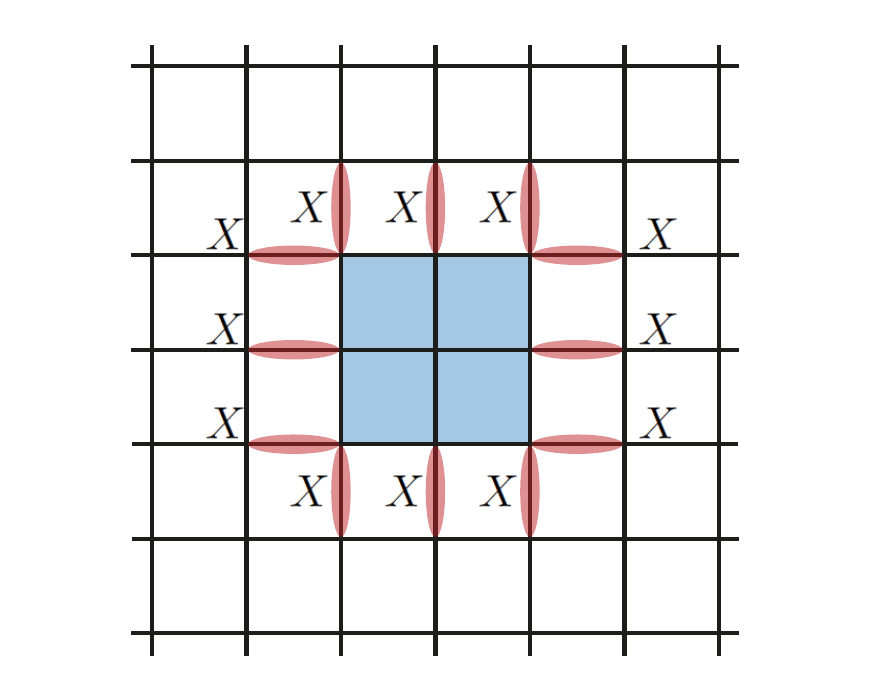}
    \caption{Truncating the global relation in the $e$-sector to a finite region $\mathcal{R}$ (shaded region) is equal to a Wilson loop around the boundary of that region. The truncation for the $m$-sector is similar.}
    \label{fig:relation}
\end{figure}

We now demonstrate that the toric code belongs to a non-trivial SSET phase protected by these symmetries.
For now, we return to the thermodynamic limit and define the symmetries more precisely as
\begin{equation} \label{eq:subsystem_symmetries_inf}
    S(g_y^e) = \prod_{x\in\mathbb{Z}} A_{(x,y)},\qquad S(g_y^m) = \prod_{x\in\mathbb{Z}} B_{(x,y)},
\end{equation}
which are equal to products of $X$ and $Z$, respectively, on neighbouring rows, as in Fig.~\ref{fig:subsystem_symms}.
These symmetries generate an infinite group. However, the symmetries are not all independent, as they satisfy a pair of \textit{global relations},
\begin{equation} \label{eq:global_relation}
    \prod_{y\in \mathbb{Z}} S(g_y^e) = \prod_{x,y\in\mathbb{Z}\times\mathbb{Z}} A_{(x,y)} = 1,\qquad \prod_{y\in \mathbb{Z}} S(g_y^m) = \prod_{x,y\in\mathbb{Z}\times\mathbb{Z}} B_{(x,y)} = 1
\end{equation} 
According to the procedure described in Ref.~\cite{Stephen2022}, identifying global relations between symmetry generators is the first step in determining which SSET phase a model belongs to. The next step is to truncate these global relations to finite regions. We first define the truncation of the symmetry generators to finite intervals $\mathcal{I}_x$,
\begin{equation}
    S_{\mathcal{I}_x}(g_y^e) = \prod_{x\in \mathcal{I}_x} A_{(x,y)},\qquad S_{\mathcal{I}_x}(g_y^m) = \prod_{x\in \mathcal{I}_x} B_{(x,y)},
\end{equation}
These truncations are defined such that they act the same as the untruncated generators inside the region $\mathcal{I}_x$ while still commuting with the Hamiltonian. They are in fact identical to the string-order parameters pictured in Fig.~\ref{fig:subsystem_symms}. Now, we truncate the global relation by restricting the product in Eq.~\ref{eq:global_relation} to a finite interval $\mathcal{I}_y$,
\begin{equation}
    \prod_{y\in \mathcal{I}_y} S_{\mathcal{I}_x}(g_y^e)  = \prod_{x,y\in \mathcal{I}_x\times \mathcal{I}_y} A_{(x,y)} \equiv W[\Gamma^m_{\partial \mathcal{R}}],\qquad  \prod_{y\in \mathcal{I}_y} S_{\mathcal{I}_x}(g_y^m)  = \prod_{x,y\in \mathcal{I}_x\times \mathcal{I}_y} B_{(x,y)} \equiv W[\Gamma^e_{\partial \mathcal{R}}]
\end{equation}
where $\mathcal{R}=\mathcal{I}_x\times \mathcal{I}_y$ is a finite rectangular patch on the lattice and $\partial \mathcal{R}$ is its boundary. We see that restricting the global relation to a finite patch $\mathcal{R}$ is equivalent to the action of a Wilson loop around the boundary of $\mathcal{R}$, as in Fig.~\ref{fig:relation}. More specifically, we have the following correspondence,
\begin{equation} \label{eq:invariants}
    \phi\left(  \prod_{y\in \mathbb{Z}} S(g_y^e)\right) = m, \qquad \phi\left(  \prod_{y\in \mathbb{Z}} S(g_y^m)\right) = e
\end{equation}
where the function $\phi$ takes in a global relation and returns the label of the superselection sector (i.e. anyon type) corresponding to the Wilson loop that appears at the boundary of the truncated relation. It is shown in Ref.~\cite{Stephen2022} that the function $\phi$ is invariant under symmetry-preserving perturbations and that it completely characterizes the pattern of subsystem symmetry fractionalization, and therefore the SSET phase of the model. The fact that $\phi$ is a non-trivial map shows that the subsystem symmetries have a non-trivial interplay with the anyons. In the present model, this is not so surprising since the symmetries themselves are products of Wilson loops. In more general cases, this interplay is not obvious but can nonetheless be revealed by the above truncation procedure \cite{Stephen2022}.

The non-trivial subsystem symmetry fractionalization in this model is related to the restricted mobility of its anyons. As we have seen in Fig.~\ref{fig:anyons}, the symmetry restrictions only allow anyons to be created in pairs belonging to the same row. Equivalently, symmetry-preserving operations can only move a single anyon horizontally. This is in fact a necessary consequence of subsystem symmetry fractionalization, as shown in Ref.~\cite{Stephen2022}.

Finally, we observe that the above calculations were performed entirely in a finite region of the bulk of the lattice. Therefore, even if we consider a finite lattice with arbitrary boundary conditions, the above procedure can still be performed in the bulk to recover the invariants in Eq.~\ref{eq:invariants}. This is even true in the case that the global relations are violated near the boundary of the lattice. This is akin to the observation that conventional SPT phases are still well-defined even when the bulk symmetry needs to be modified near the boundary to commute with the Hamiltonian. 

%
%
%
%
%

\subsection{Relating computational order to string order}
\label{sec:tc relations}
We would now like to relate the notion of computational order, as discussed above, to ground-state topological order of the Hamiltonian $H_{\mathrm{MBQC}}$. This immediately leads us to the following tension: MBQC pertains to finite systems by construction, while phases of matter are only sharply defined in the thermodynamic limit. \\

In order to make a meaningful comparison between the two concepts, we must therefore take a suitable limit of MBQC. The simplest procedure is to return to the thermodynamic limit of the toric code Hamiltonian $H$ in Eq. \eqref{eq:TCH} and consider the analogue of the ``loop operators'' Eqs. \eqref{eq:tc cop e} and \eqref{eq:tc cop m}, which in this setting correspond to horizontal, contractible Wilson loops,
\begin{equation}
R^e_{y} = \prod_{x=0}^{\ell} A_{(x,y)}, \quad R^m_y = \prod_{x=0}^{\ell-1} B{(x,y)}.
\end{equation}
We would like to compare these objects to purely horizontal string operators
\begin{equation}
\label{eq:horizontal strings}
W[\Gamma^e_o] = \prod_{x=0}^{\ell-1}  Z_{(x+\frac12,y)}, \quad W[\Gamma^m_o] = \prod_{x=0}^{\ell}  X_{(x,y+\frac12)}.
\end{equation}
Under the mapping $M$, we find that the loop operators detect ferromagnetic correlations in the Ising model,
\begin{equation}
\label{eq:MtoComp}
M(R^e_y) = \sigma^z_{(-\frac12,y)}\sigma^z_{(l+\frac12,y)}, \quad  M(R^m_y) = \tau^z_{(0,y)} \tau^z_{(\ell,y)},
\end{equation}
while the string operators that diagnose the topological order map to Ising string operators of length order $\ell$,
\begin{equation}
M(W[\Gamma^e_o]) = \prod_{x'=0}^{\ell-1} \sigma^x_{(x'+\frac12,y)}, \quad M(W[\Gamma^m_o]) = \prod_{x'=0}^{\ell} \tau^x_{(x',y)}.
\end{equation}
This might na{\"i}vely suggest that the computational and topological order parameters are probing distinct kinds of order in this model. However, under Kramers-Wannier duality these two kinds of order are interchanged. Specifically, under the mapping $\tilde{M}$ we find that loop operators now map to Ising-string-type operators,
\begin{equation}
\tilde{M}(R^e_y) = \prod_{x'=0}^{\ell} \tilde{\sigma}_{(x',y)}^x, \quad  \tilde{M}(R^m_y) = \prod_{x'=0}^{\ell-1} \tilde{\tau}^x_{(x'+\frac12,y)}
\end{equation}
while string operators now detect ferromagnetic correlations,
\begin{equation}
\label{eq:MtildetoString}
\tilde{M}(W[\Gamma^e_o]) = \tilde{\sigma}^z_{(0,y)} \tilde{\sigma}^z_{(\ell,y)}, \quad \tilde{M}(W[\Gamma^m_o]) = \tilde{\tau}^z_{(-\frac12,y)} \tilde{\tau}^z_{(\ell+\frac12,y)}.
\end{equation}
Let us now take stock of what we have shown. Taking $\ell \to \infty$, Eq. \eqref{eq:MtildetoString} implies that ground-state topological order of $H$, as detected by string order, is equivalent to paramagnetic order in the Kramers-Wannier dual Ising models Eq. \eqref{eq:hedual} and \eqref{eq:hmdual}, whose effective transverse-field strengths $\tilde{h}_X = h_X^{-1}$ and $\tilde{h}_Z = h_Z^{-1}$ respectively. Meanwhile, Eq. \eqref{eq:MtoComp} implies that ground-state computational order of $H$, as detected by ``loop order'', is equivalent to ferromagnetic order in the Ising models Eq. \eqref{eq:directE} and Eq. \eqref{eq:directB} with effective transverse field strengths $h_X$ and $h_Z$ respectively. It follows that in the thermodynamic limit, the computational and topological phases of $H$ are identical and arise for $|h_X|, \, |h_Z| < 1$ as depicted in Fig. \ref{fig:anyons}.

More abstractly, this discussion reveals that for the anisotropic toric code Hamiltonian $H$, computational and topological order are ``two sides of the same coin'': their respective loop and string order parameters both probe ferromagnetic order, but for small perturbations $|h_X|,\,|h_X| \ll 1$ this ferromagnetic order (or lack thereof) arises in very different models, weakly and strongly coupled Ising chains respectively, which have qualitatively distinct physics but are nevertheless related by a Kramers-Wannier duality.

It is instructive to check numerically that our conclusions about the computational ground-state phase of $H$, which were based on reasoning in the thermodynamic limit, are borne out in finite systems, which are required for MBQC. To this end we simulate the true, finite-size MBQC Hamiltonian Eq. \eqref{eq:tch_mbqc} and its associated loop operators \eqref{eq:tc cop e} and \eqref{eq:tc cop m}. We compare the behaviour of these operators to the finite-system string operators
\begin{equation}
    W[\Gamma_o^e] = \prod_{x=\frac{L_x}{2}}^{L_x-1} Z_{(x+\frac12,y)}, \quad W[\Gamma_o^m] = \prod_{x=\frac{L_x}{2}}^{L_x-1}  X_{(x,y+\frac12)}.
\end{equation}
By the discussion above, we expect finite-size loop and string operators to probe the same phase transition at $h_X= h_Z = \alpha=1$, and to have nonzero ground-state expectation values on opposite sides of this transition. These expectations are confirmed by a density matrix renormalization group (DMRG) simulation of the ground state of $H_{\mathrm{MBQC}}$, as shown in Fig. \ref{fig:dmrg results} (see Appendix \ref{sec:numerics} for technical details).

\begin{figure}
    \centering
    \begin{tabular}{cc}
         \includegraphics[width=0.48\textwidth]{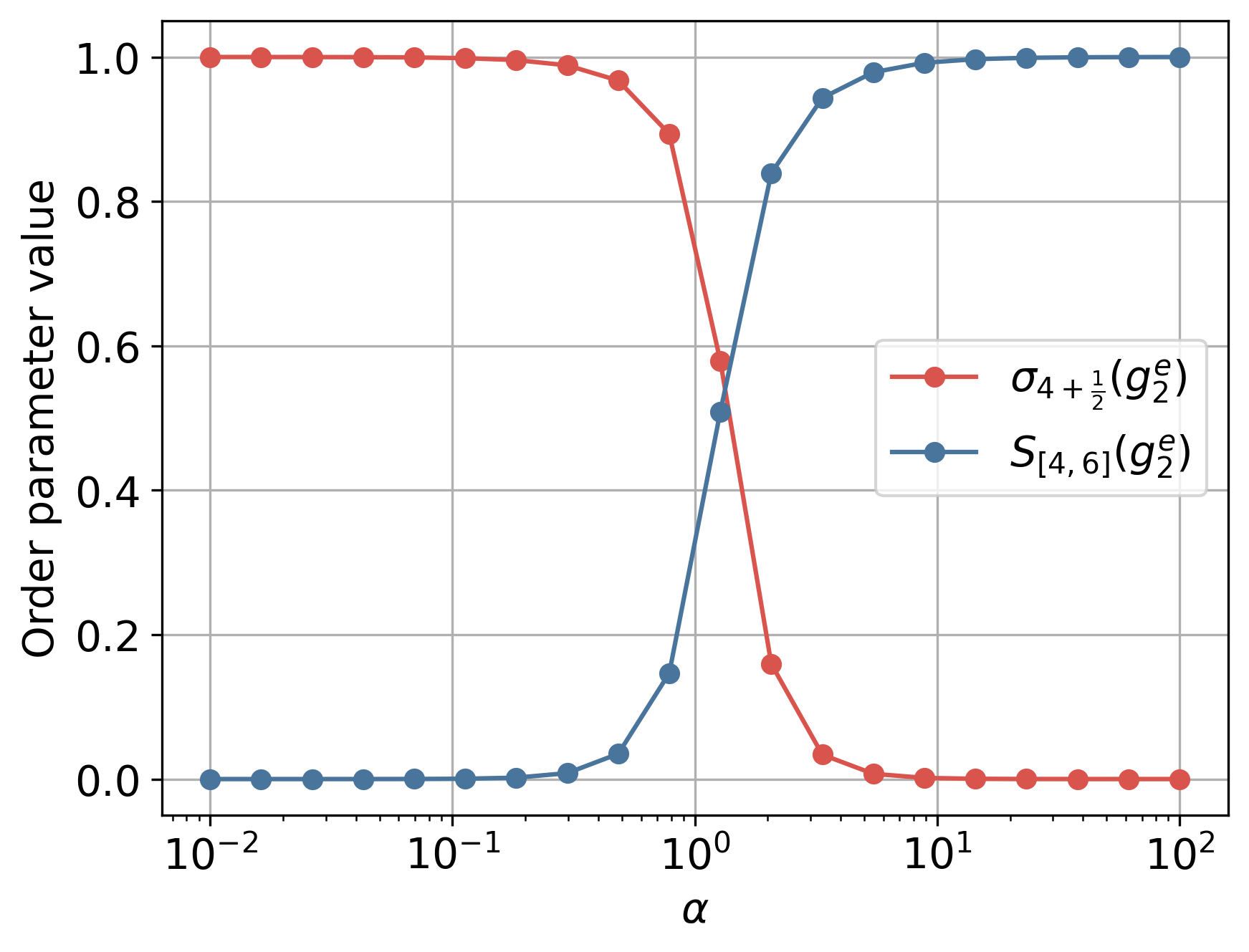} & 
         \includegraphics[width=0.48\textwidth]{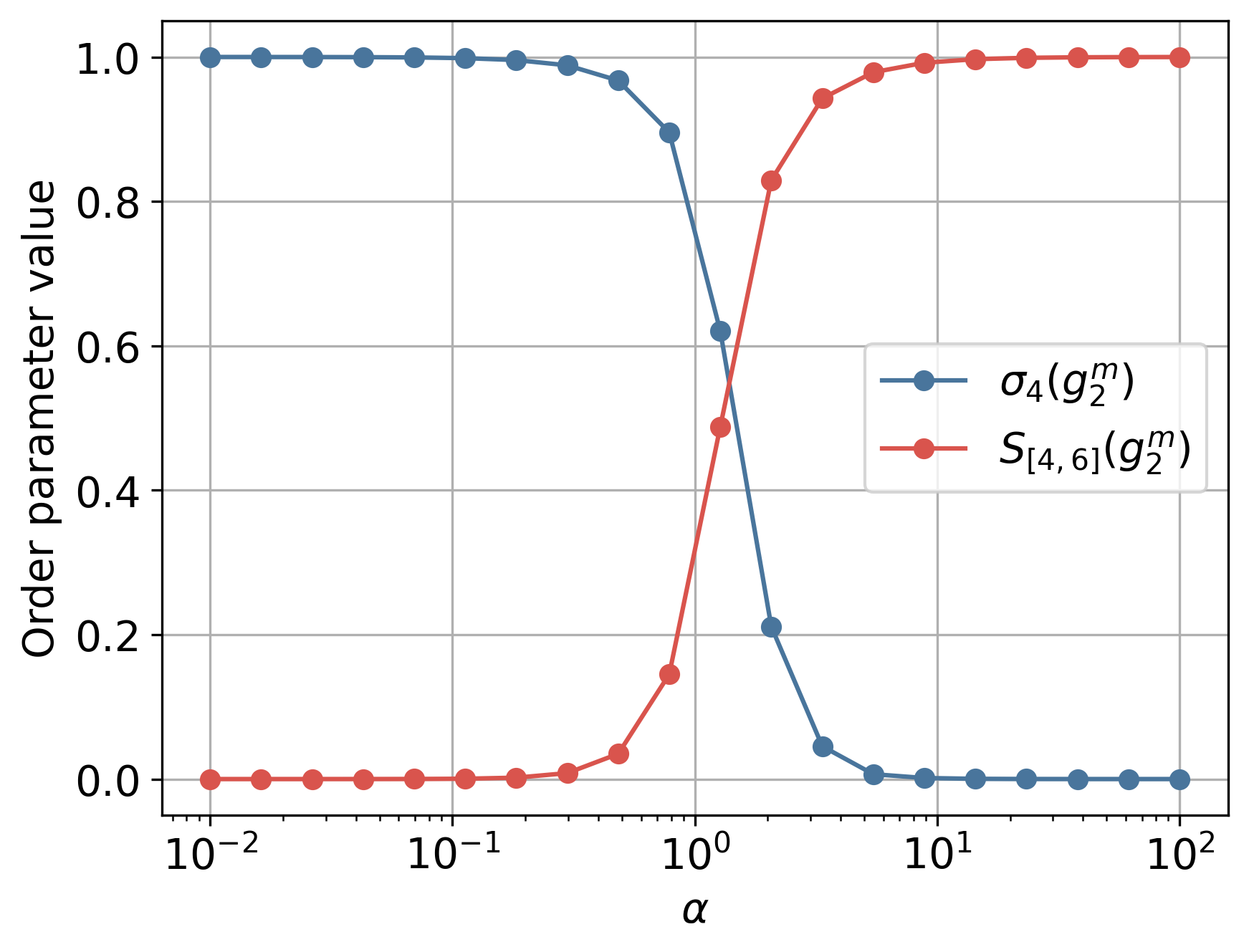} \\
         (a) & (b) 
    \end{tabular}
    \caption{DMRG simulation of (a) electric and (b) magnetic order parameters in the ground state of $H_{\mathrm{MBQC}}$ with $L_x=7, L_y=3$. We observe that the computational order parameters $\sigma_k$ and string order parameters $S_{[a,b]}$ are non-zero on opposite sides of the expected phase transition at $\alpha=1$. For details of the simulation techniques, see Appendix~\ref{sec:numerics}.}
    \label{fig:dmrg results}
\end{figure}

\section{A universal MBQC resource state with SET order}
\label{sec:xz star}

The toric code model described above is well-understood from the perspective of topological order and condensed-matter physics, but from the perspective of MBQC it has only limited computational power. As discussed in Section \ref{sec:tc comp}, the Lie group of gates available for MBQC with the toric code ground state is $\SO(2L_y)$, which is not a universal group of gates -- in fact, quantum computation restricted to operations from $SO(2L_y)$ is efficiently classically simulable~\cite{valiant2001quantum,Terhal_2002}. In other words, while the computational power of the toric code is non-trivial, it is far from optimal. Given this, one might ask whether topologically ordered states limit the computational power available for MBQC.

We answer this question in the negative by presenting a state that is computationally universal for MBQC, and also topologically ordered. Our example is the ground state of a stabilizer Hamiltonian on a square 2D lattice with qubits on the vertices, and dimensions $L_x\times L_y$. We call this model the $XZ$-star model, because the Hamiltonian is a sum of star-shaped operators
\begin{equation}
    H_{*} = -\sum_x\sum_{y\text{ even}} C_{x,y} -\sum_x\sum_{y\text{ odd}} D_{x,y}
\end{equation}
where
\begin{equation}
    C_{x,y} = X_{x-1,y}X_{x,y}X_{x+1,y}X_{x,y-1}X_{x,y+1}, \quad D_{x,y} = Z_{x-1,y}Z_{x,y}Z_{x+1,y}Z_{x,y-1}Z_{x,y+1}.
\end{equation}

\subsection{Topological order}
\label{sec:xz topo}
The elementary excitations of $H_{*}$ come in pairs and are created by string operators $W[\Gamma^e]$ and $W[\Gamma^o]$ acting along strings $\Gamma^e$ ($\Gamma^o$) that start and end on sites with even (odd) $y$ coordinate. Given a string $\Gamma^e$ starting at site $(x,y)$, $W[\Gamma^e]$ acts in a 3-periodic pattern $Z_{x,y}Z_{x+1,y}I_{x+2,y}...$ when moving horizontally and on every second site, i.e. $Z_{x,y}I_{x,y+1}...$ when moving vertically. $W[\Gamma^o]$ acts in the same pattern but with $X$ instead of $Z$. Thus the excitations can only move by hopping 3 sites horizontally or 2 sites vertically. 

There are 4 independent types of excitations in the model: $e$ ($\bar{e}$) created in pairs by $W[\Gamma^e]$ starting at $x=0$ ($x=1$) mod $3$, respectively, and $m$ ($\bar{m}$) created in pairs by $W[\Gamma^o]$ starting at $x=1$ ($x=0$) mod $3$. From the commutation relations of the string operators we can deduce that $\{e, \bar{e}, m, \bar{m}\}$ reproduce the anyon theory of two copies of the toric code, therefore we conclude that there is a finite-depth quantum circuit that maps $H_{*}$ onto two copies of the toric code Hamiltonian \cite{Haah2021}. However, such a circuit generically has a non-trivial interplay with the kinds of logical operations that can be implemented with single-site measurements, so this conclusion does not prohibit computational universality.

\subsection{Subsystem symmetries and SSET order}
\label{sec:xz sset}

The $XZ$-star model admits a set of subsystem symmetries that are similar to those in our toric code model:
\begin{equation}
\label{eq:xz symms}
\begin{aligned}
    &U(g_{110}^y) = \prod_{x=0,1, \, \mathrm{mod} \, 3} C_{x,y}, \quad &&U(g_{011}^y) = \prod_{x=1,2, \, \mathrm{mod} \, 3} C_{x,y} \\
    &U(\bar{g}_{110}^y) = \prod_{x=0,1, \, \mathrm{mod} \, 3} D_{x,y}, \quad &&U(\bar{g}_{011}^y) = \prod_{x=1,2, \, \mathrm{mod} \, 3} D_{x,y}.
\end{aligned}
\end{equation}
These are parallel lines of $X$ or $Z$ in a 3-periodic pattern. $U(g)$ acts on odd rows and $U(\bar{g})$ acts on even rows. Since we have yet to perturb the model, $H_{*}$ actually has a much larger set of symmetries. Here, we have chosen a minimal set of symmetries that are needed to protect the universal computational power of the model, as we demonstrate shortly. 

These symmetries admit a set of global relations, namely,
\begin{equation}
    \begin{aligned}
        &\prod_{y \text{ even}} U(g^y_{110}) = 1,\quad && \prod_{y \text{ even}} U(g^y_{011}) = 1 \\
        &\prod_{y \text{ odd}} U(\bar{g}^y_{110}) = 1,\quad && \prod_{y \text{ odd}} U(\bar{g}^y_{011}) = 1
    \end{aligned}
\end{equation}
As for the toric code, we determine the SSET phase that $H_{*}$ belongs to by calculating the function $\phi$ that identifies the anyon type of the boundary Wilson loop that arises upon restricting these relations to rectangular regions. Skipping the details, we find that
\begin{equation}
    \begin{aligned}
        &\phi\left(\prod_{y \text{ even}} U(g^y_{110})\right)= m,\quad &&\phi\left( \prod_{y \text{ even}} U(g^y_{011})\right)= \bar{m} \\
        &\phi\left(\prod_{y \text{ odd}} U(\bar{g}^y_{110})\right)= e,\quad && \phi\left(\prod_{y \text{ odd}} U(\bar{g}^y_{011})\right)= \bar{e}
    \end{aligned}
\end{equation}
This is the same fractionalization pattern as two decoupled toric codes, with each individual toric code equipped with the subsystem symmetry described in Sec.~\ref{sec:tc comp}. Such a decoupled stack of toric codes is not universal for MBQC, since we have shown that a single toric code is not universal and single-site measurements cannot couple the two states. Thus, even though $H_{*}$ and two toric codes share the same topological order and the same kind of SSET order, they do not have the same computational power. This matches similar observations that have been made for SPT phases \cite{Stephen2019subsystem}. The explanation lies in the symmetry representation, which differs between $H_{*}$ and two decoupled toric codes. In the context of MBQC, the microscopic representation of the symmetry of the lattice is an important ingredient due to the restriction to single-site measurements. In the context of phases of matter, the representation often plays a secondary role since it can change under real-space renormalization or local quantum circuits. We conclude that computational universality of the ground states of $H_{*}$ lies not in their topological or SSET order, but rather in the particular way that the two contituent toric codes have been intertwined. 


\subsection{Computational order}
\label{sec:xz comp}

For MBQC we will consider open boundary conditions, with $x=0,1,\ldots,L_x$, $y=1,2,\ldots,L_y$, and $L_x$ divisible by 3. As in the toric code example, we modify the stabilizers at the boundary by cutting them off. This causes the subsystem symmetries to act projectively at the boundaries. Specifically, we have the projective representations
\begin{equation}
\label{eq:xz proj}
\begin{aligned}
    &V_L(g_{110}^y) = X_{0,y} 
    && V_R(g_{110}^y) = X_{L_x,y} \\
    &V_L(\bar{g}_{110}^y) = Z_{0,y}
    &&V_R(\bar{g}_{110}^y) = Z_{L_x,y} \\
    &V_L(g_{011}^y) = X_{0,y-1}X_{0,y+1} 
    && V_R(g_{011}^y) = X_{L_x,y-1}X_{L_x,y+1} \\
    &V_L(\bar{g}_{011}^y) = Z_{0,y-1}Z_{0,y+1}
    && V_R(\bar{g}_{011}^y) = Z_{L_x,y-1}Z_{L_x,y+1} \\
    &V_L(g_{101}^y) = X_{0,y-1}X_{0,y}X_{0,y+1}
    &&V_R(g_{101}^y) = X_{L_x,y-1}X_{L_x,y}X_{L_x,y+1} \\
    &V_L(\bar{g}_{101}^y) = X_{0,y-1}X_{0,y}X_{0,y+1} 
    && V_R(\bar{g}_{101}^y) = Z_{L_x,y-1}Z_{L_x,y}Z_{L_x,y+1} .
\end{aligned}
\end{equation}
A maximal abelian subgroup of $V_R(G)$ is generated by $\{V_R(\bar{g}_{110}^y), V_R(\bar{g}_{011}^y):y=1,2,\ldots,L_y\}$. Therefore, we define $H = \expval{\bar{g}_{110}^y, \bar{g}_{011}^y:y=1,2,\ldots,L_y}$
and choose the logical observables $T(G)$ according to \eqref{eq:logi_obs}. The rank of $H$ is $L_y-1$, therefore we have $L_y-1$ logical qubits.

To determine the Lie group of gates for MBQC, we need to construct the operators $R_k(g)$. As in the toric code example, we do this by solving \eqref{eq:localizability condition} for every site $(k,y)$ in the bulk. This yields
\begin{equation}
\label{eq:xz R}
\begin{aligned}
    &R_k(g_{110}^y) = \prod_{x>k, x=0,1, \, \mathrm{mod} \, 3} C_{x,y}, \quad
    &&R_k(\bar{g}_{110}^y) = \prod_{x>k, x=0,1, \, \mathrm{mod} \, 3} D_{x,y}, \\
    &R_k(g_{011}^y) = \prod_{x>k, x=1,2, \, \mathrm{mod} \, 3} C_{x,y},\quad
    &&R_k(\bar{g}_{011}^y) = \prod_{x>k, x=1,2, \, \mathrm{mod} \, 3} D_{x,y}, \\
    &R_k(g_{101}^y) = \prod_{x>k, x=0,2, \, \mathrm{mod} \, 3} C_{x,y},\quad
    &&R_k(\bar{g}_{101}^y) = \prod_{x>k, x=0,2, \, \mathrm{mod} \, 3} D_{x,y}.
\end{aligned}
\end{equation}
Therefore, we have $\mathcal{G} = \{g_{110}^y, \bar{g}_{110}^y, g_{011}^y, \bar{g}_{011}^y, g_{101}^y, \bar{g}_{101}^y:y=1,...,L_y\}$. 

The Lie algebra of MBQC gates is generated by the projective representations $V_R(g), g\in \mathcal{G}$, which are precisely the representations given in \eqref{eq:xz proj}. To determine which Lie algebra is generated, we note that any two sets of $n$-qubit Pauli operators with the same commutation relations generate the same Lie algebra. Thus we need only find a set of Paulis with the same commutation relations as \eqref{eq:xz proj} and such that the Lie algebra that they generate is known. One set satisfying these conditions is the set of terms from the Heisenberg chain Hamiltonian: $\{X_iX_{i+1}, Y_iY_{i+1}, Z_iZ_{i+1}:i=1,...,L_y-1\}$. The algebra generated by this set -- and therefore by $V_R(\mathcal{G})$ -- is shown in Theorem IV.1 of~\cite{wiersema2023classificationdynamicalliealgebras} to be
\begin{equation}
    \mathcal{A} =
    \begin{cases}
        \su(2^{L_y-1}) & L_y \text{ odd} \\
        \su(2^{L_y-2})^{\oplus 4} & L_y \text{ even}.
    \end{cases}
\end{equation}
Thus the Lie group $\mathcal{L}$ of MBQC gates is $\SU(2^{L_y-1})$ for $L_y$ odd and $\SU(2^{L_y-2})$ for $L_y$ even, which are universal gate sets for $L_y-1$ and $L_y-2$ qubits, respectively.

\section{Discussion}
\label{sec:disc}

Our formalism for computational order in 2D systems allows us to describe the first examples of computational phases of matter with topological order. Our results take the notion of computational order beyond SPT systems, which were previously the only systems known to exhibit computational phases. In other words, computational order is not simply a reformulation of SPT order.

However, computational order also fails to correspond neatly to SET order or SSET order, as illustrated by our $XZ$-star model. Instead, there appears to be a more complicated interplay between computational power and the action of the symmetries that protect it. It appears to us that understanding the precise nature of this relationship between symmetry, computational order, and physical order is an important direction for future work.

Subsystem symmetry seems particularly significant from this perspective because it appears in nearly all examples of non-trivial computational order, including those presented here. Now that we can characterize the computational power of long-range entangled systems, it would be interesting to study the models described in Ref.~\cite{Stephen2020,Stephen2022} from the perspective of MBQC. For example, for the models described in Ref.~\cite{Stephen2022}, there are simple external fields that drive the model to a trivial SSET phase without destroying the topological order. Studying this transition would allow the roles of topological order and subsystem symmetry enrichment in MBQC universality to be more clearly distinguished.

This generalization of utility for MBQC from SPT phases to topological phases mirrors a similar generalization that was obtained recently in the setting of quantum nonlocal games~\cite{brassard2005quantum}. The first works on playing nonlocal games with ground states of local Hamiltonians found that a robust quantum advantage for such games could be obtained in SPT~\cite{daniel2021quantum,bulchandani2023playing} and conventional symmetry-breaking~\cite{bulchandani2023playing} phases of matter, but an apparent lack of robustness in topological phases, for example, for nonlocal games played in the topological phase of the toric code perturbed by bond-isotropic magnetic fields~\cite{bulchandani2023playing,bulchandani2023multiplayer}. This lack of robustness was remedied by a game that could be played with local~\cite{hart2024playing}, rather than nonlocal~\cite{bulchandani2023multiplayer}, stabilizers of the toric code. Thus, although quantum advantage for nonlocal games appears to be less demanding than fully-fledged MBQC (no subsystem symmetry is required), the properties that a quantum phase of matter must possess in order to yield a robust quantum advantage for nonlocal games (in the sense of e.g. Ref.~\cite{lin2023quantum}) remain obscure even in this simpler setting.

More broadly, our results highlight the significant overlap between objects of interest for MBQC---symmetries, order parameters, and logical observables---and objects of interest to condensed matter physics. For the examples studied in this paper, especially the anisotropic toric code, we were able to make these connections quantitatively precise, but there appears to be plenty of scope for developing broader and more systematic connections between computational power for MBQC and the kinds of quantum states that arise naturally in condensed matter physics. 

\section{Acknowledgments}
We thank Arnab Adhikary, Ehud Altman, Michael Knap, Tobias J. Osborne and Frank Pollmann for helpful discussions. PH is funded by the National Science and Engineering Research Council of Canada (NSERC). VBB thanks the Simons Institute for the Theory of Computing for their hospitality during part of the completion of this work. YJ acknowledges funding by the Ministry of Science and Culture of Lower Saxony through \textit{Quantum Valley Lower Saxony Q1 }(QVLS-Q1). DTS acknowledges support from the Simons Collaboration on Ultra-Quantum Matter, which is a grant from the Simons Foundation (651440). RR is funded by the Humboldt Foundation.

\bibliography{bibl}

\newpage
\appendix

\section{Proof of Theorem \ref{thm:power}}
\label{sec:power proof}

Recall Theorem \ref{thm:power}:

\power* 

This proof is a generalization of the proof given in~\cite{Raussendorf2023measurementbased} to 2D systems. We will make use of the operators
\begin{equation}
\label{eq:l def}
    L_k(g) = T(g)R_k(g)
\end{equation}
which act non-trivially on the region to the left of site $k$. Before we continue, it will be useful to establish some commutation relations between the various operators defined so far.

\begin{lemma} \label{lemma:comm rel}
    The symmetry representations $U(G)$, $T(G)$, $R_k(\mathcal{G}_k)$, and $L_k(\mathcal{G}_k)$ have the following commutation relations:
    \begin{align}
        &\comm{U(g)}{R_k(g_k)} = 0 \quad g\in G, \; g_k\in \mathcal{G}, \;k\in B \label{eq:comm u r} \\
        &\comm{T(g)}{R_k(g_k)} = 0 \quad g\in G, \; g_k\in\mathcal{G}, \;k\in B \label{eq:comm t r} \\
        &T_k(g')T(g) = (-1)^{\kappa(g,g')} T(g)T_k(g'), \quad g, g'\in G \label{eq:comm t t}  \\
        &L_k(g_k)T(g) = (-1)^{\kappa(g,g_k)} T(g)L_k(g_k), \quad g\in G, \; g_k\in \mathcal{G}, \;k\in B\label{eq:comm t l} 
    \end{align}
\end{lemma}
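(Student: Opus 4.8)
The plan is to prove the four relations in the order \eqref{eq:comm u r} $\to$ \eqref{eq:comm t r} $\to$ \eqref{eq:comm t t} $\to$ \eqref{eq:comm t l}, since each later identity is cleanest when obtained from the earlier ones together with the decomposition $U(g)=V_L(g)\otimes U_B(g)\otimes V_R(g)$ and the case split in \eqref{eq:logi_obs}. Throughout I would use that $\kappa$ is symmetric: from $V_L(g)V_L(g')=(-1)^{\kappa(g,g')}V_L(g')V_L(g)$ one reads off $\kappa(g,g')=\kappa(g',g)$, so any sign of the form $(-1)^{2\kappa}$ is trivial. The one structural input needed beyond the definitions is that the chosen $H$ makes $V_R(H)$ (equivalently $V_L(H)$) mutually commuting, i.e. $\kappa|_{H\times H}=0$; this is exactly the content of $T(H)$ being a \emph{maximal abelian} subgroup of $T(G)$ invoked in Lemma~\ref{lemma:logi_init}.

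For \eqref{eq:comm u r} I would track the sign picked up when sliding $U(g)$ past $R_k(g_k)=\hat\beta_k(g_k)\otimes U_{>k}(g_k)$. Since $R_k(g_k)$ is trivial strictly left of $k$ and agrees with $U(g_k)$ strictly right of $k$, the only possible signs come from site $k$ and from the right boundary. At site $k$ the localizability equation \eqref{eq:localizability condition} gives $\hat u_k(g)\hat\beta_k(g_k)=(-1)^{\kappa(g_k,g)}\hat\beta_k(g_k)\hat u_k(g)$, while at the right boundary $V_R(g)$ and $V_R(g_k)$ contribute $(-1)^{\kappa(g,g_k)}$; these two factors multiply to $(-1)^{2\kappa(g,g_k)}=1$, so $R_k(g_k)$ commutes with every $U(g)$. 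This is essentially a restatement of the construction of $R_k$, but writing the two competing signs makes the cancellation explicit.

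Relation \eqref{eq:comm t r} then requires no new computation: by \eqref{eq:logi_obs}, $T(g)$ differs from $U(g)$ only on the left boundary $L$ (it carries either $V_L(g)$ or $I_L$ there), whereas $R_k(g_k)$ for $k\in B$ is supported on site $k$ and on sites strictly to its right, hence disjoint from $L$. The commutator of $T(g)$ with $R_k(g_k)$ is therefore fixed by the bulk and right-boundary tensor factors alone, which coincide with those of $U(g)$, so \eqref{eq:comm t r} is inherited directly from \eqref{eq:comm u r}. For \eqref{eq:comm t t} I would argue by the same three cases appearing in \eqref{eq:logi_obs}: the bulk factors always commute and the right-boundary factors always contribute $(-1)^{\kappa(g,g')}$, while the left boundary contributes a further $(-1)^{\kappa(g,g')}$ when both $g,g'\in H$ and nothing otherwise. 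When exactly one or neither of $g,g'$ lies in $H$ this reproduces the claimed $(-1)^{\kappa(g,g')}$, and when both lie in $H$ the two boundary signs cancel to $+1$, which agrees with $(-1)^{\kappa(g,g')}$ precisely because $\kappa|_{H\times H}=0$.

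Finally, \eqref{eq:comm t l} drops out by combining the previous two relations with the definition $L_k(g_k)=T(g_k)R_k(g_k)$ from \eqref{eq:l def}: moving $R_k(g_k)$ through $T(g)$ costs nothing by \eqref{eq:comm t r}, and then $T(g_k)T(g)=(-1)^{\kappa(g,g_k)}T(g)T(g_k)$ by \eqref{eq:comm t t}, yielding $L_k(g_k)T(g)=(-1)^{\kappa(g,g_k)}T(g)L_k(g_k)$ immediately. The only genuine obstacle I anticipate is the bookkeeping in \eqref{eq:comm t t}: one must confirm that $\kappa$ restricted to $H$ actually vanishes, rather than merely that $T(H)$ commutes (the latter is automatic since $T|_H=U|_H$ is a linear representation and $G$ is abelian), and this is where the maximal-abelian choice of $H$ must be used. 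Everything else is a disciplined accounting of Pauli signs across the three regions $L$, $B$, $R$.
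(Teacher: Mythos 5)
Your proposal is correct and follows essentially the same route as the paper's proof: \eqref{eq:comm u r} from the defining construction of $R_k$ via \eqref{eq:localizability condition}, \eqref{eq:comm t r} from the fact that $T(g)$ and $U(g)$ differ only on the left boundary which is disjoint from $\supp R_k(g_k)$, \eqref{eq:comm t t} from the $L/B/R$ decomposition with the left-boundary factors of $T(H)$ arranged to commute, and \eqref{eq:comm t l} by combining the previous two with \eqref{eq:l def}. Your extra bookkeeping — the explicit cancellation of the site-$k$ and right-boundary signs in \eqref{eq:comm u r}, and the observation that \eqref{eq:comm t t} really needs $\kappa|_{H\times H}=0$ (i.e.\ that $V_R(H)$ is abelian, as made precise in the examples and in Lemma~\ref{lemma:local meas}) rather than merely that $T(H)$ commutes — is consistent with, and slightly more careful than, the paper's argument.
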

\begin{proof}
    The operators $R_k(g)$ are already constructed such that \eqref{eq:comm u r} holds. $T(g)$ differ from $U(g)$ only on the left boundary, so \eqref{eq:comm t r} follows from \eqref{eq:comm u r}. To prove \eqref{eq:comm t t}, recall that $T(g)$ are constructed such that they all commute on the left boundary, so they inherit their commutation relations from the right boundary, which are precisely the relations encoded by $\kappa(g,g')$. Finally, for \eqref{eq:comm t l}, we can apply the definition \eqref{eq:l def} and \eqref{eq:comm t r} to show that
    \begin{equation}
        L_k(g_k)T(g) = T(g_k)R_k(g_k)T(g) = (-1)^{\kappa(g,g_k)}T(g)T(g_k)R_k(g_k) = (-1)^{\kappa(g,g_k)}T(g)L_k(g_k).
    \end{equation}
\end{proof}


The rotation operators obtained by exponentiating $L_k(g_k)$ are
\begin{equation}
    M_{k}(g_{k}, \theta_{k}) := \exp(-i\frac{\theta_{k}}{2} L_{k}(g_{k})) \qquad g_{k} \in \mathcal{G}_{k}.
\end{equation}
For each rotation there is a corresponding logical rotation
\begin{equation}
    M_{\log,k}(g_k, \theta_k) = \exp(-i\frac{\theta_k}{2}T(g_k))
\end{equation}
and a logical CPTP map
\begin{equation}
    \mathcal{M}_k(g_k, \theta_k) = \frac{1+\sigma_k(g_k)}{2}\qty[M_{\log,k}(\theta_k)] + \frac{1-\sigma_k(g_k)}{2}\qty[M^\dagger_{\log,k}(\theta_k)]
\end{equation}
In the following lemma, we will show that conjugation by $M_k$ is equivalent to the logical CPTP map $\mathcal{M}_k$, in the sense that it reproduces the same measurement statistics for all logical observables.

\begin{lemma} \label{lemma:maps}
    Consider a state $\ket{\Phi}$ satisfying conditions \ref{item:finite spread}, \ref{item:good stabs} and \ref{item:g symmetric} of Theorem \ref{thm:power}, and a sequence of sites $\{k_1, , ..., k_f\}$ such that the horizontal distance between site $k_i$ and $k_{i-1}$ is at least $2\Delta+d$ for all $i$.
    Then the following equation holds:
    \begin{equation}
        \bra{\Phi} M^\dagger_{k_1} ... M^\dagger_{k_f} T(g) M_{k_f} ... M_{k_1} \ket{\Phi} = \Tr\qty[T(g)\mathcal{M}_{k_f} ... \mathcal{M}_{k_1} (\dyad{\Phi}{\Phi})]. 
    \end{equation}
\end{lemma}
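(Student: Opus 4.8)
The plan is to prove the identity by induction on the number of rotations $f$, peeling off the innermost factor $M_{k_f}$ at each step. The two ingredients are an exact single-rotation conjugation formula and the decorrelation of loop operators supplied by Lemma~\ref{lemma:decorr}. First I would record that $L_k(g_k)=T(g_k)R_k(g_k)$ is (proportional to) a Hermitian Pauli with $L_k(g_k)^2=I$, so that $M_k=\cos(\theta_k/2)\,I-i\sin(\theta_k/2)\,L_k(g_k)$. Conjugating a logical observable and using \eqref{eq:comm t l} to split into the commuting and anticommuting cases, together with $\comm{T(g_k)}{R_k(g_k)}=0$ from \eqref{eq:comm t r} to write $e^{-i\theta_k L_k(g_k)}=\cos\theta_k\,I-i\sin\theta_k\,T(g_k)R_k(g_k)$, yields
\[
M_k^\dagger T(g)M_k=\begin{cases}T(g), & \kappa(g,g_k)=0,\\ \cos\theta_k\,T(g)-i\sin\theta_k\,T(g)T(g_k)R_k(g_k), & \kappa(g,g_k)=1.\end{cases}
\]

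For the base case ($f=1$), I would take the expectation of this operator and compare it with $\expval{\mathcal{M}_k(T(g))}_\Phi$, where in the Heisenberg picture $\mathcal{M}_k$ is the adjoint of the channel in the statement. The entire content then collapses to the single identity $\expval{T(g)T(g_k)R_k(g_k)} = \sigma_k(g_k)\expval{T(g)T(g_k)}$. Writing $T(g)T(g_k)=\pm T(gg_k)$, this follows \emph{purely from symmetry}, with no decorrelation required: if $gg_k\notin H$, then by Lemma~\ref{lemma:logi_init} and maximality of $H$ there is an $h\in H$ with $\kappa(gg_k,h)=1$; since $U(h)=T(h)$ commutes with $R_k(g_k)$ by \eqref{eq:comm u r} but anticommutes with $T(gg_k)$, and $U(h)\ket\Phi=(-1)^{\chi(h)}\ket\Phi$, both sides vanish. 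If $gg_k\in H$, then $T(gg_k)=U(gg_k)$ is a genuine symmetry commuting with $R_k(g_k)$, and $U(gg_k)\ket\Phi=(-1)^{\chi(gg_k)}\ket\Phi$ gives the claim with $\sigma_k(g_k)=\expval{R_k(g_k)}$.

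For $f>1$, peeling off the innermost $M_{k_f}$ produces the loop factor $R_{k_f}(g_{k_f})$, which by \eqref{eq:comm t r} and the mutual commutativity of the $R_k$'s (they are built from the commuting bulk symmetry representation together with single-site corrections) commutes with every remaining $M_{k_i}$, $i<f$, and can be pulled outside the remaining conjugation $\mathcal{C}(\cdot)=M_{k_1}^\dagger\cdots M_{k_{f-1}}^\dagger(\cdot)M_{k_{f-1}}\cdots M_{k_1}$. Applying the inductive hypothesis to $\expval{\mathcal{C}(T(g))}$ and to $\expval{\mathcal{C}(T(g)T(g_{k_f}))}=\expval{\mathcal{C}(\pm T(gg_{k_f}))}$, and reassembling through the single-step formula applied to the processed state $\mathcal{M}_{k_{f-1}}\cdots\mathcal{M}_{k_1}(\dyad{\Phi}{\Phi})$, the matching of the two sides reduces to a family of identities
\[
\expval{T(h)\,R_{k_{i_1}}(g_{k_{i_1}})\cdots R_{k_{i_j}}(g_{k_{i_j}})}=\sigma_{k_{i_1}}(g_{k_{i_1}})\cdots\sigma_{k_{i_j}}(g_{k_{i_j}})\,\expval{T(h)},
\]
indexed by subsets $\{i_1,\dots,i_j\}\subseteq\{1,\dots,f\}$, with $h$ the product of $g$ with the corresponding $g_{k_i}$. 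The symmetry argument of the base case disposes of every $h\notin H$ instance, and also of the $j\le 1$ instances, so that the only genuinely new content is the multiplicativity of loop expectations $\expval{\prod_l R_{k_{i_l}}}=\prod_l \sigma_{k_{i_l}}$ arising for $h\in H$ and $j\ge 2$.

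The main obstacle is establishing this multiplicativity from Lemma~\ref{lemma:decorr}. The difficulty is that distinct loop operators are \emph{not} supported on disjoint regions: each $R_k(g_k)$ runs from its anchor site out to the right boundary, so two loops in the same row overlap along their entire right-hand tail, and a naive disjoint-support decorrelation fails. The plan is to peel off one loop at a time and apply Lemma~\ref{lemma:decorr} with $P=R_{k_{i_l}}$ and $Q=\prod_{l'<l}R_{k_{i_{l'}}}$, exploiting the freedom in the choice of boundary-crossing generators $\partial S(\mathcal{N}_\Delta(P))$: because consecutive anchor sites are separated horizontally by at least $2\Delta+d$, the generators crossing the left edge of $\mathcal{N}_\Delta(R_{k_{i_l}})$ can be taken strictly to the left of $\mathcal{N}_\Delta(Q)$, exactly as in cases (i)--(iii) following Lemma~\ref{lemma:decorr}. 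The delicate step I expect to argue carefully is that the shared right-hand tails do not obstruct this choice---intuitively because, after telescoping a family of coterminous loops into compact segments, only the well-separated left endpoints carry nontrivial correlations, and these are precisely the correlations that the $2\Delta+d$ buffer decouples (consistent with the Ising clustering picture of Section~\ref{sec:tc relations}). Verifying this reorganization uniformly over all combinations of rows, anyon types, and boundary geometries is where the bulk of the technical work lies.
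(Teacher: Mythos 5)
Your skeleton (induction on $f$, the exact single-rotation conjugation formula, and the reduction of the whole lemma to the family of identities $\expval{T(h)\,R_{k_{i_1}}\cdots R_{k_{i_j}}}=\sigma_{k_{i_1}}\cdots\sigma_{k_{i_j}}\expval{T(h)}$) is sound, and your observations that the base case and all $h\notin H$ and $j\le 1$ instances follow from symmetry alone are correct and slightly sharper than what the paper states. The gap is in the one step you flag as ``delicate'': Lemma~\ref{lemma:decorr} with $P=R_{k_{i_l}}$ and $Q=\prod_{l'<l}R_{k_{i_{l'}}}$ cannot be made to work by any choice of $\partial S(\mathcal{N}_\Delta(P))$. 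Every $R_k(g_k)=\hat\beta_k(g_k)\otimes U_{>k}(g_k)$ is supported on the entire half-lattice $\{\ge k\}$, so $\supp(P)$ is (up to the anchor site) \emph{contained in} $\supp(Q)$ and $\mathcal{N}_\Delta(Q)$ swallows essentially all of $\mathcal{N}_\Delta(P)$, including its boundary; the hypothesis of Lemma~\ref{lemma:decorr} is then unsatisfiable unless $\partial S(\mathcal{N}_\Delta(P))$ is empty. Telescoping coterminous loops into compact segments does not rescue this: it replaces $P$ and $Q$ by different operators, so the factorization you get is no longer $\expval{P}\expval{Q}$ for the original loops, and it says nothing about loops carrying different group elements or living in different rows, which still share their right-hand tails. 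Indeed, read naively as clustering between the two loops, the identity is not even of the right form: in the Ising picture of Section~\ref{sec:tc relations}, $\expval{R_{k_1}R_{k_2}}$ for two same-row loops is a single two-point function $\expval{\tau^z_{k_1}\tau^z_{k_2}}$, while $\sigma_{k_1}\sigma_{k_2}$ is a product of two boundary-anchored two-point functions; the equality holds only because the symmetry ties the two boundary endpoints together, which is exactly the information a direct $P$--$Q$ pairing of the loops discards.

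The missing idea --- and the way the paper's proof is organized --- is to use the symmetry to \emph{localize} before decorrelating, so that only one loop at a time is ever split off, and always against an operator supported strictly to the left. Concretely, since $T(g_k)R_k(g_k)=L_k(g_k)$ is supported on $\{\le k\}$ by construction, and $U^\dagger(h)T(h)$ is supported on the left boundary, the accumulated operator $U^\dagger(g g_{k_i})M_{k_1}^\dagger\cdots M_{k_{i-1}}^\dagger T(g)T(g_{k_i})M_{k_{i-1}}\cdots M_{k_1}$ is manifestly supported on $\{\le k_{i-1}\}$, and Lemma~\ref{lemma:decorr} pairs it against the \emph{single} loop $R_{k_i}$ supported on $\{\ge k_i\}$, a genuinely disjoint, well-separated pair. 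This yields the one-step identity $\bra{\Phi(k_{i-1})}T(g)L_{k_i}(g_{k_i})\ket{\Phi(k_{i-1})}=\sigma_{k_i}(g_{k_i})\bra{\Phi(k_{i-1})}T(g)T(g_{k_i})\ket{\Phi(k_{i-1})}$, after which the recursion closes on expectation values of logical observables alone; products of two or more $R$'s never have to be analyzed, and the case-by-case geometry you anticipate (rows, anyon types, boundary types) never arises. In your framework the same trick proves your multiplicative identities directly --- e.g.\ $\expval{U(h)R_{k_1}R_{k_2}}=\pm\expval{\bigl(U(h')R_{k_1}\bigr)R_{k_2}}$ with $U(h')R_{k_1}$ left-supported for the appropriate $h'$ --- but without it the crucial step of your induction does not go through.
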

\begin{proof}
    We will prove a single step of the evolution and the rest follows by induction. Let $\ket{\Phi(k_i)} = M_{k_i} ... M_{k_1} \ket{\Phi}$, and consider the expectation value 
    \begin{equation}
        \bra{\Phi(k_{i-1})} M_{k_i}^\dagger T(g) M_{k_i} \ket{\Phi(k_{i-1})}
    \end{equation}
    If $T(g_{k_i})$ commutes with $T(g)$, then by Lemma \ref{lemma:comm rel} so does $L_{k}(g_{k_i})$ and we have
    \begin{equation} \label{eq:TL commute}
        \bra{\Phi(k_{i-1})} M_{k_i}^\dagger T(g) M_{k_i} \ket{\Phi(k_{i-1})} = \bra{\Phi(k_{i-1})} T(g) \ket{\Phi(k_{i-1})}.
    \end{equation}
    If $T(g_{k_i})$ does not commute with $T(g)$ then it must anticommute, in which case
    \begin{equation} \label{eq:TL anticommute}
        \begin{aligned}
            &\bra{\Phi(k_{i-1})} M_{k_i}^\dagger T(g) M_{k_i} \ket{\Phi(k_{i-1})} = \\
            & \qquad \qquad \cos(\theta_{k_i})\bra{\Phi(k_{i-1})} T(g) \ket{\Phi(k_{i-1})} - i\sin(\theta_{k_i}) \bra{\Phi(k_{i-1})} T(g) L_{k_i}(g_{k_i}) \ket{\Phi(k_{i-1})} 
        \end{aligned}
    \end{equation}
    For the second term, we have 
    \begin{equation} \label{eq:pull out R}
        \begin{aligned}
            \bra{\Phi(k_{i-1})} T(g) L_{k_i}(g_{k_i}) \ket{\Phi(k_{i-1})} &= \bra{\Phi(k_{i-1})} T(g) T(g_{k_i}) R_{k_i}(g_{k_i}) \ket{\Phi(k_{i-1})} \\
            &= \bra{\Phi} M^\dagger_{k_1} ... M^\dagger_{k_{i-1}} T(g) T(g_{k_i}) M_{k_{i-1}} ... M_{k_1} R_{k_i}(g_{k_i}) \ket{\Phi} \\ 
            &= \bra{\Phi} U^\dagger(g g_{k_i}) M^\dagger_{k_1} ... M^\dagger_{k_{i-1}} T(g) T(g_{k_i}) M_{k_{i-1}} ... M_{k_1} R_{k_i}(g_{k_i}) \ket{\Phi} \\
            &= \bra{\Phi} U^\dagger(g g_{k_i}) M^\dagger_{k_1} ... M^\dagger_{k_{i-1}} T(g) T(g_{k_i}) M_{k_{i-1}} ... M_{k_1} \ket{\Phi}\bra{\Phi}R_{k_i}(g_{k_i}) \ket{\Phi} \\
            &= \sigma_{k_i}(g_{k_i}) \bra{\Phi(k_{i-1})} T(g) T(g_{k_i}) \ket{\Phi(k_{i-1})}, 
        \end{aligned}
    \end{equation}
    In the first line we used \eqref{eq:l def} and in the second line we moved $R_{k_i}$ past $M_{k_{i-1}} ... M_{k_1}$ because they have disjoint support. In the third line we pull out a global symmetry $U(g g_{k_i})$ and note that
    \begin{equation}
        \supp(U^\dagger(g g_{k_i}) M^\dagger_{k_1} ... M^\dagger_{k_{i-1}} T(g) T(g_{k_i}) M_{k_{i-1}} ... M_{k_1}) \subseteq \{\leq k_{i-1}\} 
    \end{equation} 
    while $\supp(R_{k_i}(g_{k_i})) \subseteq \{\geq k_i\}$, which allows us to apply Lemma \ref{lemma:decorr} in the fourth line.

    Combining \eqref{eq:TL commute}-\eqref{eq:pull out R}, we have 
    \begin{equation} 
        \begin{aligned}
            \bra{\Phi(k_{i-1})} M_{k_i}^\dagger T(g) M_{k_i} \ket{\Phi(k_{i-1})} &= \frac{1+\sigma_{k_i}(g_{k_i})}{2} \bra{\Phi(k_{i-1})} M_{\log,k_i}^\dagger T(g) M_{\log,k_i} \ket{\Phi(k_{i-1})} \\ 
            &\qquad + \frac{1-\sigma_{k_i}(g_{k_i})}{2} \bra{\Phi(k_{i-1})} M_{\log,k_i} T(g) M_{\log,k_i}^\dagger \ket{\Phi(k_{i-1})} \\
            &= \bra{\Phi(k_{i-1})} \mathcal{M}_{k_i}^\dagger (T(g)) \ket{\Phi(k_{i-1})} \\
            &= \Tr[T(g) \mathcal{M}_{k_i} (\ket{\Phi(k_{i-1})}\bra{\Phi(k_{i-1})}) ].
        \end{aligned}
    \end{equation}
\end{proof}
\begin{lemma} \label{lemma:local meas}
    Consider a state $\ket{\Phi}$ and sequence of sites $\{k_1,...,k_f\}$ as in Lemma \ref{lemma:maps}. The expectation value
    \begin{equation}
        \bra{\Phi} M^\dagger_{k_1} ... M^\dagger_{k_d} T(h) M_{k_d} ... M_{k_1} \ket{\Phi}, \qquad h\in H
    \end{equation}
    can be reproduced by adaptive single-qubit measurements on $\ket{\Phi}$ for any subgroup $H\subset G$ such that $V_R(H)$ is abelian.
\end{lemma}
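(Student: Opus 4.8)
The plan is to show that the sequence of rotated single-qubit measurements defining the MBQC protocol reproduces the left-hand side as an average over measurement outcomes, by connecting each local measurement to conjugation by the corresponding $M_{k_i}$ and then reading out $T(h)$ on the boundary. First I would reduce the rotated-basis measurement at each bulk site $k_i$ to a standard-basis one: since $O_{k_i}(g_{k_i},\theta_{k_i}) = e^{-i\frac{\theta_{k_i}}{2}\hat\beta_{k_i}(g_{k_i})}\hat\alpha_{k_i}e^{i\frac{\theta_{k_i}}{2}\hat\beta_{k_i}(g_{k_i})}$, its spectral projectors factor as $\Pi_{k_i}^{\pm} = e^{-i\frac{\theta_{k_i}}{2}\hat\beta_{k_i}(g_{k_i})}\tfrac12(I\pm\hat\alpha_{k_i})e^{i\frac{\theta_{k_i}}{2}\hat\beta_{k_i}(g_{k_i})}$, so measuring $O_{k_i}$ is equivalent to applying the single-site rotation $e^{i\frac{\theta_{k_i}}{2}\hat\beta_{k_i}(g_{k_i})}$ and then measuring $\hat\alpha_{k_i}$.

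Next I would promote each single-site rotation to the nonlocal operator $M_{k_i}$. Using $R_{k_i}(g_{k_i}) = \hat\beta_{k_i}(g_{k_i})\otimes U_{>k_i}(g_{k_i})$ together with $L_{k_i}(g_{k_i}) = T(g_{k_i})R_{k_i}(g_{k_i})$, the single-site generator $\hat\beta_{k_i}$ differs from $L_{k_i}$ only by factors supported at sites $\neq k_i$, namely $U_{>k_i}(g_{k_i})$ on the right and the left-supported part of $T(g_{k_i})$. The rightward factor acts on sites that are either measured later or lie on the right boundary, and the symmetry $U(g)\ket\Phi = (-1)^{\chi(g)}\ket\Phi$ together with the commutation relations of Lemma \ref{lemma:comm rel} let me absorb these factors into the subsequent measurements and the final readout, leaving precisely the conjugation by $M_{k_i}$ acting on $\ket\Phi$. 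This is the step where outcome-dependent Pauli byproduct operators appear; they are compensated by adapting the signs of later rotation angles $\theta_{k_{i+1}},\dots$ and of the final readout, which is exactly what makes the protocol adaptive. The spatial separation $2\Delta+d$ ensures, via Lemma \ref{lemma:decorr}, that these manipulations introduce no uncontrolled correlations, matching the hypotheses inherited from Lemma \ref{lemma:maps}.

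Finally I would perform the readout of $T(h)$. For $h\in H$ we have $T(h) = U(h) = V_L(h)\otimes U_B(h)\otimes V_R(h)$, a product of commuting single-site Paulis, so its value on any run is the product of the recorded bulk outcomes (which supply $U_B(h)$), the left-boundary outcomes (which supply $V_L(h)$), and a right-boundary measurement of $V_R(h)$. Because $V_R(H)$ is abelian by hypothesis, all operators $V_R(h)$ with $h\in H$ are jointly measurable in a single right-boundary basis, so this readout is consistent across the whole subgroup; this is precisely where the abelian assumption enters. Averaging the product of outcomes over runs then yields $\bra\Phi M^\dagger_{k_1}\cdots M^\dagger_{k_f}\,T(h)\,M_{k_f}\cdots M_{k_1}\ket\Phi$, which one may cross-check against the form $\Tr[T(h)\,\mathcal{M}_{k_f}\cdots\mathcal{M}_{k_1}(\dyad{\Phi}{\Phi})]$ supplied by Lemma \ref{lemma:maps}.

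The hard part will be the promotion step: carefully accounting for the rightward tails $U_{>k_i}(g_{k_i})$ and the induced byproduct operators so that the composition of local measurements equals the intended product of nonlocal conjugations, rather than agreeing only for a single $h$. Making the bookkeeping of byproducts and the adaptive corrections precise, while invoking Lemma \ref{lemma:comm rel} to commute symmetry factors past the $M_{k_i}$ and Lemma \ref{lemma:decorr} to control correlations, is the technical heart of the argument.
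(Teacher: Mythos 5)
Your overall strategy is the same as the paper's: reduce each rotated-basis measurement to a single-site rotation followed by a fixed Pauli measurement, propagate the non-local tails of the rotation generators as outcome-dependent corrections to later measurement angles, and read out $V_R(h)$ on the right boundary, where the abelianness of $V_R(H)$ enters exactly as you say. However, your proposal explicitly defers what you yourself call ``the technical heart of the argument''---showing that the composition of adaptive local measurements equals conjugation by $M_{k_f}\cdots M_{k_1}$---and that step is the entire content of the lemma, so as written this is a plan rather than a proof. The paper closes this gap by introducing intermediate logical observables $T_{\leq k_i}(g) = M_{\leq k_i}^\dagger\, U_L(g)\otimes U^B_{\leq k_i}(g)\, M_{\leq k_i}$ and deriving an exact recursion: the key identity is $M_{\leq k_i}^\dagger M_{k_{i+1}} M_{\leq k_i} = \exp\bigl(-i\tfrac{\theta_{k_{i+1}}}{2}\, T_{\leq k_i}(g_{k_{i+1}})\otimes\hat\beta_{k_{i+1}}(g_{k_{i+1}})\bigr)$, which shows that once the eigenvalue $\lambda_{\leq k_i}(g_{k_{i+1}})$ of $T_{\leq k_i}(g_{k_{i+1}})$ has been inferred from earlier outcomes, the next measurement basis is a genuinely single-qubit rotated basis with angle sign $\lambda_{\leq k_i}(g_{k_{i+1}})\,\theta_{k_{i+1}}$. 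This is the precise form of the ``byproduct bookkeeping'' you gesture at, and the induction on $i$ is what guarantees it works simultaneously for all $g\in G$, not just a single $h$.

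Two smaller points. First, your invocation of Lemma \ref{lemma:decorr} here is misplaced: the paper's proof of this lemma is an exact operator identity and uses no decorrelation whatsoever; decorrelation (and the $2\Delta+d$ separation) is needed only in Lemma \ref{lemma:maps}, where the physical evolution is related to the logical CPTP maps. Second, your readout step asserts that the bulk outcomes ``supply $U_B(h)$,'' but the bulk sites at $k_1,\dots,k_f$ are measured in rotated bases, not in the eigenbasis of $\hat u_{k_i}(h)$; the reason their outcomes nonetheless determine the needed eigenvalues is again the recursion, which shows that the relevant conjugated observable $O_{\leq k_i}(g,g_{k_i},\theta_{k_i})$ is a function of the single rotated Pauli $\hat\alpha_{k_i}$ actually measured. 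Supplying these details would turn your sketch into the paper's proof.
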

\begin{proof}
    Denote
    \begin{equation} 
        U_L(g) = 
        \begin{cases}
        V_L(g), & g \in H \\
        I_L, & g \in G \setminus H
        \end{cases}
    \end{equation}
    and let $U^B_{<k}(g)$ the restriction of $U_B(g)$ to $\{<k\}$. Note that this is different than $U_{<k}(g) = V_L(g) \otimes U^B_{<k}(g)$.
    
    Let $M_{\leq k_i} = M_{k_i}...M_{k_1}$, and define the intermediate logical observables 
    \begin{equation}
    \label{eq:inter obs}
    \begin{aligned}
        &T_{0}(g) = U_L(g) \\
        &T_{\leq k_i}(g) = M_{\leq k_i}^\dagger U_L(g) \otimes U^B_{\leq k_i}(g) M_{\leq k_i}.
    \end{aligned}
    \end{equation}
    Also, let $U^B_{k_i:k_j}(g) = \bigotimes_{x \in \{< k_j\} \setminus \{\leq k_i\}} \hat{u}_x(g)$ be the restriction of $U_B(g)$ to sites whose horizontal position is between $k_i$ and $k_j$, and let $\partial\{< k\}$ be the set of sites on the boundary of $\{< k\}$. The intermediate observables satisfy a recursion relation
    \begin{equation} \label{eq:recursion v1}
        \begin{aligned}
            T_{\leq k_{i+1}}(g) &= M_{\leq k_{i+1}}^\dagger U_L(g) \otimes U^B_{\leq k_{i+1}}(g) M_{\leq k_{i+1}} \\
            &= M_{\leq k_{i+1}}^\dagger \qty( U_L(g) \otimes U^B_{\leq k_i}(g) \otimes U^B_{k_i:k_{i+1}}(g) \otimes \qty(\bigotimes_{j\in\partial \{< k_{i+1}\}} \hat{u}_j(g) )) M_{\leq k_{i+1}} \\
            &= M_{\leq k_i}^\dagger U_L(g) \otimes U^B_{\leq k_i}(g) M_{\leq k_i} \cdot U^B_{k_i:k_{i+1}}(g) \cdot \qty( \bigotimes_{j\in\partial \{< k_{i+1}\} \setminus k_{i+1}} \hat{u}_j(g) ) \cdot M_{\leq k_{i+1}}^\dagger \hat{u}_{k_{i+1}}(g) M_{\leq k_{i+1}} \\
            &= T_{\leq k_i}(g) \cdot U^B_{k_i:k_{i+1}}(g) \cdot \qty( \bigotimes_{j\in\partial \{< k_{i+1}\} \setminus k_{i+1}} \hat{u}_j(g) ) \cdot M_{\leq k_i}^\dagger M_{k_{i+1}}^\dagger M_{\leq k_i} \hat{u}_{k_{i+1}}(g) M_{\leq k_i}^\dagger M_{k_{i+1}} M_{\leq k_i} \\
        \end{aligned}
    \end{equation}
    where in the third and fourth lines we use that $L_{k_i}(g_{k_i})$ -- and therefore $M_{k_i}$ -- commutes with $\hat{u}_j(g)$ for all $j\neq k_i$ and $g\in G$. The product of rotations in the last line simplifies to
    \begin{equation}
        \begin{aligned}
            M_{\leq k_i}^\dagger M_{k_{i+1}}^\dagger M_{\leq k_i} &= M_{\leq k_i}^\dagger \exp(-i\frac{\theta_{k_{i+1}}}{2} L_{k_{i+1}}(g_{k_{i+1}})) M_{\leq k_i} \\
            &=  \exp(-i\frac{\theta_{k_{i+1}}}{2} M_{\leq k_i}^\dagger L_{k_{i+1}}(g_{k_{i+1}})M_{\leq k_i} ) \\
            &= \exp(-i\frac{\theta_{k_{i+1}}}{2} T_{\leq k_i}(g_{k_{i+1}}) \otimes \hat{\beta}_{k_{i+1}}(g_{k_{i+1}}))
        \end{aligned}
    \end{equation}
    We use this to define the observables
    \begin{equation} \label{eq:local_observable}
        O_{\leq k_i}(g, g_{k_i}, \theta_{k_i}) = \exp(i\frac{\theta_{k_i}}{2} T_{\leq k_{i-1}}(g_{k_i})\otimes \hat{\beta}_{k_i}(g_{k_i})) \hat{u}_{k_i}(g) \exp(-i\frac{\theta_{k_i}}{2} T_{\leq k_{i-1}}(g_{k_i})\otimes \hat{\beta}_{k_i}(g_{k_i})).
    \end{equation}
    Then the recursion relation \eqref{eq:recursion v1} becomes
    \begin{equation} \label{eq:recursion v2}
        T_{\leq k_{i+1}}(g) = T_{\leq k_i}(g) \cdot U^B_{k_i:k_{i+1}}(g) \cdot \qty( \bigotimes_{j\in\partial \{< k_{i+1}\} \setminus k_{i+1}} \hat{u}_j(g) ) O_{\leq k_i}(g, g_{k_i}, \theta_{k_i}).
    \end{equation}
    
    Now we show by induction that the measurement outcome $\lambda_{\leq k_i}(g)$ of $T_{\leq k_i}(g)$ can be inferred from single-qubit measurements. For the first step, $T_{0}(g) = U_L(g)$, we have that $\lambda_{0}(g)$ can be simultaneously measured for all $g\in G$ because $U_L(G)$ are products of commuting single-qubit observables. Then we can replace $T_{0}(g)$ by $\lambda_{0}(g)$ in \eqref{eq:local_observable} and \eqref{eq:recursion v2}, yielding
     \begin{equation}
        T_{\leq k_1}(g) = \lambda_{0}(g) \cdot U^B_{<k_1}(g) \cdot \qty( \bigotimes_{j\in\partial \{< k_{1}\} \setminus k_{1}} \hat{u}_j(g) ) O_{k_1}(g, g_{k_1}, \theta_{k_1}),
    \end{equation}
    where
    \begin{equation}
        O_{k_1}(g,g_{k_1},\theta_{k_1}) =  \exp(i\frac{\lambda_{0}(g) \theta_{k_1}}{2} \hat{\beta}_{k_1}(g_{k_1})) \hat{u}_{k_1}(g) \exp(-i\frac{\lambda_{0}(g) \theta_{k_1}}{2} \hat{\beta}_{k_1}(g_{k_1}))
    \end{equation}
    The observables $U^B_{<k_1}(g) \cdot \qty( \bigotimes_{j\in\partial \{< k_{1}\} \setminus k_{1}} \hat{u}_j(g) )$ are a product of commuting single-qubit observables, so they can be measured simultaneously for all $g\in G$. Then note that the outcome of $O_{k_1}(g,g_{k_1},\theta_{k_1})$ for all $g\in G$ can be inferred by measuring the qubit at site $k_1$ in a rotated basis 
    \begin{equation}
        \exp(i\frac{\lambda_{0}(g) \theta_{k_1}}{2} \hat{\beta}_{k_1}(g_{k_1})) \hat{\alpha}_{k_1} \exp(-i\frac{\lambda_{0}(g) \theta_{k_1}}{2} \hat{\beta}_{k_1}(g_{k_1}))
    \end{equation}
    where $\hat{\alpha}_{k_1}\in \{X_{k_1},Y_{k_1},Z_{k_1}\}$ is the Pauli operator at site $k_1$ that commutes with $\hat{u}_{k_1}(G)$. Therefore, we can infer $\lambda_{\leq k_1}(g)$ for all $g\in G$ from single-qubit measurements. Similarly, we can infer $\lambda_{\leq k_{i+1}}(g)$ from $\lambda_{\leq k_i}(g)$.
    
    The induction continues until we reach $\lambda_{\leq k_f}(g)$. To obtain the measurement outcome of
    \begin{equation}
        M^\dagger_{k_1} ... M^\dagger_{k_d} T(g) M_{k_d} ... M_{k_1} = T_{\leq k_f}(g) \otimes U^B_{>k_d}(g) \otimes V_R(g),
    \end{equation}
    we need to measure $U^B_{>k_d}(g)$ and $V_R(g)$. $U^B_{>k_d}(g)$ is a linear representation, so it can be simultaneously measured for all $g\in G$, but $V_R(g)$ is a projective representation, thus we can only simultaneously measure a subgroup $H\subset G$ such that $V_R(H)$ is abelian.
\end{proof}

\begin{lemma}
\label{lemma:efficiency}
    Given a state $\ket{\Phi}$ satisfying conditions \ref{item:finite spread}, \ref{item:good stabs} and \ref{item:g symmetric} of Theorem \ref{thm:power}, MBQC can implement the logical unitaries $M_{\log,k}(g_k, \theta_k)$ with error $\epsilon$ under the diamond norm with the properties stated in Lemma 12 of Ref.~\cite{aharonov1998quantum} given by
    \begin{equation}
        \epsilon = \frac{\theta_k^2}{N}\frac{1-\sigma_k(g_k)^2}{\sigma_k(g_k)^2}
    \end{equation}
\end{lemma}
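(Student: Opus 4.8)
The plan is to obtain $M_{\log,k}(g_k,\theta_k)$ as the $N\to\infty$ limit of an $N$-fold composition of the CPTP maps $\mathcal{M}_k$, and to identify the stated $\epsilon$ with the variance of an associated biased random walk over rotation angles. The starting observation is that $T(g_k)$ is, up to phase, a Pauli operator, so $T(g_k)^2 = I$ and all the unitaries $M_{\log,k}(g_k,\phi) = \exp(-i\frac{\phi}{2}T(g_k))$ are rotations about a common axis; in particular they mutually commute, and $M_{\log,k}(g_k,-\phi) = M_{\log,k}^\dagger(g_k,\phi)$. Since $\mathcal{M}_k(g_k,\phi)$ is the convex mixture $\frac{1+\sigma_k(g_k)}{2}[M_{\log,k}(g_k,\phi)] + \frac{1-\sigma_k(g_k)}{2}[M_{\log,k}(g_k,-\phi)]$ of the two opposite rotations, and these commute, the composition $\mathcal{M}_k(g_k,\phi)^{\circ N}$ is again a mixture of rotations $[M_{\log,k}(g_k,\Theta)]$ with $\Theta = \phi\sum_{i=1}^N X_i$, where the $X_i\in\{+1,-1\}$ are i.i.d. with $P(X_i=+1) = \frac{1+\sigma_k(g_k)}{2}$.

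Next I would compute the first two moments of this walk. One has $\mathbb{E}[X_i] = \sigma_k(g_k)$ and $\mathrm{Var}(X_i) = 1-\sigma_k(g_k)^2$, so $\mathbb{E}[\Theta] = N\sigma_k(g_k)\phi$ and $\mathrm{Var}(\Theta) = N(1-\sigma_k(g_k)^2)\phi^2$. Choosing the per-step angle $\phi = \theta_k/(N\sigma_k(g_k))$ cancels the bias of the mixture and fixes the mean at the target value $\mathbb{E}[\Theta] = \theta_k$, while the variance becomes exactly
\begin{equation}
    \mathrm{Var}(\Theta) = \frac{\theta_k^2}{N}\frac{1-\sigma_k(g_k)^2}{\sigma_k(g_k)^2} = \epsilon.
\end{equation}
This rescaling is the reason the order parameter enters as $1/\sigma_k(g_k)^2$ rather than through the naive per-step choice $\theta_k/N$, which would leave the mean biased toward $\sigma_k(g_k)\theta_k$.

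It then remains to convert this variance into a diamond-norm error. Factoring $M_{\log,k}(g_k,\Theta) = M_{\log,k}(g_k,\theta_k)M_{\log,k}(g_k,\Theta-\theta_k)$ and using unitary invariance of the diamond norm, the distance between $\mathcal{M}_k(g_k,\phi)^{\circ N}$ and the target channel $[M_{\log,k}(g_k,\theta_k)]$ equals the distance between a mean-zero mixture $\sum_m p_m [M_{\log,k}(g_k,\delta_m)]$, with $\delta_m$ the centered angles, and the identity channel. Taylor-expanding $M_{\log,k}(g_k,\delta_m) = I - i\frac{\delta_m}{2}T(g_k) + O(\delta_m^2)$, the first-order contribution averages to zero by construction, so the leading residual is controlled by the second moment $\sum_m p_m\delta_m^2 = \mathrm{Var}(\Theta) = \epsilon$. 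Invoking Lemma 12 of Ref.~\cite{aharonov1998quantum} then packages this second-moment bound into a diamond-norm error with the accumulation properties quoted in the statement.

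I expect the last step to be the main obstacle: turning the variance of the rotation-angle distribution into a clean diamond-norm bound with the correct constant, and verifying that the error behaves additively under the composition guaranteed by the cited lemma, requires more care than the elementary moment computation of the middle step. A secondary point to verify is that the rescaling $\phi = \theta_k/(N\sigma_k(g_k))$ is legitimate within the adaptive single-qubit measurement scheme of Lemma~\ref{lemma:local meas}, i.e.\ that each short rotation $\mathcal{M}_k(g_k,\phi)$ can indeed be realized independently so that the overall measurement sequence reproduces the claimed mixture $\mathcal{M}_k(g_k,\phi)^{\circ N}$.
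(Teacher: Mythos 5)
Your proposal is correct, but it takes a genuinely different (and more self-contained) route than the paper: the paper's entire proof of this lemma is a one-line deferral to Appendix~B of Ref.~\cite{Raussendorf2023measurementbased}, where the random-walk argument is carried out, whereas you reconstruct that argument from scratch. Your reconstruction is sound: since the two unitaries in the mixture $\mathcal{M}_k$ are opposite rotations about the common axis $T(g_k)$, the $N$-fold composition is a classical mixture over total angles $\Theta=\phi\sum_i X_i$, and the choice $\phi=\theta_k/(N\sigma_k(g_k))$ centers the mean at $\theta_k$ while the variance reproduces the stated $\epsilon$ exactly. Notably, this rescaling by $1/\sigma_k(g_k)$ is a point on which your write-up is actually \emph{more} careful than the paper's main text, which loosely says ``$N$ consecutive rotations by $\theta_k/N$''; without the rescaling the mean would be biased to $\sigma_k(g_k)\theta_k$ and the channel would not converge to the target unitary as $N\to\infty$, so your version is the one consistent with the quoted error formula (and with the cited appendix). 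The one step you leave partially open --- converting the second moment of the centered angle distribution into a diamond-norm bound --- does go through: writing $e^{-i\delta T/2}\rho\,e^{i\delta T/2}-\rho=\sin^2(\delta/2)\,(T\rho T-\rho)-\tfrac{i}{2}\sin(\delta)\,[T,\rho]$ and averaging, the commutator term vanishes to leading order because the mean is zero, leaving a residual superoperator of norm $O(\mathrm{Var}(\Theta))=O(\epsilon)$, which is what Lemma~12 of Ref.~\cite{aharonov1998quantum} then propagates additively through the circuit. Your closing concern about realizability is also easily dispatched: the per-step angle in the measurement basis $O_k(g_k,\theta_k)$ of Lemma~\ref{lemma:local meas} is a free parameter, so nothing prevents choosing it as $\theta_k/(N\sigma_k(g_k))$, provided $\sigma_k(g_k)$ is known. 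What the paper's citation buys is brevity; what your argument buys is an explicit, checkable derivation of where the factor $(1-\sigma^2)/\sigma^2$ comes from.
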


\begin{proof}
    We refer to Appendix B of~\cite{Raussendorf2023measurementbased} where it is proven that a sequence of $N$ logical CPTP maps $\mathcal{M}_k(g_k, \theta_k/N)$ approximate the unitary $M_{\log,k}(g_k, \theta_k)$ to error $\epsilon$.
\end{proof}

\begin{proof}[Proof of Theorem \ref{thm:power}]
With Lemma \ref{lemma:maps} we showed that the expectation values of the logical observables $T(g)$ after conjugation by $M_{k_f} ... M_{k_1}$ are equal to their expectation values after applying the logical CPTP maps $\mathcal{M}_{k_f} ... \mathcal{M}_{k_1}$. With Lemma \ref{lemma:local meas} we showed that these expectation values can be inferred by local measurements. With Lemma \ref{lemma:efficiency} we established that a sequence of $N$ logical CPTP maps with rotation angle $\theta/N$ can approximate a logical unitary $M_{\log,k}(g_k,\theta_k)$ to arbitrary precision as long as the corresponding order parameter $\sigma_k(g_k)$ is non-zero. 

The logical unitaries that are available for MBQC are rotations about axes defined by the logical observables $T(\mathcal{G})$. These logical observables derive their commutation relations from $V_R(\mathcal{G})$, thus the Lie algebra they generate under commutators and linear combinations is isomorphic to the Lie algebra generated by $V_R(\mathcal{G})$. The same statement holds at the level of Lie groups after applying the exponential map. This establishes item \ref{thm_item:logi} of Theorem \ref{thm:power}. To establish item \ref{thm_item:init}, we note that the initial state of the logical observables is given by Lemma \ref{lemma:logi_init}, and Lemma \ref{lemma:local meas} tells us that we can read out $T(H)$ by local measurement. 
\end{proof}

\section{Details of numerical simulations}
\label{sec:numerics}
In this section, we summarize the numerical method used to generate Fig. \ref{fig:dmrg results}. We use White's \textit{Density Matrix Renormalization Group} (DMRG) \cite{whiteDensityMatrixFormulation1992, schollwockDensitymatrixRenormalizationGroup2011}, a classical numerical technique that provides a variational representation of a ground state $\ket{\psi}$. This ground state is expressed as a \textit{Matrix Product State} (MPS), which is structured as follows \cite{fannesFinitelyCorrelatedPure1994, verstraeteMatrixProductStates2006}:
\begin{equation}
\ket{\psi} = \sum_{\substack{\sigma_1 \dots \sigma_N \\ \alpha_1, \dots, \alpha_{N-1}}} T^{\sigma_1}_{\alpha_0, \alpha_1} \cdots T^{\sigma_i}_{\alpha_{i-1}, \alpha_i} \cdots T^{\sigma_N}_{\alpha_{N-1}, \alpha_N} \ket{\sigma_1 \dots \sigma_N},
\end{equation}
where each $T^{\sigma_i}_{\alpha_{i-1}, \alpha_i}$ is a rank-3 tensor. The indices $\alpha_{i-1}$ and $\alpha_i$ span the \textit{bond dimension} $\chi_i$ at site $i$, with the maximum bond dimension denoted $\chi_{\text{max}}$, which controls the accuracy of the MPS. Open boundary conditions are imposed by setting $\chi_0 = \chi_N = 1$.

DMRG is a powerful algorithm that optimizes the variational parameters of the MPS, specifically the rank-3 tensors $T^{\sigma_i}_{\alpha_{i-1}, \alpha_i}$. The algorithm proceeds through a process of sequential quadratic variational optimizations:
\begin{equation}
E_{gs}=\min_{T^{\sigma_i}_{\alpha_{i-1}, \alpha_i}} \langle \psi | \hat{H} | \psi \rangle,
\end{equation}
where the optimization is carried out for each site $i$ in the system and continues until convergence. 

In our simulations, we begin by expressing the Hamiltonian $\hat{H}_{\mathrm{MBQC}}$ as a matrix product operator
\begin{equation}
\hat{H}_{\mathrm{MBQC}} = \sum_{\substack{\sigma_1 \dots \sigma_N \\ \sigma'_1 \dots \sigma'_N}} M^{\sigma_1 \sigma'_1} \cdots M^{\sigma_i \sigma'_i} \cdots M^{\sigma_N \sigma'_N} \ket{\sigma_1 \dots \sigma_N} \bra{\sigma'_1 \dots \sigma'_N},
\end{equation}
where each tensor $M^{\sigma_i \sigma'_i}$ is a $D \times D$ matrix, with $D$ representing the bond dimension between sites $i$ and $i+1$. The states $\ket{\sigma_i}$ and $\ket{\sigma'_i}$ refer to the local basis states at site $i$. To extend the MPS algorithms to two dimensions, one can label the sites of the lattice in a snake-like pattern that spans all the sites vertically or horizontally. In this work, we use the pattern depicted in Fig.~\ref{fig:2D MPS}, in which the direction of the arrow shows the labeling order. By using this convention, one can write the MPO representation of the Hamiltonian or any other observable.

\begin{figure}
    \centering
    \includegraphics[width=0.45\columnwidth]{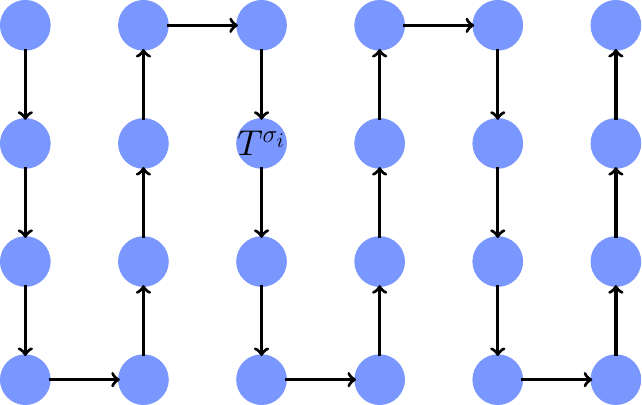}
    \caption{Mapping a two-dimensional lattice to a matrix product state form by labeling each site in a snake-like path through the lattice and representing each site with its corresponding rank-3 tensor.}
    \label{fig:2D MPS}
\end{figure}

To find the MPO representation of the Hamiltonian in the form Eq.~\ref{eq:tch_mbqc}, we use a matrix product diagram based on a finite state machine \cite{crosswhite2008, Keller2015MPO}. Depending on the structure of the Hamiltonian, different approaches are required to determine the MPO representation. When the Hamiltonian follows a regular structure, such as nearest-neighbor interactions, the MPO typically shows uniformity throughout the system. However, for Hamiltonians with arbitrary structures, a matrix product diagram-based representation is necessary~\cite{crosswhite2008, Keller2015MPO, mcculloch_density-matrix_2007, Ren2020}.

The parameters of our DMRG simulation are as follows: maximum bond dimension 200, energy convergence threshold $10^{-5}$, negligible singular value cutoff $10^{-32}$, and a maximum of 20 sweeps. As $\alpha$ becomes much larger than one, the ground state becomes exponentially close to degeneracy because of the lack of magnetic fields at the left and right boundaries. Therefore, instead of setting these fields to zero, we set them to $10^{-3}$, which is always at least an order of magnitude smaller than any other term in the Hamiltonian. 
The raw data for Fig. \ref{fig:dmrg results} and code to reproduce our results is available from the authors upon reasonable request.

\end{document}